\newtheorem{theorem}{Theorem}
\newtheorem{corollary}[theorem]{Corollary}
\newtheorem{definition}[theorem]{Definition}
\newtheorem{lemma}[theorem]{Lemma}
\newtheorem{proposition}[theorem]{Proposition}
\newtheorem{remark}[theorem]{Remark}
\def\canBasis{\varepsilon}
\def\idemProd{\Gamma}
\title {\textsc{Multicomplex Ideals, Modules and Hilbert Spaces}}
\author{Derek Courchesne$^1$ $\quad$ and $\quad$ Sébastien Tremblay$^2$}
\date{December 5, 2024}
\newcommand{\ket}[1]{| #1 \rangle}
\newcommand{\bra}[1]{\langle #1 |}
\newcommand{\braket}[2]{\langle #1 | #2 \rangle}
\begin{document}

%\nocite* %Include everything in the bibliography

\maketitle

\begin{abstract}
In this article we study some algebraic aspects of multicomplex numbers $\mathbb M_n$. For $n\geq 2$ a canonical representation is defined in terms of the multiplication of $n-1$ idempotent elements. This representation facilitates computations in this algebra and makes it possible to introduce a generalized conjugacy $\Lambda_n$, i.e. a composition of the $n$ multicomplex conjugates $\Lambda_n:=\dagger_1\cdots \dagger_n$, as well as a multicomplex norm. The ideals of the ring of multicomplex numbers are then studied in details, free $\mathbb M_n$-modules and their linear operators are considered and, finally, we develop Hilbert spaces on the multicomplex algebra.
\end{abstract}

\noindent {\bf Keywords.} Multicomplex numbers, Commutative ring, Multicomplex ideals, Free modules, Hilbert spaces

\section{Introduction}
The pioneering work of Corrado Segre in 1892 introduced an infinite set of algebras, termed bicomplex numbers $\mathbb M_2$, tricomplex numbers $\mathbb M_3$, and so forth, extending the realm of complex numbers $\mathbb C \sim \mathbb M_1$ to higher dimensions. About a century later, Price wrote his book {\em An introduction to Multicomplex Spaces and Functions} \cite{GBPrice}, the first comprehensive work on the subjet.  More recently Luna-Elizarrara's {\em et al.} published a comprehensive study \cite{LunaElizarrarásShapiroStruppaVajiac2015} of the analysis and geometry of bicomplex numbers.

Bicomplex numbers and their extensions, the multicomplex numbers, have now seen many applications across Science. In mathematical analysis, many fundamental results were extended to the bicomplex setting: see \cite{Elgourari2021, Hammam2024, Kravchenko_2008, Kumar2022, FiniteHilbert} for extensions of well-known results from operator theory and functional analysis; see \cite{Campos_Kravchenko_2012, Campos_Kravchenko_Méndez_2012, LunaElizarrarásShapiroStruppaVajiac2015, LunaElizarrarásShapiroStruppaVajiac2013} for extensions of complex analysis to the bicomplex numbers; and \cite{StruppaVajiac2014} for the extension of differential equations to the multicomplex settings. In physics, the Schr\"odinger equation and the theory of quantum physics and general relativity were extended to the bicomplex or hyperbolic settings \cite{BCHarmonic, BCCoulomb, BCQM1, BCQM2} and the hyperbolic numbers \cite{Schafer2014, Ulrych2005, Ulrych2010}, a subalgebra of the bicomplex numbers. Following the work \cite{BCQM1} of D. Rochon and S. Tremblay on the bicomplex Sch\"odinger equation, the authors of \cite{Millwater2014} generalized the linear and nonlinear Schr\"odinger equation to the multicomplex settings.

The work \cite{BCQM1, BCQM2} of D. Rochon and S. Tremblay relies heavily on algebraic properties of the bicomplex numbers that were studied in details in \cite{BCHyper}. These algebraic properties were not as extensively studied in the multicomplex settings as they were in \cite{BCHyper}, even in the previous cited paper \cite{Millwater2014} on the multicomplex linear and nonlinear Schr\"odinger. The goal of this paper is to fill in the gaps in the algebraic properties of the multicomplex numbers.

Our objectives are three-fold. Firstly, we investigate the abstract algebraic structure of the multicomplex numbers. More precisely, the ideals and the quotient rings of the multicomplex numbers will be studied as well as the $\mathbb M_n$-modules. Secondly, we introduce a canonical idempotent representation written in terms of the $n$ multicomplex conjugates and show how this representation is more natural in the treatment of the algebraic operations with the multicomplex numbers. At the same time, we use this idempotent representation
to introduce a useful ring-valued norm on the multicomplex numbers. Thirdly, we introduce the definition of a multicomplex Hilbert space to provide the mathematical background for the different applications of the theory, such as in mathematical analysis, physics, and applied sciences.

\section{Multicomplex numbers algebra}
\label{section : multicomplexDef}

\subsection{Definition}

The $n$th multicomplex numbers space $\mathbb{M}_n$ is defined by the space obtained after a total of $n \in \mathbb{N}$ successive complexifications of the reals, each time introducing a new imaginary unit $\mathrm{i}_k$ such that $\mathrm{i}^2_k = -1$:
\begin{equation}
    \label{eq : definition : defMn}
    \mathbb{M}_n := \{ \eta_1 + \mathrm{i}_n \eta_2 \mid \eta_1, \eta_2 \in \mathbb{M}_{n-1} \}, \qquad \mathbb{M}_0 := \mathbb{R}.
\end{equation}
Each element of the set $\{ \mathrm{i}_1,\ldots,\mathrm{i}_n \}$ is called a \textit{principal unit}, and any multiplication of distinct elements from this set is called a \textit{composite unit} (e.g. $\mathrm{i}_2 \mathrm{i}_4 \mathrm{i}_9$). To get the whole structure of the multicomplex numbers we need to add the following properties:
\begin{enumerate}
    \item $(\mathrm{i}_j \mathrm{i}_k) \mathrm{i}_l = \mathrm{i}_j (\mathrm{i}_k \mathrm{i}_l)$, \hspace{3mm} $\forall j,k,l = 1,\ldots,n$ \hfill (associativity of units)
    \item $\mathrm{i}_j \mathrm{i}_k = \mathrm{i}_k \mathrm{i}_j$, \hspace{3mm} $\forall j,k = 1,\ldots,n$. \hfill (commutativity of units)
\end{enumerate}
It is well known from this construction that we get a commutative, unitary algebra over both $\mathbb{R}$ and $\mathbb{C}$ (considering $\mathrm{i} = \mathrm{i}_1$) with zero divisors when the set $\mathbb{M}_n$ is equipped with term by term addition and multiplication defined in the usual way \cite{GBPrice}. Also, important subspaces trivially isomorphic to the complex numbers are given by $\mathbb{C}(\mathrm{i}_k) = \{ a + b \mathrm{i}_k \mid a,b \in \mathbb{R} \}$. 

The \textit{standard representation} of a multicomplex number $\eta\in \mathbb M_n$ for the first values of $n$ is given by the following expressions

\begin{equation}
\label{standardrep}
\begin{array}{lrcl}
\eta \in \mathbb M_0\simeq \mathbb R: &\eta &=& x_0,  \\*[2ex]
    \eta \in \mathbb M_1\simeq \mathbb C: & \eta &=& x_0 + x_1 \mathrm{i}_1, \\*[2ex]
    \eta \in \mathbb M_2: & \eta &=& x_0 + x_1 \mathrm{i}_1 + x_2 \mathrm{i}_2 + x_{12} \mathrm{i}_1 \mathrm{i}_2, \\*[2ex]
    \eta \in \mathbb M_3: &\eta &=& x_0 + x_1 \mathrm{i}_1 + x_2 \mathrm{i}_2 +  x_3 \mathrm{i}_3 + x_{12} \mathrm{i}_1 \mathrm{i}_2  + x_{13} \mathrm{i}_1 \mathrm{i}_3 \\*[2ex] &&& + \ x_{23} \mathrm{i}_2 \mathrm{i}_3+\ x_{123} \mathrm{i}_1 \mathrm{i}_2 \mathrm{i}_3,\\ & \vdots
\end{array}    
\end{equation}
where the coefficients $x_k$ are real and $\mathbb M_2,\mathbb M_3$ are the bicomplex and the tricomplex algebras, respectively \cite{GBPrice}. One can write the general expression of $\eta\in \mathbb M_n$ in terms of the power set $\mathcal{P} \big(\{ 1,\ldots,n \}\big)$ :
\[ \eta = \sum_{\mathcal A \in \mathcal{P}_n } x_{\mathcal A} \mathrm{i}_{\mathcal A}, \qquad \mathcal{P}_n:=\mathcal{P} \big(\{ 1,\ldots,n \}\big),\qquad x_{\mathcal A}\in \mathbb R, \]
where the empty set in $\mathcal{P}_n$ is associated with the index zero and $\mathrm i_0:=1$, the singleton $\{k\}\in \mathcal{P}_n$ is associated with the index $k$, the set $\{k,l\}\in \mathcal{P}_n$ is associated with the indices $kl$ such that $\mathrm{i}_{kl} := \mathrm{i}_k \mathrm{i}_l$, etc.

\subsection{Conjugation and composition}

A complex-like conjugation $\dagger_k$ is defined by
\begin{equation}
    \dagger_0:=\mathrm{id}\qquad \text{and}\qquad \dagger_k : \mathrm{i}_k \rightarrow - \mathrm{i}_k, \hspace{3mm} k=1,\ldots,n
\end{equation}
where $\dagger_0$ is the identity map. We combine these conjugations with the composition operation denoted $\circ$ and defined as follow :
\begin{equation}
    \eta^{\dagger_j \circ \dagger_k} := (\eta^{\dagger_j})^{\dagger_k}, \qquad \forall \eta \in \mathbb{M}_n \quad\text{ and }\quad 0\leq j,k\leq n.
\end{equation}
The composition is associative and commutative and the set $\ddagger$ of all conjugates with the composition operation is a commutative group $(\ddagger,\ \circ)$ of order $2^n$, where each element is its own inverse and the generators are $\dagger_0,\dagger_1,\ldots,\dagger_n$. This group is isomorphic to $(\mathbb{Z}_2^n,\ +_2)$, see \cite{Garant-Pelletier}. It is easy to show that any composition of conjugates is distributive over addition and multiplication for any multicomplex number. Hence, for $1\leq j_1,j_2,\ldots,j_s\leq n$
the conjugate $\dagger_{j_1}\dagger_{j_2}\cdots \dagger_{j_s}$ applied to a multicomplex number $\eta\in \mathbb M_n$ changes the sign of the principal units $\mathrm i_{j_1},\mathrm i_{j_2},\ldots \mathrm i_{j_s}$ in $\eta$. For instance, taking the  composition of conjugates $\eta^{\dagger_1\dagger_3}$ from an element $\eta \in \mathbb M_3$ in (\ref{standardrep}), we obtain
$$
\eta^{\dagger_1\dagger_3}=x_0 - x_1 \mathrm{i}_1 + x_2 \mathrm{i}_2 -  x_3 \mathrm{i}_3 - x_{12} \mathrm{i}_1 \mathrm{i}_2  + x_{13} \mathrm{i}_1 \mathrm{i}_3 - x_{23} \mathrm{i}_2 \mathrm{i}_3 + x_{123} \mathrm{i}_1 \mathrm{i}_2 \mathrm{i}_3.
$$

\subsection{Standard idempotent representation}

When $n \geq 2$, it is possible to take advantage of the presence of zero divisors and idempotent elements in the multicomplex algebra to get a basis of $\mathbb{M}_n$ over $\mathbb{M}_{n-1}$ such that addition and multiplication between multicomplex numbers are done componentwise. Take
\begin{equation}
\label{gamma}
    \gamma_n := \frac{1}{2}(1 + \mathrm{i}_{n-1} \mathrm{i}_n) \hspace{3mm} \text{ and } \hspace{3mm} \gamma_n' := \frac{1}{2}(1 - \mathrm{i}_{n-1} \mathrm{i}_n).
\end{equation}
Both $\gamma_n$ and $\gamma_n'$ are zero divisors and idempotent elements of $\mathbb{M}_n$, i.e.
\begin{equation}
    \label{eq : idempotent : properties}
         \gamma_n \cdot \gamma_n'  = 0,\qquad \gamma_n^2  = \gamma_n,\qquad (\gamma_n')^2  = \gamma_n'.
\end{equation}
Moreover, we also have the following relations for the multiplicative unit $1$ and the unit $\mathrm{i}_n$ expressed in terms of $\gamma_n$ and $\gamma_n'$:
\begin{equation}
    \label{eq : idempotent : units}
    \begin{aligned}
         1 &= \gamma_n + \gamma_n',\\
         \mathrm{i}_{n} &= -\mathrm{i}_{n-1}(\gamma_n - \gamma_n').
    \end{aligned}
\end{equation}
Now let $\eta \in \mathbb{M}_n$ be a multicomplex number with components $\eta_1, \eta_2 \in \mathbb{M}_{n-1}$. Then by (\ref{eq : idempotent : units}):
\[ \begin{aligned}
    \eta & = \eta_1 + \eta_2 \mathrm{i}_n= \eta_1 (\gamma_n + \gamma_n') - \eta_2 \mathrm{i}_{n-1}(\gamma_n - \gamma_n')\\*[2ex]
    & = (\eta_1 - \eta_2 \mathrm{i}_{n-1}) \gamma_n + (\eta_1 + \eta_2 \mathrm{i}_{n-1}) \gamma_n'.
\end{aligned} \]
Hence $\{ \gamma_n,\ \gamma_n' \}$ forms a basis of $\mathbb{M}_n$ over $\mathbb{M}_{n-1}$. Let $\eta,\zeta \in \mathbb{M}_n$ with corresponding components $\eta_1, \eta_2$ and $\zeta_1, \zeta_2$ in $\mathbb{M}_{n-1}$ relative to that new basis. Then from (\ref{eq : idempotent : properties}), the addition and multiplication operations in $\mathbb{M}_n$ are done componentwise:
\begin{equation}
    \begin{aligned}
    \eta + \zeta &= (\eta_1 \gamma_n + \eta_2 \gamma_n') + (\zeta_1 \gamma_n + \zeta_2 \gamma_n')=(\eta_1+\zeta_1)\gamma_n+(\eta_2+\zeta_2)\gamma_n',\\*[2ex]
    \eta \cdot \zeta &= (\eta_1 \gamma_n + \eta_2 \gamma_n') \cdot (\zeta_1 \gamma_n + \zeta_2 \gamma_n') = \eta_1 \zeta_1 \gamma_n + \eta_2 \zeta_2 \gamma_n'.
    \end{aligned}
\end{equation}
This representation of multicomplex numbers is called the \textit{standard idempotent representation}.

\subsection{Canonical idempotent representation}
%In this section we will show that a natural idempotent representation on the complex $\mathbb C(\mathrm i_1)$ can be obtained by grouping together all the possible conjugates of the product of idempotent elements $\gamma_n$ in (\ref{gamma}). For instance, in the case $n=3$ we will show that any elements $\eta\in \mathbb M_3$ can be written as
$$
%\eta=z_0 \Gamma_3+z_1 \Gamma_3^{\dagger_1}+z_2\Gamma_3^{\dagger_2}+z_4\Gamma_3^{\dagger_3},\qquad \Gamma_3:=\gamma_2\gamma_3,\qquad z_0,\ldots,z_3\in \mathbb C(\mathrm i_1).
$$
Let us introduce the element $\idemProd_n\in \mathbb M_n$ defined as the product of all consecutive idempotent numbers, i.e. 
\begin{equation}
    \idemProd_n := \gamma_2\gamma_3\cdots \gamma_n,\qquad n\geq 2.
\end{equation}
By definition, we have the recursive relation $\idemProd_{n+1} = \idemProd_n \gamma_{n+1}$. We define $\idemProd_n^{\ddagger}$ as the set of all  compositions of conjugates for the element $\idemProd_n$, i.e.
$$
\idemProd_n^{\ddagger}:=\big\{\idemProd_n^{\dagger_{j_1}\dagger_{j_2}\cdots \dagger_{j_k}}\ |\ 0\leq j_1<j_2<\cdots < j_k\leq n,\quad k=1,\ldots,n\big\}.
$$

\begin{proposition}
\label{Prop1}
For $n \geq 2$ the set $\idemProd_n^{\ddagger}$ has $2^{n-1}$ distinct elements. If these elements are represented by $\canBasis_k$ for $k = 1,\ldots,2^{n-1}$ we have
\[ \idemProd_{n+1}^{\ddagger} = \{ \canBasis_k \gamma_{n+1} \mid k=1,\ldots,2^{n-1} \} \cup \{ \canBasis_k \gamma_{n+1}' \mid k=1,\ldots,2^{n-1} \}. \]
\end{proposition}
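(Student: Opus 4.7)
The plan is to translate the proposition into a question in $\mathbb{F}_2$-linear algebra. Every element of $\ddagger$ corresponds to a subset $J \subseteq \{1,\ldots,n\}$ via $\dagger_J := \prod_{j \in J} \dagger_j$ (with $\dagger_\emptyset = \mathrm{id}$), so $|\ddagger| = 2^n$. Since conjugations distribute over products, I would begin by computing
\[ \idemProd_n^{\dagger_J} = \gamma_2^{\dagger_J}\, \gamma_3^{\dagger_J} \cdots \gamma_n^{\dagger_J}. \]
A direct inspection of $\gamma_k = \tfrac{1}{2}(1 + \mathrm{i}_{k-1}\mathrm{i}_k)$ shows that each of $\dagger_{k-1}$ and $\dagger_k$ sends $\gamma_k$ to $\gamma_k'$, while every other $\dagger_j$ fixes $\gamma_k$. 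Hence $\gamma_k^{\dagger_J}$ equals $\gamma_k$ or $\gamma_k'$ according to the parity of $|J \cap \{k-1, k\}|$.

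Encoding $J$ as $a = (a_1,\ldots,a_n) \in \mathbb{F}_2^n$ and setting $b_k := a_{k-1} + a_k \pmod{2}$ for $k = 2, \ldots, n$, the vector $(b_2,\ldots,b_n)$ completely determines $\idemProd_n^{\dagger_J}$. The $\mathbb{F}_2$-linear map $a \mapsto (b_2,\ldots,b_n)$ has kernel $\{(0,\ldots,0),(1,\ldots,1)\}$, so it is surjective onto $\mathbb{F}_2^{n-1}$ and exactly $2$-to-$1$. It remains to check that distinct patterns $b$ yield distinct products $P_b := \tilde{\gamma}_2 \cdots \tilde{\gamma}_n$ with $\tilde{\gamma}_k \in \{\gamma_k,\gamma_k'\}$ selected by $b$. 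This follows from orthogonality of the $\gamma_k, \gamma_k'$: one has $P_b P_{b'} = 0$ whenever $b \neq b'$ (some factor becomes $\gamma_k \gamma_k' = 0$) and $P_b^2 = P_b$, so an equality $P_b = P_{b'}$ with $b \neq b'$ would force $P_b = 0$. This is ruled out by noting that $\idemProd_n$ has nonzero real part $2^{-(n-1)}$ and that every conjugate preserves the real part. Putting the pieces together yields $|\idemProd_n^{\ddagger}| = 2^{n-1}$.

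For the recursion, write $\idemProd_{n+1} = \idemProd_n \gamma_{n+1}$ and take any $J \subseteq \{1,\ldots,n+1\}$. By distributivity $\idemProd_{n+1}^{\dagger_J} = \idemProd_n^{\dagger_J}\, \gamma_{n+1}^{\dagger_J}$; the first factor depends only on $J \cap \{1,\ldots,n\}$ and equals some $\canBasis_k \in \idemProd_n^{\ddagger}$, while the second equals $\gamma_{n+1}$ or $\gamma_{n+1}'$ according to the parity of $|J \cap \{n, n+1\}|$. Toggling membership of $n+1$ in $J$ flips the second factor without affecting $\canBasis_k$, so both $\canBasis_k \gamma_{n+1}$ and $\canBasis_k \gamma_{n+1}'$ appear. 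The two families are disjoint (right-multiplication by $\gamma_{n+1}$ fixes elements of the first and annihilates those of the second), and distinctness within each family is a further application of the orthogonality argument, giving $2 \cdot 2^{n-1} = 2^n$ distinct elements — exactly matching the count for $|\idemProd_{n+1}^{\ddagger}|$.

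The step I expect to demand the most care is the distinctness claim: aligning the parametrization by conjugates $\dagger_J$ with the parametrization by sign-patterns $b$, and then confirming that distinct patterns really produce distinct products. The $\mathbb{F}_2$ kernel count is immediate, but the distinctness relies on orthogonality combined with the nontriviality of $\idemProd_n$ itself, which is the one slightly subtle ingredient.
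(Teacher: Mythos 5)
Your proof is correct, and it reaches the result by a genuinely different route from the paper's. The paper argues by induction: it computes the base case $\idemProd_2^{\ddagger}=\{\gamma_2,\gamma_2'\}$ explicitly, uses $\idemProd_{n+1}=\idemProd_n\gamma_{n+1}$ and distributivity to show every element of $\idemProd_{n+1}^{\ddagger}$ has the form $\canBasis_k\gamma_{n+1}$ or $\canBasis_k\gamma_{n+1}'$, and then only asserts (``we verify easily'') that these $2\cdot 2^{n-1}$ elements are pairwise distinct. You instead give a closed-form parametrization: $\gamma_k^{\dagger_J}$ is determined by the parity of $|J\cap\{k-1,k\}|$, so $\idemProd_n^{\dagger_J}$ is governed by the $\mathbb{F}_2$-linear map $a\mapsto(a_1+a_2,\ldots,a_{n-1}+a_n)$, whose kernel $\{0,(1,\ldots,1)\}$ gives the exact $2$-to-$1$ count, and distinctness of the resulting products follows from orthogonality of the idempotent factors ($P_b=P_{b'}$ with $b\neq b'$ would force $P_b=P_bP_{b'}=0$) together with the nonvanishing of the real part of $\idemProd_n$ under conjugation. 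What your approach buys is an explicit description of which compositions of conjugates coincide, and, more importantly, an actual proof of the distinctness step that the paper leaves to the reader; your argument is precisely the missing justification, and it transfers verbatim to the two families $\canBasis_k\gamma_{n+1}$ and $\canBasis_k\gamma_{n+1}'$ in the recursion, whose derivation otherwise matches the paper's inductive step. The one sub-claim you state without proof, that the real part of $\idemProd_n$ equals $2^{-(n-1)}$, is true but deserves a line: expanding $\prod_{k=2}^{n}(1+\mathrm{i}_{k-1}\mathrm{i}_k)$, a product $\prod_{k\in S}\mathrm{i}_{k-1}\mathrm{i}_k$ over a nonempty $S$ is never real, since the pairs $\{k-1,k\}$ are edges of a path and no nonempty edge subset of a tree has all vertex degrees even, so only $S=\varnothing$ contributes to the real part.
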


\begin{proof}
    We proceed by induction over $n$. In the case $n=2$, we have $\idemProd_2 = \gamma_2$, and the values obtained when applying each composition of conjugates are
    \[ \begin{array}{c c c c}
         \idemProd_2^{\dagger_0} = \gamma_2, & \idemProd_2^{\dagger_1} = \gamma_2', & \idemProd_2^{\dagger_2} = \gamma_2', & \idemProd_2^{\dagger_1\dagger_2} = \gamma_2,
    \end{array} \]
    such that $\idemProd_2^{\ddagger}$ contains two distinct elements $\{\canBasis_1=\gamma_2, \canBasis_2=\gamma_2'\}$. Moreover,
    \begin{align*}
    \idemProd_3^{\ddagger}&:=\{\idemProd_3,\idemProd_3^{\dagger_1},\idemProd_3^{\dagger_2},\idemProd_3^{\dagger_3},\idemProd_3^{\dagger_1\dagger_2},\idemProd_3^{\dagger_1\dagger_3},\idemProd_3^{\dagger_2\dagger_3},\idemProd_3^{\dagger_1\dagger_2\dagger_3}\} \\
    &= \{\gamma_2\gamma_3,(\gamma_2\gamma_3)^{\dagger_1},(\gamma_2\gamma_3)^{\dagger_2},(\gamma_2\gamma_3)^{\dagger_3},(\gamma_2\gamma_3)^{\dagger_1\dagger_2},(\gamma_2\gamma_3)^{\dagger_1\dagger_3},(\gamma_2\gamma_3)^{\dagger_2\dagger_3},(\gamma_2\gamma_3)^{\dagger_1\dagger_2\dagger_3}\} \\                            
    &=\{\gamma_2\gamma_3,\gamma_2'\gamma_3,\gamma_2'\gamma_3',\gamma_2\gamma_3',\gamma_2\gamma_3',\gamma_2'\gamma_3',\gamma_2'\gamma_3,\gamma_2\gamma_3\} \\
    &=\{\canBasis_1\gamma_3,\canBasis_2\gamma_3\}\cup \{\canBasis_1\gamma_3',\canBasis_2\gamma_3'\}.
    \end{align*} 
    Assuming the statement to be true in the case $n\geq 2$, we set $\idemProd_n^{\ddagger}=\{ \canBasis_k \}_{k=1}^{2^{n-1}}$. Since $\idemProd_{n+1} = \idemProd_{n} \gamma_{n+1}$ and any composition of conjugates is distributive over the product, then every elements of $\idemProd_{n+1}^{\ddagger}$ can be written as
    \begin{align*}
    \idemProd_{n+1}^{\dagger_{j_1}\cdots \dagger_{j_s}} &=(\idemProd_{n}\gamma_{n+1})^{\dagger_{j_1}\cdots \dagger_{j_s}}\quad \text{for } \quad 0\leq j_1<j_2<\cdots<j_s\leq n+1 \\
    &= (\idemProd_{n})^{\dagger_{j_1}\cdots \dagger_{j_s}}(\gamma_{n+1})^{\dagger_{j_1}\cdots \dagger_{j_s}}\\
    &=\canBasis_k\gamma_{n+1} \quad \mathrm{ or } \quad \canBasis_k\gamma_{n+1}'\quad \text{for } 
    \quad 1\leq k\leq 2^{n-1}.
    \end{align*}
     We verify easily that the pair of multicomplex numbers $\canBasis_k\gamma_{n+1}$ and $\canBasis_l\gamma_{n+1}$ as well as $\canBasis_k\gamma_{n+1}'$ and $\canBasis_l\gamma_{n+1}'$ are distinct for $k,l=1,\ldots, 2^{n-1}$ and $k\neq l$. Moreover,  $\canBasis_k\gamma_{n+1}$ and $\canBasis_l\gamma_{n+1}'$ are also distinct for $k,l=1,\ldots, 2^{n-1}$. Therefore, we can conclude that $\idemProd_{n+1}^{\ddagger}$ has $2 \cdot 2^{n-1}$ distinct elements. 
\end{proof}
\begin{remark} 
\label{changement variables}
 We can rename the elements of $\idemProd_{n+1}^{\ddagger}$ in Proposition~\ref{Prop1} such that this set can be rewritten in the form $\{\tilde{\canBasis}_k\}_{k=1}^{2^n}$. Indeed, for $k=1,\ldots, 2^{n-1}$ we set
 $$
  \tilde{\canBasis}_k:=\canBasis_k\gamma_{n+1}\qquad \text{and}\qquad \tilde{\canBasis}_{2^{n-1}+k}:=\canBasis_k\gamma_{n+1}',
 $$
 where $\{\canBasis_k\}_{k=1}^{2^{n-1}}$ is the set $\idemProd_n^{\ddagger}$.
 
\end{remark}

\begin{proposition}
    \label{prop : canonical : properties}
    For $n \geq 2$ the elements $\{ \canBasis_k \}_{k=1}^{2^{n-1}}$ of $\idemProd_n^{\ddagger}$ have the following properties:
    \[ 
         (i)\ \canBasis_k \canBasis_l = \delta_{kl}\canBasis_l \qquad  (ii)\  \canBasis_k^{\Lambda_n} = \canBasis_k  \qquad (iii)\  \sum_{j=1}^{2^{n-1}} \canBasis_j = 1, 
     \]
    for $1 \leq k,l \leq 2^{n-1}$, where the symbol $\delta_{kl}$ is the usual Kronecker delta and $\Lambda_n:=\dagger_1\dagger_2\cdots \dagger_n$ is the composition of all single conjugates in $\mathbb{M}_n$.
\end{proposition}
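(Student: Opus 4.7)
The plan is to proceed by induction on $n$, mirroring the structure of Proposition~\ref{Prop1} and using the relabeling introduced in Remark~\ref{changement variables}. The base case $n=2$ amounts to verifying the three properties for $\canBasis_1 = \gamma_2$ and $\canBasis_2 = \gamma_2'$; this is immediate from (\ref{eq : idempotent : properties}) and (\ref{eq : idempotent : units}) together with the observation that $\Lambda_2 = \dagger_1\dagger_2$ flips both signs in $\mathrm{i}_1\mathrm{i}_2$ at once and therefore fixes $\gamma_2$ and $\gamma_2'$. For the inductive step from $n$ to $n+1$, I would rely on Remark~\ref{changement variables} to write every $\tilde{\canBasis}_k \in \idemProd_{n+1}^{\ddagger}$ as $\canBasis_k \gamma_{n+1}$ or $\canBasis_k \gamma_{n+1}'$.

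Properties (i) and (iii) should then be essentially mechanical. For (i), the product $\tilde{\canBasis}_k \tilde{\canBasis}_l$ splits off a factor belonging to $\{\gamma_{n+1}^2,\, (\gamma_{n+1}')^2,\, \gamma_{n+1}\gamma_{n+1}'\}$, which by (\ref{eq : idempotent : properties}) equals $\gamma_{n+1}$, $\gamma_{n+1}'$ or $0$; combined with the inductive hypothesis $\canBasis_k\canBasis_l = \delta_{kl}\canBasis_l$, this reproduces the Kronecker pattern on the relabeled indices of Remark~\ref{changement variables}. For (iii), grouping the sum into its unprimed and primed halves factors it as $\bigl(\sum_{k=1}^{2^{n-1}}\canBasis_k\bigr)(\gamma_{n+1}+\gamma_{n+1}') = 1 \cdot 1 = 1$, by the inductive hypothesis and (\ref{eq : idempotent : units}).

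The point requiring the most care is (ii), because the ambient conjugation itself grows with $n$, namely $\Lambda_{n+1} = \Lambda_n\dagger_{n+1}$. The key observations I would isolate are that $\canBasis_k \in \mathbb{M}_n$ is untouched by $\dagger_{n+1}$, so $\canBasis_k^{\Lambda_{n+1}} = \canBasis_k^{\Lambda_n} = \canBasis_k$ by the inductive hypothesis, and that in $\gamma_{n+1} = \tfrac{1}{2}(1+\mathrm{i}_n\mathrm{i}_{n+1})$ only the pair $\dagger_n, \dagger_{n+1}$ inside $\Lambda_{n+1}$ acts nontrivially, with the two sign flips cancelling so that $\mathrm{i}_n\mathrm{i}_{n+1}$, and hence $\gamma_{n+1}$ itself, is preserved (and likewise for $\gamma_{n+1}'$). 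Distributivity of $\Lambda_{n+1}$ over the product $\canBasis_k \gamma_{n+1}$ (or $\canBasis_k \gamma_{n+1}'$) then closes the induction. The main obstacle, if any, is the bookkeeping of which single conjugates act on each $\gamma_j$; once one notices that the relevant conjugates come in the canceling pair $\dagger_{j-1}, \dagger_j$, the argument reduces to the recursive properties already established.
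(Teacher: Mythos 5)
Your proposal is correct and follows essentially the same route as the paper: induction on $n$ with the base case $\{\gamma_2,\gamma_2'\}$, the relabeling of Remark~\ref{changement variables}, componentwise use of (\ref{eq : idempotent : properties}) for (i) and (iii), and for (ii) the observation that $\canBasis_k^{\Lambda_{n+1}}=\canBasis_k^{\Lambda_n}$ while $\gamma_{n+1}^{\Lambda_{n+1}}=\gamma_{n+1}^{\dagger_n\dagger_{n+1}}=\gamma_{n+1}$. No gaps.
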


\begin{proof}
We proceed by induction over $n$. For $n=2$ we have from the proof of Proposition \ref{Prop1} that $\canBasis_1=\gamma_2$ and $\canBasis_2=\gamma_2'$, hence $\canBasis_k \canBasis_l = \delta_{kl}\canBasis_l$ for $k,l=1,2$. Moreover, $\canBasis_{1}^{\Lambda_2}=\gamma_2^{\dagger_1\dagger_2}=\gamma_2=\canBasis_1$, $\canBasis_{2}^{\Lambda_2}=\canBasis_2$ and $\canBasis_1+\canBasis_2=\gamma_2+\gamma_2'=1$.

Suppose now that all three properties of the proposition are satisfied for $n\geq 2$. Then from Proposition \ref{Prop1} and Remark \ref{changement variables} we have $\idemProd_{n+1}^{\ddagger}=\{\tilde{\canBasis}_k\}_{k=1}^{2^n}$ such that for $1\leq k,l\leq 2^{n-1}$ we obtain 
\begin{align*}
\tilde{\canBasis}_k\cdot \tilde{\canBasis}_l &=\canBasis_k\gamma_{n+1}\cdot \canBasis_l\gamma_{n+1}=\canBasis_k\canBasis_l\gamma_{n+1}=\delta_{kl}\canBasis_l\gamma_{n+1}=\delta_{kl}\tilde{\canBasis}_l,\\
\tilde{\canBasis}_{2^{n-1}+k}\cdot \tilde{\canBasis}_{2^{n-1}+l} &=\canBasis_k\gamma_{n+1}'\cdot \canBasis_l\gamma_{n+1}'=\canBasis_k\canBasis_l\gamma_{n+1}'=\delta_{kl}\canBasis_l\gamma_{n+1}'=\delta_{kl}\tilde{\canBasis}_{2^{n-1}+l}, \\
\tilde{\canBasis}_k\cdot \tilde{\canBasis}_{2^{n-1}+l}&=\canBasis_k\gamma_{n+1}\cdot \canBasis_l\gamma_{n+1}'=0,
\end{align*}
which proves ($i$). Moreover, we have 
$$
\tilde{\canBasis}_k^{\Lambda_{n+1}}=(\canBasis_k\gamma_{n+1})^{\Lambda_{n+1}}=\canBasis_k^{\Lambda_{n}}\gamma_{n+1}^{\Lambda_{n+1}}=\canBasis_k\gamma_{n+1}^{\Lambda_{n+1}}=\canBasis_k(\gamma_{n+1})^{\dagger_{n}\dagger_{n+1}}=\canBasis_k\gamma_{n+1}=\tilde{\canBasis}_k
$$
and $\tilde{\canBasis}_{2^{n-1}+k}^{\Lambda_{n+1}}=\tilde{\canBasis}_{2^{n-1}+k}$ by a similar calculation, which demonstrate ($ii$). Finally, we have
\begin{align*}
\sum_{j=1}^{2^{n}}\tilde{\canBasis}_j&=\sum_{k=1}^{2^{n-1}}\tilde{\canBasis}_k+\sum_{k=1}^{2^{n-1}}\tilde{\canBasis}_{2^{n-1}+k}=
\sum_{k=1}^{2^{n-1}} \canBasis_k \gamma_{n+1} + \sum_{k=1}^{2^{n-1}} \canBasis_k \gamma_{n+1}' \\
& =  (\gamma_{n+1} + \gamma_{n+1}') \sum_{k=1}^{2^{n-1}} \canBasis_k = 1
\end{align*}
which shows ($iii$).
\end{proof}
These properties assure us that the distinct elements $\{ \canBasis_k \}_{k=1}^{2^{n-1}}$ of $\idemProd_n^{\dagger}$ are linearly independent over $\mathbb{C}$. Let $\alpha_1,\ldots,\alpha_{2^{n-1}} \in \mathbb{C}$, then 
    $\sum_{k=1}^{2^{n-1}} \alpha_k \canBasis_k = 0$ iff $\alpha_j \canBasis_j=0$ for all $j$, i.e. $\alpha_j =0$.

\begin{proposition}
    The set $\idemProd_n^{\ddagger}$ is a basis of $\mathbb{M}_n$ over $\mathbb{C}$.
\end{proposition}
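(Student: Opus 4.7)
The plan is to combine the linear independence already noted in the paragraph immediately preceding the statement with a proof that $\idemProd_n^{\ddagger}$ spans $\mathbb{M}_n$ as a $\mathbb{C}$-vector space. Since linear independence is settled, the entire content of the proposition lies in the spanning property. I will establish spanning by induction on $n$, leveraging the recursive description of $\idemProd_{n+1}^{\ddagger}$ supplied by Proposition~\ref{Prop1} (via the relabeling in Remark~\ref{changement variables}) together with the standard idempotent decomposition $\mathbb{M}_{n+1} = \mathbb{M}_n \gamma_{n+1} \oplus \mathbb{M}_n \gamma_{n+1}'$ recalled earlier.

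For the base case $n=2$, the set $\idemProd_2^{\ddagger} = \{\gamma_2,\gamma_2'\}$ is exactly the standard idempotent basis of $\mathbb{M}_2$ over $\mathbb{M}_1 = \mathbb{C}(\mathrm{i}_1) \simeq \mathbb{C}$, so every $\eta \in \mathbb{M}_2$ writes uniquely as $z_1\gamma_2 + z_2\gamma_2'$ with $z_1,z_2 \in \mathbb{C}$. For the inductive step, assume $\idemProd_n^{\ddagger} = \{\canBasis_k\}_{k=1}^{2^{n-1}}$ is a $\mathbb{C}$-basis of $\mathbb{M}_n$. Given $\eta \in \mathbb{M}_{n+1}$, I would decompose $\eta = \alpha\gamma_{n+1} + \beta\gamma_{n+1}'$ with $\alpha,\beta \in \mathbb{M}_n$, expand each via the inductive hypothesis as $\alpha = \sum_{k=1}^{2^{n-1}} a_k\canBasis_k$ and $\beta = \sum_{k=1}^{2^{n-1}} b_k\canBasis_k$ with $a_k,b_k \in \mathbb{C}$, and then substitute to obtain
\[
\eta \;=\; \sum_{k=1}^{2^{n-1}} a_k\,\canBasis_k\gamma_{n+1} \;+\; \sum_{k=1}^{2^{n-1}} b_k\,\canBasis_k\gamma_{n+1}' \;=\; \sum_{k=1}^{2^{n-1}} a_k\,\tilde{\canBasis}_k \;+\; \sum_{k=1}^{2^{n-1}} b_k\,\tilde{\canBasis}_{2^{n-1}+k},
\]
in the notation of Remark~\ref{changement variables}, exhibiting $\eta$ as a $\mathbb{C}$-linear combination of $\idemProd_{n+1}^{\ddagger}$.

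I do not anticipate a significant obstacle; the argument amounts to aligning the cardinality count $|\idemProd_n^{\ddagger}| = 2^{n-1} = \dim_{\mathbb{C}}\mathbb{M}_n$ with the orthogonality and resolution-of-unity already established in Proposition~\ref{prop : canonical : properties}. The only point requiring care is that the standard idempotent decomposition must be taken at level $n{+}1$ (with respect to $\gamma_{n+1}$), so that the two components $\alpha,\beta$ lie in $\mathbb{M}_n$ and the inductive hypothesis applies verbatim. As an alternative one-line proof, a pure dimension count suffices: $\mathbb{M}_n$ has real dimension $2^n$ by the recursive definition, hence complex dimension $2^{n-1}$, which matches the cardinality of the linearly independent set $\idemProd_n^{\ddagger}$; but the inductive proof has the advantage of producing the coordinates explicitly.
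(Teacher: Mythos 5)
Your proposal is correct and follows essentially the same route as the paper: induction on $n$, with the base case $\idemProd_2^{\ddagger}=\{\gamma_2,\gamma_2'\}$ and the inductive step using the standard idempotent decomposition $\eta=\zeta_1\gamma_{n+1}+\zeta_2\gamma_{n+1}'$ together with the relabeling of Remark~\ref{changement variables}, while linear independence is taken from the paragraph preceding the statement. The extra dimension-count observation is a valid shortcut but not what the paper does.
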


\begin{proof}
    We proceed by induction over $n$. In the case $n=2$, we know the statement is true since $\idemProd_2^{\ddagger} = \{ \gamma_2,\gamma_2' \}$ is the basis for the standard idempotent representation of $\mathbb{M}_2$ over $\mathbb{M}_1 \simeq \mathbb{C}$.
    Suppose the statement is true for $n\geq 2$, i.e. any number 
    $\zeta \in \mathbb{M}_n$ can be written in the form $\zeta = \sum_{k=1}^{2^{n-1}} z_k {\canBasis_k},\ z_k \in \mathbb{C}.$ Here $\{ \canBasis_k \}_{k=1}^{2^{n-1}}$ are the distinct elements of $\idemProd_n^{\ddagger}$. Let $\eta \in \mathbb{M}_{n+1}$. From the standard idempotent representation we have
    \begin{equation}
        \label{canRep : etaStandardIdempRepEq}
        \eta = \zeta_1 \gamma_{n+1} + \zeta_2 \gamma_{n+1}', \qquad \zeta_1, \zeta_2 \in \mathbb{M}_n.
    \end{equation}
    The induction hypothesis implies that  
    $\zeta_1 = \sum_{k=1}^{2^{n-1}} \alpha_k \canBasis_k$ and $\zeta_2 = \sum_{k=1}^{2^{n-1}} \beta_k \canBasis_k$ for $\alpha_k,\beta_k\in \mathbb C$.
    By substituing in (\ref{canRep : etaStandardIdempRepEq}),
    \[ \eta = \sum_{k=1}^{2^{n-1}} \alpha_k \canBasis_k \gamma_{n+1} + \sum_{k=1}^{2^{n-1}} \beta_k \canBasis_k \gamma_{n+1}'=\sum_{k=1}^{2^{n-1}} \alpha_k \tilde{\canBasis}_k  + \sum_{k=1}^{2^{n-1}} \beta_k \tilde{\canBasis}_{2^{n-1}+k}, 
     \]
     where the change of variables of Remark~\ref{changement variables} is used.
     Hence, the set $\idemProd_{n+1}^{\ddagger}=\{\tilde{\canBasis}_k\}_{k=1}^{2^{n}}$ is a basis of $\mathbb{M}_{n+1}$ over $\mathbb{C}$.
\end{proof}

\begin{theorem}[Canonical idempotent representation] For $n\geq 2$, any multicomplex element $\eta\in \mathbb M_n$ can be represented by 
$$
\eta=\sum_{k=1}^{2^{n-1}} z_k \canBasis_{k},\qquad z_k\in \mathbb C(\mathrm i_i)
$$
and the $2^{n-1}$ idempotent elements $\canBasis_{k}$ satisfy Proposition \ref{prop : canonical : properties}.
\end{theorem}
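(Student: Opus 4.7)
The plan is essentially to observe that this theorem is a direct consolidation of the two preceding propositions and requires no new content beyond combining them. First I would invoke the proposition showing that $\idemProd_n^{\ddagger}$ is a basis of $\mathbb{M}_n$ over $\mathbb{C}$. That proposition already establishes the existence of complex coefficients $z_k \in \mathbb C$ (which coincides with $\mathbb C(\mathrm i_1)$ under the identification $\mathbb M_1 \simeq \mathbb C$) such that any $\eta \in \mathbb M_n$ can be written as $\eta = \sum_{k=1}^{2^{n-1}} z_k \canBasis_k$. The uniqueness follows from the linear independence of the $\canBasis_k$ over $\mathbb C$, which was noted immediately after Proposition~\ref{prop : canonical : properties} as a consequence of property ($i$): if $\sum \alpha_k \canBasis_k = 0$, then multiplying by $\canBasis_j$ and using $\canBasis_k \canBasis_j = \delta_{kj}\canBasis_j$ forces $\alpha_j \canBasis_j = 0$, hence $\alpha_j = 0$.

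Next I would cite Proposition~\ref{prop : canonical : properties} directly to state that the basis elements $\{\canBasis_k\}_{k=1}^{2^{n-1}}$ satisfy the three listed properties (orthogonality of idempotents, invariance under $\Lambda_n$, partition of unity). Nothing in the statement requires a verification beyond what has already been proved.

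The only genuinely substantive remark worth making, and which I would include for clarity, is why the field of coefficients is $\mathbb C(\mathrm i_1)$ rather than some other copy $\mathbb C(\mathrm i_k)$: the base case $n=2$ of the inductive proof of the previous proposition used $\mathbb M_1 \simeq \mathbb C(\mathrm i_1)$ as the scalar ring of the standard idempotent representation, and the inductive step only introduces new idempotents $\gamma_{n+1}, \gamma_{n+1}'$ as further basis elements, so the coefficients remain in $\mathbb C(\mathrm i_1)$ throughout.

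There is no real obstacle here — the theorem is a packaging statement. If anything, the only thing to watch is making sure the identification of the coefficient field is consistent, and that the re-indexing from $\canBasis_k$ to $\tilde{\canBasis}_k$ used in Remark~\ref{changement variables} is understood to produce the final list $\{\canBasis_k\}_{k=1}^{2^{n-1}}$ appearing in the theorem statement.
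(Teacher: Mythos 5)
Your proposal is correct and matches the paper's treatment: the theorem carries no separate proof in the paper precisely because it is, as you say, a packaging of the basis proposition together with Proposition~\ref{prop : canonical : properties}, and your citations of those two results (plus the linear-independence remark) are exactly what is needed. Your aside on why the coefficient field is $\mathbb{C}(\mathrm{i}_1)$ is a sensible clarification of the paper's typo ``$\mathbb{C}(\mathrm{i}_i)$'' and is consistent with the remark following the theorem.
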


\begin{remark}
To simplify the notation for the rest of the paper we will consider element $\mathrm i_1$ as the usual imaginary complex number $\mathrm i$. Therefore, we consider $\mathbb C(\mathrm i_i)\simeq \mathbb C$ and $\eta=\sum_{k=1}^{2^{n-1}} z_k \canBasis_{k} $ where $z_k\in \mathbb C$ in the last theorem. Moreover for any $z\in \mathbb C(\mathrm i_i)\simeq \mathbb C$ we have $z^{\dagger_1}=\overline z$, where the bar represents the usual complex conjugation.
\end{remark}

\subsection{Projections}

For any multicomplex number $\eta=\sum_{k=1}^{2^{n-1}}z_k\canBasis_k\in \mathbb M_n$ written in the canonical idempotent representation,  we introduce the $j$th multicomplex projection as the function $P_j : \mathbb{M}_n \rightarrow \mathbb{C}$ such that $P_j(\eta) = z_j$. 
In what follows, the caret notation $\hat{\jmath}$ for indices will be used in relation with the $j$th projection in the following manner :
\begin{equation}
\label{canonicalrep}
    \eta_{\hat{\jmath}} := P_j(\eta) \hspace{3mm} \text{ and } \hspace{3mm} \eta = \sum_{k=1}^{2^{n-1}} \eta_{\hat{k}} \canBasis_k.
\end{equation}
In particular, for any $z\in \mathbb C$ we have from Proposition~\ref{prop : canonical : properties} (iii) that
$$z=\sum_{k=1}^{2^{n-1}} z\canBasis_k \quad \Rightarrow \quad P_j(z)=z\quad \text{for}\quad 1\leq j\leq 2^{n-1},
$$
i.e. the $j$th projection is the identity map when applied on complex elements.
The projection operator is a ring-homomorphism of the multicomplex numbers, i.e. for all $\eta,\zeta\in \mathbb M_n$
$$
P_j(\eta+\zeta)=P_j(\eta)+P_j(\zeta)\quad \text{and}\quad P_j(\eta\cdot\zeta)=P_j(\eta)\cdot P_j(\zeta).
$$
A multicomplex number $\eta$ is a zero divisor if and only if at least one of its projections vanishes. Indeed, for two non zero elements $\eta,\zeta\in \mathbb M_n$ such that $\eta\zeta=0$, $\eta=\sum_{k=1}^{2^{n-1}} \eta_{\hat{k}} \canBasis_k$ and $\zeta=\sum_{k=1}^{2^{n-1}} \zeta_{\hat{k}} \canBasis_k$ we have
$$
\eta\zeta=\sum_{k=1}^{2^{n-1}} \eta_{\hat{k}}\zeta_{\hat{k}} \canBasis_k=0 \quad \Rightarrow \quad \eta_{\hat{k}}=0 \text{ or }\zeta_{\hat{k}}=0 \quad \text{for}\quad 1\leq k\leq 2^{n-1}.
$$
Conversely, we have
\begin{equation}
    \eta_{\hat{\jmath}} = 0 \quad \Leftrightarrow \quad \canBasis_j \eta = \canBasis_j \eta_{\hat{\jmath}} = 0
\end{equation}
which holds even if $\eta \neq 0$. The set of zero divisors including zero, denoted by $\mathbb M_n^{-1}$, is then represented by
$$
\mathbb M_n^{-1}=\big\{\eta=\sum_{k=1}^{2^{n-1}}\eta_{\hat k}\canBasis_k\ |\ \eta_{\hat{\jmath}}=0 \text{ for at least one }j\big\}.
$$ 
For any $\eta=\sum_{k=1}^{2^{n-1}}\eta_{\hat k}\canBasis_k\in \mathbb M_n\backslash \mathbb M_n^{-1}$ the formula for the inverse is $\eta^{-1}=\sum_{k=1}^{2^{n-1}}\eta^{-1}_{\hat k}\canBasis_k$.

% For $n\geq 2$ we can also define the norm of a multicomplex number in a simple way by using its projections, i.e. $\Vert \eta \Vert :=\sqrt{\frac{1}{2^{n-1}}\sum_{k=1}^{2^{n-1}} |\eta_{\hat{k}}|^2}$, where $|\cdot|$ is the usual complex norm. The factor $\frac{1}{2^{n-1}}$ is inserted in the norm such that for $\eta=z\in \mathbb C$ we obtain $\Vert z \Vert=|z|$.

\subsection{Multiperplex subalgebra}

An important subalgebra called the multiperplex numbers and denoted $\mathbb{D}_n$ is the one made up by multicomplex numbers invariant under the $\Lambda_n$ composition of conjugates :
\[ \mathbb{D}_n = \{ \zeta \in \mathbb{M}_n \mid \zeta^{\Lambda} = \zeta \}. \]
\begin{remark}
For the rest of this article we will omit index $n$ in $\Lambda$ except where necessary for clarity.
\end{remark}

Since all $\canBasis_k$ from the canonical representation are invariant under $\Lambda$ (see Proposition \ref{prop : canonical : properties}), $\idemProd_n^{\ddagger}$ is a basis of $\mathbb{D}_n$. Expanding the equality $\zeta^{\Lambda} = \zeta$ we can see that a multicomplex number is in the subalgebra $\mathbb{D}_n$ if and only if all its components in the canonical representation are real :
\begin{equation}
        \zeta^{\Lambda} = \zeta  \quad\Leftrightarrow \quad \sum_{k=1}^{2^{n-1}} \overline{\zeta_{\hat{k}}} \canBasis_k = \sum_{k=1}^{2^{n-1}} \zeta_{\hat{k}} \canBasis_k
         \quad \Leftrightarrow  \quad \overline{\zeta_{\hat{k}}} = \zeta_{\hat{k}},
\end{equation}
i.e. $\zeta_{\hat k}\in \mathbb R$ for $k=1,\ldots,2^{n-1}$.

As a result, we obtain a broader notion of real and imaginary parts by separating each multicomplex element into two parts contained in $\mathbb{D}_n$. Let $\eta \in \mathbb{M}_n$, then each component $\eta_{\hat{k}}$ of $\eta$ in the canonical idempotent representation can be written as $\eta_{\hat{k}} = x_k + \mathrm{i} y_k$ for $x_k,y_k \in \mathbb{R}$:
\begin{equation}
\label{perplexrepofmulticomplex}
    \eta = \sum_{k=1}^{2^{n-1}} (x_k + \mathrm{i} y_k) \canBasis_k = \sum_{k=1}^{2^{n-1}} x_k \canBasis_k + \mathrm{i} \sum_{k=1}^{2^{n-1}} y_k \canBasis_k = d_1 + \mathrm{i} d_2,
\end{equation}
where $d_1, d_2 \in \mathbb{D}_n$. The set of multiperplex numbers is a vector space over $\mathbb{R}$ and for any $\eta,\zeta \in \mathbb{D}_n$ we can define the following vector partial order $\leq$ :
$$ \eta \leq \zeta \quad \Leftrightarrow \quad \eta_{\hat{\jmath}} \leq \zeta_{\hat{\jmath}} \qquad \text{for}\qquad  j = 1, \ldots, 2^{n-1}.
$$
This is also known as the product order (or componentwise order) from \cite{sudhir2010Multivariable}. We denote $\mathbb{D}_n^+$ the set of all positive multiperplex numbers i.e.
$$ 
\mathbb{D}_n^+ = \{ \eta \in \mathbb{D}_n\ |\ \eta \geq 0 \}. 
$$
% The multicomplex norm of $\eta$ can also be separated into two parts :
% \begin{equation}
%     \Vert \eta \Vert^2 = \frac{1}{2^{n-1}}\sum_{k=1}^{2^{n-1}} |\eta_{\hat{k}}|^2 = \frac{1}{2^{n-1}}\sum_{k=1}^{2^{n-1}} x_k^2 + \frac{1}{2^{n-1}}\sum_{k=1}^{2^{n-1}} y_k^2 = \Vert d_1 \Vert^2 + \Vert d_2 \Vert^2.
% \end{equation}
Let $\eta \in \mathbb{M}_n$. A natural and unique way to define a multiperplex-valued norm on $\mathbb{M}_n$ from the $n$ composition of conjugates $\Lambda=\dagger_1\cdots \dagger_n$ is
$$ \Vert \eta \Vert := \sqrt{\eta^{\Lambda} \eta}. $$
As the multiplication acts componentwise in the canonical basis, the square root is distributed over the components such that
$$ \Vert \eta \Vert := \sqrt{\eta^{\Lambda} \eta} =  \sum_{k=1}^{2^{n-1}} \sqrt{|\eta_{\hat{k}}|^2} \canBasis_k = \sum_{k=1}^{2^{n-1}} |\eta_{\hat{k}}| \canBasis_k \in \mathbb{D}_n^+ $$
and the result is a positive multiperplex number since $|\eta_{\hat{\jmath}}| \geq 0$ for all $j = 1,\ldots,2^{n-1}$. %The value obtained is independent of any basis as it is always possible to write it back in the canonical basis, and the definition of the norm itself is independent of any particular representation. 
For all $\eta, \zeta \in \mathbb{M}_n$ we have
$$ \Vert \eta + \zeta \Vert = \sum_{k=1}^{2^{n-1}} |\eta_{\hat{k}} + \zeta_{\hat{k}}| \canBasis_k \leq \sum_{k=1}^{2^{n-1}} |\eta_{\hat{k}}| \canBasis_k + \sum_{k=1}^{2^{n-1}} |\zeta_{\hat{k}}| \canBasis_k = \Vert \eta \Vert + \Vert \zeta \Vert $$
since $|\eta_{\hat{\jmath}} + \zeta_{\hat{\jmath}}| \leq |\eta_{\hat{\jmath}}| + |\zeta_{\hat{\jmath}}|$ for all $j$. The other norm properties (absolute homogeneity and positivity) can be verified in the same way as for the complex norm. The multiperplex-valued norm defined on $\mathbb{M}_n$ is a generalization of the hyperbolic-valued norm defined on the bicomplex numbers ($\mathbb{BC} \simeq \mathbb{M}_2$) from \cite{Luna-Elizarrarás_Perez-Regalado_Shapiro_2014}.

The multicomplex algebra $\mathbb{M}_n$ equipped with this norm and the conjugate $\Lambda$ as the involution  satisfies all the properties of a $C^*$-algebra (see \cite{JBConway}) except one: the norm is multiperplex-valued rather than real. This fact comes once again from the properties of complex numbers combined to the properties of $\Lambda$ as a composition of conjugates.

\section{Multicomplex ideals}
\subsection{Multiperplex ideals}

A multicomplex ring ideal is a subring $I$ of $\mathbb{M}_n$ such that the set $\eta I := \{ \eta\,\zeta \mid \zeta \in I \}$ is contained in $I$ for all $\eta \in \mathbb{M}_n$ (the multiperplex ring ideal is defined in the same way on $\mathbb{D}_n$). Starting with the description of multiperplex ideals, we will see later that we can get the multicomplex ideals from the complexification operation on vector spaces, thus covering the study of both at the same time.

From properties (i) and (iii) of Proposition \ref{prop : canonical : properties}, the basis elements $\idemProd_n^{\ddagger}=\{ \canBasis_k \}_{k=1}^{2^{n-1}} $ of the canonical idempotent representation are an \textit{orthogonal decomposition of the identity} (defined in \cite{Hazewinkel} page 30). This decomposition allow us to write the multiperplex ring as the following finite direct sum:
\begin{equation}
    \mathbb{D}_n = \bigoplus_{k=1}^{2^{n-1}} \mathbb{D}_n \canBasis_k.
\end{equation}
Each term $\mathbb{D}_n \canBasis_j$ of this sum is the principal ideal generated by the corresponding single element $\canBasis_j$. Since this decomposition is also a basis of the space, for all $\eta \in \mathbb{D}_n$ we have $\eta\,\canBasis_j = \big( \sum_{k=1}^{2^{n-1}} \eta_{\hat{k}} \canBasis_k \big) \canBasis_j = \eta_{\hat{\jmath}}\,\canBasis_j$
and $\mathbb{D}_n \canBasis_j = \mathbb{R} \canBasis_j$, meaning that all principal ideals generated by an element of the canonical idempotent basis are of the form $\mathbb{R} \canBasis_j$ and the space $\mathbb{D}_n$ is a direct sum of these ideals. We will show that not only $\mathbb{D}_n$, but all multiperplex ideals can be characterized in the same way.
\begin{proposition}
    Let $I\neq \{0\}$ be a proper multiperplex ideal of $\mathbb D_n$. Then all non-zero elements of $I$ are zero divisors.
\end{proposition}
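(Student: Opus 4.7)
The plan is the standard one-liner for commutative unital rings: a proper ideal cannot contain a unit, and in $\mathbb{D}_n$ the units are exactly the non-zero-divisors. So I need two ingredients, both already laid out in the excerpt.

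First I would characterize units in $\mathbb{D}_n$. Any element $\eta = \sum_{k=1}^{2^{n-1}} \eta_{\hat k}\canBasis_k \in \mathbb{D}_n$ has real components $\eta_{\hat k} \in \mathbb{R}$ by the description of the multiperplex subalgebra. By the criterion for zero divisors stated just after the canonical idempotent representation, $\eta$ is a zero divisor precisely when some component $\eta_{\hat\jmath}$ vanishes, and if no component vanishes then $\eta$ is invertible with $\eta^{-1} = \sum_{k=1}^{2^{n-1}} \eta_{\hat k}^{-1}\canBasis_k \in \mathbb{D}_n$ (each $\eta_{\hat k}^{-1}$ is real). Thus in $\mathbb{D}_n$, every non-zero element is either a zero divisor or a unit.

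Next I would run the contrapositive argument. Suppose, for contradiction, that $I$ contains a non-zero element $\eta$ that is not a zero divisor. By the previous paragraph $\eta$ is a unit in $\mathbb{D}_n$, so $\eta^{-1} \in \mathbb{D}_n$. Since $I$ is a multiperplex ideal, $\eta^{-1}\eta = 1 \in I$, and then for every $\zeta \in \mathbb{D}_n$ we have $\zeta = \zeta\cdot 1 \in I$, i.e. $I = \mathbb{D}_n$. This contradicts the properness of $I$, so no such $\eta$ exists and every non-zero element of $I$ must be a zero divisor.

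There is essentially no obstacle here: the only subtlety is to check that the inverse of a non-zero-divisor actually lies in $\mathbb{D}_n$ (not merely in $\mathbb{M}_n$), which is immediate because the components of $\eta^{-1}$ in the canonical idempotent basis are the real reciprocals of those of $\eta$, and realness of the components is exactly the defining condition of $\mathbb{D}_n$ established in the previous subsection.
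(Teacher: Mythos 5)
Your proof is correct and follows essentially the same route as the paper's: a non-zero non-zero-divisor would be invertible, forcing $1\in I$ and hence $I=\mathbb{D}_n$, contradicting properness. Your extra verification that $\eta^{-1}$ actually lies in $\mathbb{D}_n$ (not just $\mathbb{M}_n$) is a worthwhile detail that the paper's one-line proof leaves implicit, since $I$ is only assumed to absorb multiplication by elements of $\mathbb{D}_n$.
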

\begin{proof}
    Let $\eta \in I$ be a non-zero element. Suppose that $\eta$ is not a zero divisor, then $\eta^{-1}\,\eta = 1 \in I$. Hence the presence of the identity element in $I$ implies $I = \mathbb{D}_n$, a contradiction.
\end{proof}

%\begin{proposition}
%    All non-zero elements of a proper multiperplex ideal is a zero divisor.
%\end{proposition}
%\begin{proof}
%    Let $I$ be a nontrivial and proper multiperplex ideal and $\eta \in I$ a non-zero element. Suppose that $\eta$ is not a zero divisor, then it is invertible and $\eta^{-1} \cdot \eta = 1 \in I$. From ring theory, the presence of identity in $I$ implies $I = \mathbb{D}_n$, a contradiction.
%\end{proof}
\begin{proposition}
    The principal ideal $\mathbb{R}\canBasis_j$ is minimal for $j = 1,\ldots,2^{n-1}$.
\end{proposition}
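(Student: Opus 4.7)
The plan is to verify the definition of a minimal ideal directly: if $J$ is a nonzero ideal of $\mathbb{D}_n$ contained in $\mathbb{R}\canBasis_j$, then $J = \mathbb{R}\canBasis_j$. First I would note that $\mathbb{R}\canBasis_j$ is itself nonzero, since $\canBasis_j$ is a nonzero idempotent by Proposition~\ref{prop : canonical : properties}, so the minimality claim is not vacuous. Next I would pick any nonzero $\eta\in J$. By the inclusion $J\subseteq\mathbb{R}\canBasis_j$ we may write $\eta = r\,\canBasis_j$ for some $r\in\mathbb{R}\setminus\{0\}$.

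The crucial observation is that although $\canBasis_j$ is itself a zero divisor in $\mathbb{D}_n$ (as forced by the previous proposition), the scalar $r$ lives in the field $\mathbb{R}\subseteq\mathbb{D}_n$ and so has a multiplicative inverse in $\mathbb{D}_n$. Using the ideal absorption property with $r^{-1}\in\mathbb{R}\subseteq\mathbb{D}_n$ I then obtain $r^{-1}\eta = \canBasis_j \in J$. Now for an arbitrary target element $s\,\canBasis_j\in\mathbb{R}\canBasis_j$ with $s\in\mathbb{R}$, absorption once more gives $s\,\canBasis_j = s\cdot\canBasis_j\in J$, so $\mathbb{R}\canBasis_j\subseteq J$; combined with the standing inclusion $J\subseteq\mathbb{R}\canBasis_j$ this yields $J=\mathbb{R}\canBasis_j$ and establishes minimality. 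The argument has no real obstacle beyond the observation just made about invertibility of nonzero real coefficients inside $\mathbb{D}_n$.
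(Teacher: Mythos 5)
Your proof is correct and follows essentially the same route as the paper's: pick a nonzero element $r\canBasis_j$ of the subideal and use absorption by the invertible real scalar to recover all of $\mathbb{R}\canBasis_j$ (the paper does this in one step via $(y/x)\cdot x\canBasis_j = y\canBasis_j$, you in two). No issues.
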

\begin{proof}
    Let $I \subset \mathbb{R}\canBasis_j$ be a subideal. Either $I = \{ 0 \}$ or there exists a real number $x \neq 0$ such that $x \canBasis_j \in I$. Then for all $y \in \mathbb{R}$, $(y/x)\cdot x \canBasis_j = y \canBasis_j \in I$ and $I = \mathbb{R}\canBasis_j$.
\end{proof}
\begin{lemma}
    \label{lemma : multiperplexIdeals : containsMinimal}
    A nontrivial multiperplex ideal contains at least one minimal ideal $\mathbb{R}\canBasis_j$.
\end{lemma}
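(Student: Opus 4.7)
The plan is to extract a minimal ideal $\mathbb{R}\canBasis_j$ inside $I$ by projecting onto a well-chosen idempotent component of an arbitrary nonzero element of $I$. Concretely, I would pick any $\eta\in I\setminus\{0\}$ and expand it in the canonical idempotent representation,
\[
\eta=\sum_{k=1}^{2^{n-1}}\eta_{\hat{k}}\,\canBasis_k,
\]
noting that because $\eta\in\mathbb{D}_n$ every coefficient $\eta_{\hat{k}}$ is real, and because $\eta\neq 0$ at least one coefficient, say $\eta_{\hat{\jmath}}$, is nonzero.

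Next I would use that each $\canBasis_j$ lies in $\mathbb{D}_n$ (it is $\Lambda$-invariant by part (ii) of Proposition \ref{prop : canonical : properties}), so since $I$ is a multiperplex ideal we may multiply $\eta$ by $\canBasis_j\in\mathbb{D}_n$ and stay inside $I$. Applying the orthogonality relation $\canBasis_k\canBasis_j=\delta_{kj}\canBasis_j$ from part (i) of the same proposition yields
\[
\eta\,\canBasis_j=\eta_{\hat{\jmath}}\,\canBasis_j\in I.
\]

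Finally, since $\eta_{\hat{\jmath}}\in\mathbb{R}^{\times}$ and $I$ is closed under multiplication by real scalars (indeed by all of $\mathbb{D}_n$), for any $y\in\mathbb{R}$ the element $(y/\eta_{\hat{\jmath}})\cdot\eta_{\hat{\jmath}}\canBasis_j=y\,\canBasis_j$ belongs to $I$. Hence $\mathbb{R}\canBasis_j\subseteq I$, which is exactly the claim.

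There is no real obstacle here: once one writes the representation and invokes the orthogonal idempotent identities, the argument is almost forced. The only subtlety worth flagging in the write-up is the reminder that $\canBasis_j$ itself belongs to $\mathbb{D}_n$, so multiplication by $\canBasis_j$ is a legitimate operation inside the multiperplex ideal $I$.
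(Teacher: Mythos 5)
Your proof is correct and follows essentially the same route as the paper: pick a nonzero $\eta\in I$ with a nonvanishing projection $\eta_{\hat{\jmath}}$ and multiply by a suitable real multiple of $\canBasis_j$ (the paper does this in a single step with the factor $x\canBasis_j/\eta_{\hat{\jmath}}$, you split it into two). The extra remarks about $\canBasis_j\in\mathbb{D}_n$ and the realness of the coefficients are sound and do no harm.
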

\begin{proof}
    Let $I\neq \{0\}$ be a multiperplex ideal. Then there exists a non-zero element $\eta \in I$ with at least one non-vanishing projection $\eta_{\hat{\jmath}}$. Then for all $x \in \mathbb{R}$
    \[ \eta \, \frac{x \canBasis_j}{\eta_{\hat{ \jmath}}} = \frac{x \eta_{\hat{\jmath}} \canBasis_j}{\eta_{\hat{\jmath}}} = x \canBasis_j \in I \quad \Rightarrow \quad \mathbb{R} \canBasis_j \subseteq I. \]
\end{proof}
\begin{lemma}
    \label{lemma : multiperplexIdeals : directIsIdeal}
    For any subset of indices $J \subseteq \{ 1,\ldots,2^{n-1} \}$, the direct sum $\displaystyle \bigoplus_{j \in J} \mathbb{R}\canBasis_j$ is a multiperplex principal ideal generated by the element $\displaystyle \sum_{j \in J} \canBasis_j$.
\end{lemma}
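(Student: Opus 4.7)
The plan is to prove two set equalities that together establish the claim: (a) $S := \bigoplus_{j \in J} \mathbb{R}\canBasis_j$ is an ideal of $\mathbb{D}_n$, and (b) $S$ coincides with the principal ideal $\mathbb{D}_n \cdot e$ where $e := \sum_{j \in J} \canBasis_j$. Everything will follow from the orthogonality property $\canBasis_k \canBasis_l = \delta_{kl}\canBasis_l$ of Proposition \ref{prop : canonical : properties}(i), so no deep argument is required; the only subtlety is being clean about indices.

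First I will verify that $S$ is a subring closed under multiplication by arbitrary multiperplex numbers. Closure under addition is immediate from the direct-sum structure. For the ideal property, given $\eta = \sum_{k=1}^{2^{n-1}} \eta_{\hat{k}} \canBasis_k \in \mathbb{D}_n$ and $\zeta = \sum_{j \in J} x_j \canBasis_j \in S$, property (i) of Proposition \ref{prop : canonical : properties} yields
\[ \eta \cdot \zeta = \sum_{k=1}^{2^{n-1}} \sum_{j \in J} \eta_{\hat{k}} x_j \canBasis_k \canBasis_j = \sum_{j \in J} \eta_{\hat{j}} x_j \canBasis_j \in S, \]
which also shows closure under multiplication inside $S$, so $S$ is indeed a multiperplex ideal.

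Next I will show $S = \mathbb{D}_n \cdot e$. For the inclusion $\mathbb{D}_n \cdot e \subseteq S$, the same computation with $\zeta = e$ (i.e.\ all $x_j = 1$) gives $\eta \cdot e = \sum_{j \in J} \eta_{\hat{j}} \canBasis_j \in S$. Conversely, any element $\sum_{j \in J} x_j \canBasis_j \in S$ can be written as $\eta \cdot e$ by taking $\eta := \sum_{j \in J} x_j \canBasis_j$ itself, since orthogonality gives
\[ \eta \cdot e = \Big(\sum_{j \in J} x_j \canBasis_j\Big)\Big(\sum_{k \in J} \canBasis_k\Big) = \sum_{j \in J} x_j \canBasis_j. \]
Thus $S = \mathbb{D}_n \cdot e$, which is by definition the principal ideal generated by $e = \sum_{j \in J} \canBasis_j$.

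There is no real obstacle here: the content of the lemma is essentially a bookkeeping exercise built on the fact that the $\canBasis_j$'s form an orthogonal decomposition of the identity. The only thing to be careful about is that when we write $\eta \in S$ as $\eta \cdot e$ we use $\eta \in \mathbb{D}_n$ (which is automatic since $S \subseteq \mathbb{D}_n$), so that the resulting expression genuinely lies in the principal ideal $\mathbb{D}_n \cdot e$.
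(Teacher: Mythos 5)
Your proof is correct and follows essentially the same route as the paper: both arguments reduce to the orthogonality relation $\canBasis_k\canBasis_l=\delta_{kl}\canBasis_l$, which gives $\eta\cdot\sum_{j\in J}\canBasis_j=\sum_{j\in J}\eta_{\hat{\jmath}}\canBasis_j$ and hence the identification of $\mathbb{D}_n\bigl(\sum_{j\in J}\canBasis_j\bigr)$ with $\bigoplus_{j\in J}\mathbb{R}\canBasis_j$. You merely spell out both inclusions (and the ideal axioms) that the paper compresses into a single chain of set equalities.
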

\begin{proof}
    We have directly that
    \[ \begin{aligned}
        \mathbb{D}_n \Big( \sum_{j \in J} \canBasis_j \Big) & = \Big\{ \eta \sum_{j \in J} \canBasis_j \mid \eta \in \mathbb{D}_n \Big\}= \Big\{ \sum_{j \in J} \eta_{\hat{\jmath}} \canBasis_j \mid \eta_{\hat{\jmath}} \in \mathbb{R} \Big\}= \bigoplus_{j \in J} \mathbb{R} \canBasis_j.
    \end{aligned} \]
\end{proof}
\begin{theorem}
    \label{thm : multiperplexIdeals : directSum}
    A multiperplex ideal $I\neq \{0\}$ is a principal ideal of the form $\displaystyle\bigoplus_{j \in J} \mathbb{R}\canBasis_j$, where $J \subseteq \{ 1,\ldots,2^{n-1} \}$.
\end{theorem}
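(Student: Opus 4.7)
The plan is to identify the set of indices $J \subseteq \{1,\ldots,2^{n-1}\}$ that parametrizes which minimal principal ideals $\mathbb{R}\canBasis_j$ are contained in $I$, and to show that $I$ is exactly the direct sum of these. Concretely, I would define
\[ J := \{\, j \in \{1,\ldots,2^{n-1}\} \mid \mathbb{R}\canBasis_j \subseteq I \,\}. \]
By Lemma~\ref{lemma : multiperplexIdeals : containsMinimal} the hypothesis $I \neq \{0\}$ guarantees that $J$ is non-empty.

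Next, I would establish the two inclusions proving $I = \bigoplus_{j \in J} \mathbb{R}\canBasis_j$. The inclusion $\bigoplus_{j \in J} \mathbb{R}\canBasis_j \subseteq I$ is immediate: each summand $\mathbb{R}\canBasis_j$ with $j \in J$ lies in $I$ by the very definition of $J$, and $I$ is closed under finite sums. For the reverse inclusion, take any $\eta \in I$ and write it in the canonical basis as $\eta = \sum_{k=1}^{2^{n-1}} \eta_{\hat{k}} \canBasis_k$ with $\eta_{\hat{k}} \in \mathbb{R}$. Multiplying by $\canBasis_j \in \mathbb{D}_n$ and using property (i) of Proposition~\ref{prop : canonical : properties} yields $\eta_{\hat{\jmath}} \canBasis_j = \canBasis_j \eta \in I$ for every $j$. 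If $\eta_{\hat{\jmath}} \neq 0$, then scaling by $x/\eta_{\hat{\jmath}} \in \mathbb{R} \subseteq \mathbb{D}_n$ (exactly as in the proof of Lemma~\ref{lemma : multiperplexIdeals : containsMinimal}) produces $x \canBasis_j \in I$ for every $x \in \mathbb{R}$, so $\mathbb{R}\canBasis_j \subseteq I$ and hence $j \in J$. Consequently every index $k$ for which $\eta_{\hat{k}} \neq 0$ belongs to $J$, and $\eta = \sum_{j \in J} \eta_{\hat{\jmath}} \canBasis_j \in \bigoplus_{j \in J} \mathbb{R}\canBasis_j$.

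Finally, once the equality $I = \bigoplus_{j \in J} \mathbb{R}\canBasis_j$ has been established, the principal-ideal conclusion is a direct invocation of Lemma~\ref{lemma : multiperplexIdeals : directIsIdeal}, which provides the explicit generator $\sum_{j \in J} \canBasis_j$. There is no real obstacle here: the proof is essentially a bookkeeping argument using the orthogonality $\canBasis_k \canBasis_l = \delta_{kl}\canBasis_l$ together with the two preparatory lemmas. The only point that needs slight care is remembering that $\canBasis_j \in \mathbb{D}_n$ (not merely in $\mathbb{M}_n$), which is what legitimizes multiplying elements of $I$ by $\canBasis_j$ in the context of a multiperplex ideal; this follows from property (ii) of Proposition~\ref{prop : canonical : properties}.
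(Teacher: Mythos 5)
Your proof is correct and follows essentially the same route as the paper: both arguments rest on Lemma~\ref{lemma : multiperplexIdeals : containsMinimal}, the orthogonality relation $\canBasis_j\eta=\eta_{\hat{\jmath}}\canBasis_j$, and Lemma~\ref{lemma : multiperplexIdeals : directIsIdeal} for the principal-ideal conclusion. The only difference is organizational --- you define $J$ explicitly and prove the two inclusions directly, whereas the paper takes the largest direct sum $\bigoplus_{j\in J}\mathbb{R}\canBasis_j$ contained in $I$ and derives a contradiction from the existence of an element outside it; your version is if anything slightly cleaner, since it avoids having to justify that such a maximal direct sum exists.
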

\begin{proof}
    Let $I$ be a nontrivial multiperplex ideal and $\bigoplus_{j \in J} \mathbb{R} \canBasis_j\subseteq I$ the largest principal ideal  contained in $I$ (from Lemma \ref{lemma : multiperplexIdeals : containsMinimal}, $I$ contains at least one $\mathbb{R} \canBasis_k$). If $I \backslash \bigoplus_{j \in J} \mathbb{R} \canBasis_j = \varnothing$ then $I = \bigoplus_{j \in J} \mathbb{R} \canBasis_j$ which is the desired result. If $I \backslash \bigoplus_{j \in J} \mathbb{R} \canBasis_j \neq \varnothing$, then (since $0\in \bigoplus_{j \in J} \mathbb{R} \canBasis_j$) there exists a non-zero element $\eta \in I \backslash \bigoplus_{j \in J} \mathbb{R} \canBasis_j$ with a non vanishing projection $\eta_{\hat{k}}$. The existence of this element in the ideal implies that $\mathbb{R} \canBasis_{k} \subseteq I$ and
    \[ \Big( \bigoplus_{j \in J} \mathbb{R} \canBasis_j \Big) \oplus \mathbb{R}\canBasis_{k} \subseteq I. \]
    Contradicting the fact that $\bigoplus_{j \in J} \mathbb{R} \canBasis_j$ is the largest direct sum contained in $I$. Thus $I = \bigoplus_{j \in J} \mathbb{R} \canBasis_j$ and is a principal ideal from Lemma \ref{lemma : multiperplexIdeals : directIsIdeal}.
\end{proof}
Having characterized the multiperplex ideals in terms of the minimal ideals, we turn our attention to the orthogonal hyperplanes $H_j$ defined as
\begin{equation}
    H_j := \{ \eta \in \mathbb{D}_n \mid \eta \, \canBasis_j = 0 \}.
\end{equation}
This set corresponds to all multiperplex numbers with a vanishing $j$th projection. We show easily  that $H_j$ is an ideal of the multiperplex numbers since $\zeta\, H_j\subseteq H_j$ for all $\zeta\in \mathbb D_n$.
\begin{proposition}
    For $j=1,\ldots,2^{n-1}$ the orthogonal hyperplane $H_j$ is a maximal multiperplex ideal.
\end{proposition}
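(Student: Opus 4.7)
The plan is to use the classification of multiperplex ideals from Theorem \ref{thm : multiperplexIdeals : directSum}, which reduces maximality to a combinatorial statement about the index set $J \subseteq \{1, \ldots, 2^{n-1}\}$. First, I would rewrite $H_j$ in the canonical idempotent basis. For $\eta = \sum_{k=1}^{2^{n-1}} \eta_{\hat k} \canBasis_k \in \mathbb{D}_n$, property $(i)$ of Proposition \ref{prop : canonical : properties} gives $\eta\, \canBasis_j = \eta_{\hat \jmath} \canBasis_j$, so $\eta \in H_j$ iff $\eta_{\hat \jmath} = 0$. This identifies $H_j$ with the direct sum
\[ H_j = \bigoplus_{\substack{k=1 \\ k \neq j}}^{2^{n-1}} \mathbb{R}\canBasis_k, \]
which by Lemma \ref{lemma : multiperplexIdeals : directIsIdeal} is indeed an ideal (and in fact principal, generated by $\sum_{k \neq j} \canBasis_k = 1 - \canBasis_j$).

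Next, to establish maximality, I would suppose that $I$ is a multiperplex ideal with $H_j \subseteq I \subseteq \mathbb{D}_n$. By Theorem \ref{thm : multiperplexIdeals : directSum}, $I = \bigoplus_{k \in J} \mathbb{R}\canBasis_k$ for some $J \subseteq \{1, \ldots, 2^{n-1}\}$. The inclusion $H_j \subseteq I$ forces $\{1,\ldots,2^{n-1}\} \setminus \{j\} \subseteq J$, so $J$ is either this set (giving $I = H_j$) or the full set (giving $I = \mathbb{D}_n$, since $\sum_k \canBasis_k = 1$ generates all of $\mathbb{D}_n$ by property $(iii)$). In either case there is no proper ideal strictly between $H_j$ and $\mathbb{D}_n$, proving maximality.

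I do not anticipate a serious obstacle: the theorem is essentially a direct corollary of the earlier classification. The only mild care is to verify at the outset that $H_j$ itself is indeed a proper ideal (so that the notion of maximality is non-vacuous), which is immediate from $1 \notin H_j$ since $1 \cdot \canBasis_j = \canBasis_j \neq 0$.
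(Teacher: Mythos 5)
Your proof is correct, but it takes a different route from the paper's. You first identify $H_j=\bigoplus_{k\neq j}\mathbb{R}\canBasis_k$ explicitly and then invoke the classification of Theorem~\ref{thm : multiperplexIdeals : directSum}: any ideal $I$ with $H_j\subseteq I$ must itself be of the form $\bigoplus_{k\in J}\mathbb{R}\canBasis_k$, the inclusion forces $J\supseteq\{1,\ldots,2^{n-1}\}\setminus\{j\}$, and only two choices of $J$ remain. The paper instead argues directly: assuming $H_j\subsetneq I$ with $I$ proper, it picks $\eta\in I\setminus H_j$, notes $\eta_{\hat\jmath}\neq 0$, deduces $\mathbb{R}\canBasis_j\subseteq I$ (the same mechanism as Lemma~\ref{lemma : multiperplexIdeals : containsMinimal}), and concludes $H_j\oplus\mathbb{R}\canBasis_j=\mathbb{D}_n\subseteq I$, a contradiction -- so it never needs the full classification of ideals. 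Your version buys a cleaner ``corollary of the classification'' reading and makes the lattice of ideals containing $H_j$ completely explicit; the paper's version is slightly more self-contained and shows maximality without presupposing that the intermediate ideal has the direct-sum form. Your closing remark that $H_j$ is proper (since $1\cdot\canBasis_j=\canBasis_j\neq 0$) is a worthwhile check that the paper leaves implicit.
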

\begin{proof}
    Suppose $H_j$ is not maximal, then there is a proper ideal $I$ such that $H_j \subset I$. Let $\eta \in I \backslash H_j$, then $\eta_{\hat{\jmath}} \neq 0$ (otherwise $\eta \in H_j$). This implies that $\mathbb{R} \canBasis_j \subset I$ and
    \[ H_j \oplus \mathbb{R} \canBasis_j = \mathbb{D}_n \subseteq I. \]
    Thus we conclude that all orthogonal hyperplanes are maximal ideals.
\end{proof}
\begin{proposition}
    For any proper subset of indices $\varnothing \neq J \subset \{ 1,\ldots,2^{n-1} \}$, we have
    \[ \bigoplus_{j \in J} \mathbb{R} \canBasis_j = \bigcap_{k \in J^{\complement}} H_{k}. \]
\end{proposition}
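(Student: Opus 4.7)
The plan is to prove this set equality by double inclusion, leveraging only properties $(i)$ and $(iii)$ of Proposition \ref{prop : canonical : properties} together with the characterization of $H_k$ as the set of multiperplex numbers whose $k$th projection vanishes.

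For the inclusion $\bigoplus_{j \in J} \mathbb{R}\canBasis_j \subseteq \bigcap_{k \in J^{\complement}} H_k$, I would take an arbitrary element $\eta = \sum_{j \in J} x_j \canBasis_j$ with $x_j \in \mathbb{R}$ and fix any $k \in J^{\complement}$. Since $k \notin J$, the orthogonality relation $\canBasis_j \canBasis_k = \delta_{jk}\canBasis_k$ forces every term in $\eta\canBasis_k = \sum_{j \in J} x_j \delta_{jk}\canBasis_k$ to vanish, so $\eta \in H_k$. As this holds for every $k \in J^{\complement}$, we get $\eta \in \bigcap_{k \in J^{\complement}} H_k$.

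For the reverse inclusion, I would start from $\eta \in \bigcap_{k \in J^{\complement}} H_k \subseteq \mathbb{D}_n$ and expand it in the canonical basis as $\eta = \sum_{j=1}^{2^{n-1}} \eta_{\hat{\jmath}}\, \canBasis_j$, where each $\eta_{\hat{\jmath}} \in \mathbb{R}$ since $\eta \in \mathbb{D}_n$. The earlier equivalence $\eta_{\hat{k}} = 0 \Leftrightarrow \eta\canBasis_k = 0$ (already established in the subsection on projections) gives $\eta_{\hat{k}} = 0$ for every $k \in J^{\complement}$, leaving only the terms indexed by $J$. Hence $\eta = \sum_{j \in J} \eta_{\hat{\jmath}} \canBasis_j \in \bigoplus_{j \in J} \mathbb{R}\canBasis_j$.

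Neither direction is really a serious obstacle since this is essentially an orthogonality bookkeeping argument; the only point requiring care is invoking $\eta \in \mathbb{D}_n$ in the reverse inclusion so that the real-valuedness of the canonical components matches the scalars of the direct sum $\bigoplus_{j \in J} \mathbb{R}\canBasis_j$. The hypothesis that $J$ is a proper subset is needed only to ensure that both sides of the equality make sense as written (with $J^{\complement}$ nonempty on the right).
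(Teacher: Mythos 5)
Your proof is correct and follows essentially the same route as the paper: the paper compresses the argument into a single chain of equivalences ($\eta\in\bigoplus_{j\in J}\mathbb{R}\canBasis_j \Leftrightarrow \eta_{\hat{k}}=0$ for all $k\in J^{\complement} \Leftrightarrow \eta\in\bigcap_{k\in J^{\complement}}H_k$), whereas you unpack the same orthogonality facts into two explicit inclusions. No substantive difference.
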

\begin{proof}
    We have directly
    \[ 
    \begin{array}{rcl}
    \eta \in \displaystyle\bigoplus_{j \in J} \mathbb{R} \canBasis_j & \Leftrightarrow &  \eta_{\hat{k}} = 0,\ \forall k \in J^{\complement}\quad \Leftrightarrow \quad \eta \in H_{k},\ \forall k \in J^{\complement} \\*[2ex]
    &\Leftrightarrow & \eta \in \displaystyle \bigcap_{k \in J^{\complement}} H_{k}. 
    \end{array}
    \]
\end{proof}
From these two last propositions and Theorem \ref{thm : multiperplexIdeals : directSum}, we see that all multiperplex ideals are written in terms of a direct sum of minimal ideals $\mathbb{R} \canBasis_j$ as well as  intersections of maximal ideals $H_j$. Moreover, there are no other ideals distinct from the $\mathbb{R} \canBasis_j$ and $H_j$ that are also minimal or maximal and the characterization of multiperplex is thus completed.

\subsection{Complexification and realization}

Let $\mathcal{I}(\mathbb{D}_n)$ and $\mathcal{I}(\mathbb{M}_n)$ be the respective sets of multiperplex and multicomplex ideals. The \textbf{complexification} is always applied on a vector space (or algebra) $V$ over $\mathbb{R}$ as the extension of the scalar multiplication over $\mathbb{C}$.
\begin{equation}
    (V, +, \cdot, \mathbb{R}) \xrightarrow[]{\text{Complexification}} (V \oplus \mathrm{i} V, +, \cdot, \mathbb{C})
\end{equation}
Both bases and dimension of $V$ are preserved by the complexification.
\begin{proposition}
    The complexification of a multiperplex ideal is a multicomplex ideal.
\end{proposition}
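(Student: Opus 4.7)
The plan is to write out the complexification explicitly and check the ideal axioms directly. Given a multiperplex ideal $I\subseteq \mathbb{D}_n$, viewed as an $\mathbb{R}$-vector space, its complexification is
\[ I_{\mathbb C}:= I\oplus \mathrm{i} I=\{c_1+\mathrm{i} c_2\mid c_1,c_2\in I\}\subseteq \mathbb{D}_n\oplus \mathrm{i}\mathbb{D}_n=\mathbb{M}_n, \]
where the last equality uses the decomposition (\ref{perplexrepofmulticomplex}). So the substantive task is to verify that $I_{\mathbb C}$ is a subring of $\mathbb{M}_n$ and is absorbent under multiplication by arbitrary multicomplex numbers.

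First, closure of $I_{\mathbb C}$ under addition is immediate from its definition as a direct sum. For closure under multiplication, I would take $a_1+\mathrm{i} a_2,\ b_1+\mathrm{i} b_2\in I_{\mathbb C}$, expand
\[ (a_1+\mathrm{i} a_2)(b_1+\mathrm{i} b_2)=(a_1b_1-a_2b_2)+\mathrm{i}(a_1b_2+a_2b_1), \]
and observe that each of the four products $a_jb_k$ lies in $I$ because $I$ is a subring of $\mathbb{D}_n$ closed under its own multiplication (here $\mathrm{i}$ commutes with every multiperplex element, so the expansion is valid in the commutative ring $\mathbb{M}_n$).

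Next, for the ideal property, let $\eta\in\mathbb{M}_n$ and $\zeta=c_1+\mathrm{i} c_2\in I_{\mathbb C}$ with $c_1,c_2\in I$. Using (\ref{perplexrepofmulticomplex}) I write $\eta=d_1+\mathrm{i} d_2$ for some $d_1,d_2\in\mathbb{D}_n$, and compute
\[ \eta\zeta=(d_1c_1-d_2c_2)+\mathrm{i}(d_1c_2+d_2c_1). \]
Since $I$ is a multiperplex ideal, each product $d_jc_k$ lies in $I$; consequently both the real and imaginary multiperplex parts of $\eta\zeta$ belong to $I$, so $\eta\zeta\in I_{\mathbb C}$.

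The argument is essentially bookkeeping, and there is no real obstacle: the only structural input is the decomposition $\mathbb{M}_n=\mathbb{D}_n\oplus\mathrm{i}\mathbb{D}_n$ together with the fact that $\mathrm{i}$ is central in $\mathbb{M}_n$, which is exactly what makes the complexification of an ideal behave as expected. The only point worth stressing is that one must use the ideal property of $I$ with respect to \emph{multiperplex} multiplication (to absorb the $d_j$) rather than merely $\mathbb{R}$-multiplication; this is what promotes $I_{\mathbb C}$ from a $\mathbb{C}$-subspace to a full multicomplex ideal.
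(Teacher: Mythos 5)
Your proof is correct and follows essentially the same route as the paper's: write $\eta=d_1+\mathrm{i}d_2$ via the decomposition $\mathbb{M}_n=\mathbb{D}_n\oplus\mathrm{i}\mathbb{D}_n$, expand the product, and absorb each multiperplex factor into $I$ using the ideal property. Your additional explicit check of the subring axioms is a harmless elaboration of what the paper leaves implicit.
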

\begin{proof}
    Let $I_D \in \mathcal{I}(\mathbb{D}_n)$. Considering the complexification $I_D \oplus \mathrm{i} I_D$ of $I_D$, we suppose that $d_1 + \mathrm{i} d_2 \in I_D \oplus \mathrm{i} I_D$. Then, from (\ref{perplexrepofmulticomplex}) any $\eta \in \mathbb{M}_n$ can be written as $\eta = d_1' + \mathrm{i} d_2'$ where $d_1',d_2' \in \mathbb{D}_n$. We find 
    \[ \eta \,(d_1 + \mathrm{i} d_2) = (d_1' + \mathrm{i} d_2') \, (d_1 + \mathrm{i} d_2) = (d_1' d_1 - d_2' d_2) + \mathrm{i} (d_1' d_2 + d_2' d_1). \]
    Since $I_D$ is an ideal, all products of the right-hand side of the equation are in $I_D$ and $\eta \,(d_1 + \mathrm{i} d_2) \in I_D \oplus \mathrm{i} I_D$.
\end{proof} 
From this proposition, the complexification can be seen as a well-defined function from $\mathcal{I}(\mathbb{D}_n)$ to $\mathcal{I}(\mathbb{M}_n)$. 

Let us now define the set operator $\mathcal R$ by
\begin{equation}
    \mathcal{R}(A) := \{ \eta \in A \mid \eta^{\Lambda} = \eta \},\qquad \forall A\subseteq \mathbb M_n.
\end{equation}
We note that for any multicomplex ideal $I_M$ (or multicomplex subring) then $\mathcal{R}(I_M)$ is nonempy since $0 \in I_M$ and $0^{\Lambda} = 0$.
\begin{proposition}
    Let $I_M$ be a multicomplex ideal. Then $\mathcal R(I_M)$ is a multiperplex ideal such that $\mathcal R$ is a mapping from the set of multicomplex ideals to the set of multiperplex ideals, i.e. $\mathcal R:\mathcal I(\mathbb M_n)\rightarrow \mathcal I(\mathbb D_n)$.
\end{proposition}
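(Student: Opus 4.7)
The plan is to verify that $\mathcal{R}(I_M)$ satisfies each of the defining properties of a multiperplex ideal: (a) it is contained in $\mathbb{D}_n$, (b) it is a subring, and (c) it absorbs multiplication by elements of $\mathbb{D}_n$. The proof is structural and relies entirely on two ingredients already established: the composition of conjugates $\Lambda = \dagger_1\dagger_2\cdots \dagger_n$ is a ring-homomorphism (distributive over both addition and multiplication), and $\mathbb{D}_n$ is by definition the fixed-point set of $\Lambda$.

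First I would note that $\mathcal{R}(I_M) \subseteq \mathbb{D}_n$ is immediate from the definition of $\mathcal{R}$, so the set is a candidate subset of the multiperplex ring. For additive closure, I would take $\eta,\zeta \in \mathcal{R}(I_M)$ and use the fact that $I_M$ is an ideal (hence closed under addition and additive inverses) together with $(\eta+\zeta)^{\Lambda} = \eta^{\Lambda} + \zeta^{\Lambda} = \eta + \zeta$ and $(-\eta)^{\Lambda} = -\eta$, concluding $\eta+\zeta, -\eta \in \mathcal{R}(I_M)$. For multiplicative closure within the subring, the same argument with $(\eta\zeta)^{\Lambda} = \eta^{\Lambda}\zeta^{\Lambda} = \eta\zeta$ handles it. This establishes that $\mathcal{R}(I_M)$ is a subring of $\mathbb{D}_n$.

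Finally, for the absorption property, I would pick any $\zeta \in \mathbb{D}_n$ and any $\eta \in \mathcal{R}(I_M)$. Since $\mathbb{D}_n \subseteq \mathbb{M}_n$ and $I_M$ is a multicomplex ideal, we have $\zeta \eta \in I_M$. Because both $\zeta$ and $\eta$ are $\Lambda$-invariant, $(\zeta\eta)^{\Lambda} = \zeta^{\Lambda}\eta^{\Lambda} = \zeta\eta$, so $\zeta\eta \in \mathcal{R}(I_M)$, as required.

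There is no real obstacle here; the only mildly delicate point is remembering that the ideal property for $\mathcal{R}(I_M)$ as a multiperplex ideal only demands absorption by $\mathbb{D}_n$-elements (not by all of $\mathbb{M}_n$), which is strictly weaker than what $I_M$ provides, so the absorption step follows immediately once $\Lambda$-invariance of the product is checked. The remark in the excerpt that $\mathcal{R}(I_M)$ is nonempty (since $0 \in I_M$ and $0^{\Lambda} = 0$) covers the triviality that the set we built is not vacuous.
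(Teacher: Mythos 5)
Your proof is correct and follows essentially the same route as the paper: the key step in both is that $(\zeta\eta)^{\Lambda}=\zeta^{\Lambda}\eta^{\Lambda}=\zeta\eta$ for $\zeta\in\mathbb{D}_n$ and $\eta\in\mathcal{R}(I_M)$, combined with the fact that $I_M$ absorbs multiplication by all of $\mathbb{M}_n$. The paper only writes out this absorption step explicitly; your additional verification that $\mathcal{R}(I_M)$ is a subring of $\mathbb{D}_n$ is routine and fills in detail the paper leaves implicit.
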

\begin{proof}
    Let $I_M \in \mathcal{I}(\mathbb{M}_n)$. Then for all $\zeta \in \mathbb{D}_n$ and $\eta\in \mathcal R(I_M)$, we have
    $$
(\zeta\,\eta)^\Lambda=\zeta^\Lambda\,\eta^\Lambda=\zeta\,\eta
    $$
    such that $\zeta \mathcal R(I_M)\subseteq \mathcal R(I_M)$.
\end{proof}
%\begin{proof}
   % Let $I_M \in \mathcal{I}(\mathbb{M}_n)$, $\zeta \in \mathbb{D}_n$ and $d \in \mathcal{R}(I_M)$. Note that $\mathbb{D}_n \subseteq \mathbb{M}_n$ and $\mathcal{R}(I_M) \subseteq I_M$ which implies $\zeta \cdot d \in I_M$. However,
   % \[ (\zeta \cdot d)^{\Lambda} = \zeta^{\Lambda} \cdot d^{\Lambda} = \zeta \cdot d \Rightarrow \zeta \cdot d \in \mathrm{Re}(I_M) \]
    %and $\mathrm{Re}(I_M)$ is thus a multiperplex ideal.
%\end{proof}
\begin{proposition}
    \label{prop : complexification : ConjugateInIdeal}
    Let $I_M$ be a multicomplex ideal. If $\eta \in I_M$, then $\eta^{\Lambda} \in I_M$.
\end{proposition}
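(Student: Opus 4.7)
The plan is to exploit the canonical idempotent representation to reduce the statement to a componentwise claim about complex multiples of each $\canBasis_k$, and then use the fact that $\mathbb{C} \subset \mathbb{M}_n$ so that multiplication by complex scalars preserves the ideal.

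First, I would write $\eta = \sum_{k=1}^{2^{n-1}} z_k \canBasis_k$ with $z_k \in \mathbb{C}$. Using Proposition~\ref{prop : canonical : properties}(ii), $\canBasis_k^{\Lambda} = \canBasis_k$, together with the remark that $z^{\dagger_1} = \overline{z}$ for $z \in \mathbb{C}$ (and $\dagger_j$ fixes $\mathbb{C}(\mathrm{i}_1)$ for $j \geq 2$), one obtains
\[
\eta^{\Lambda} \;=\; \sum_{k=1}^{2^{n-1}} \overline{z_k}\,\canBasis_k.
\]
So the goal reduces to showing $\overline{z_k}\,\canBasis_k \in I_M$ for each $k$.

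Next, since $\canBasis_k \in \mathbb{M}_n$ and $I_M$ is an ideal, $\canBasis_k \eta \in I_M$; combining with Proposition~\ref{prop : canonical : properties}(i) this gives $\canBasis_k \eta = z_k \canBasis_k \in I_M$. Now I would split on whether $z_k = 0$: if yes, then $\overline{z_k}\canBasis_k = 0 \in I_M$ trivially; if $z_k \neq 0$, then $\overline{z_k}/z_k \in \mathbb{C} \subset \mathbb{M}_n$, and by the ideal absorption property
\[
\frac{\overline{z_k}}{z_k}\cdot\bigl(z_k \canBasis_k\bigr) \;=\; \overline{z_k}\,\canBasis_k \;\in\; I_M.
\]
(Equivalently, $\{w \in \mathbb{C} : w\canBasis_k \in I_M\}$ is an ideal of the field $\mathbb{C}$, hence $\{0\}$ or all of $\mathbb{C}$; either case covers what we need.)

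Finally, summing over $k$ and using that $I_M$ is closed under addition yields $\eta^{\Lambda} = \sum_k \overline{z_k}\,\canBasis_k \in I_M$, which is the desired conclusion. The only mildly delicate point is the middle step: it hinges on recognizing that an ideal of $\mathbb{M}_n$ is automatically a $\mathbb{C}$-submodule, so having a single nonzero $\mathbb{C}$-multiple of $\canBasis_k$ in $I_M$ forces all of them to be in $I_M$; once this is in hand, the argument is essentially bookkeeping in the canonical representation.
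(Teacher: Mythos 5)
Your proof is correct and follows essentially the same route as the paper: the paper builds a single multiplier $\zeta\in\mathbb{M}_n$ with $k$th component $\overline{\eta_{\hat k}}^{\,2}/|\eta_{\hat k}|^2=\overline{\eta_{\hat k}}/\eta_{\hat k}$ (and $0$ where $\eta_{\hat k}=0$) and observes $\zeta\eta=\eta^{\Lambda}\in I_M$ by absorption, which is exactly your componentwise rescaling assembled into one multiplication. The only difference is organizational — you first isolate each $z_k\canBasis_k$ via $\canBasis_k\eta$ and then sum, while the paper does it in one step — so there is nothing to fix.
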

\begin{proof}
    Let $\eta \in I_M$ written in the canonical representation :
    \[ \eta = \displaystyle\sum_{k=1}^{2^{n-1}} \eta_{\hat{k}} \canBasis_k \]
    and take $\zeta \in \mathbb{M}_n$ such that
    \[ 
    P_k(\zeta) = \zeta_{\hat{k}} = \begin{cases}
        0 & \text{ if } \eta_{\hat{k}} = 0,\\*[2ex]
        \overline{\eta_{\hat{k}}}^2/|\eta_{\hat{k}}|^2 & \text{ if } \eta_{\hat{k}} \neq 0,
    \end{cases} 
    \]
    for $k=1,\ldots, 2^{n-1}$.
    Then for all $k$,
    \[ P_k(\zeta \,\eta) = \zeta_{\hat{k}} \,\eta_{\hat{k}} = \begin{cases}
        0 & \text{ if } \eta_{\hat{k}} = 0\\*[2ex]
        \overline{\eta_{\hat{k}}} & \text{ if } \eta_{\hat{k}} \neq 0\\
    \end{cases} \quad\Rightarrow \quad P_k(\zeta \, \eta) = \overline{\eta_{\hat{k}}}. \]
    Thus $\zeta \, \eta = \eta^{\Lambda}$ and since $I_M$ is an ideal, $\eta^{\Lambda} \in I_M$.
\end{proof}
\begin{lemma}
    \label{lemma : complexification : RealizationAsInverse}
     The mapping $\mathcal R:\mathcal{I}(\mathbb{M}_n)\rightarrow \mathcal{I}(\mathbb{D}_n)$ is the inverse of the complexification from $\mathcal{I}(\mathbb{D}_n)$ to $\mathcal{I}(\mathbb{M}_n)$, i.e. for all $I_D \in \mathcal{I}(\mathbb{D}_n)$ and $I_M \in \mathcal{I}(\mathbb{M}_n)$
    \[ \mathcal{R}(I_D \oplus \mathrm{i} I_D) = I_D \qquad \text{ and } \qquad \mathcal{R}(I_M) \oplus \mathrm{i} \mathcal{R}(I_M) = I_M. \]
\end{lemma}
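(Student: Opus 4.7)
The plan is to prove the two set equalities separately, with the key computational observation being the action of $\Lambda$ on an element written in the form $d_1+\mathrm{i}d_2$ with $d_1,d_2\in\mathbb{D}_n$. Since $\mathrm{i}=\mathrm{i}_1$ and $\Lambda=\dagger_1\cdots\dagger_n$, we get $\mathrm{i}^{\Lambda}=-\mathrm{i}$; combined with $d_1^{\Lambda}=d_1$ and $d_2^{\Lambda}=d_2$ and the distributivity of $\Lambda$ over addition and multiplication, this yields $(d_1+\mathrm{i}d_2)^{\Lambda}=d_1-\mathrm{i}d_2$. This identity will be the workhorse for both directions.

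For the first equality $\mathcal{R}(I_D\oplus\mathrm{i}I_D)=I_D$, I would take $\eta=d_1+\mathrm{i}d_2\in I_D\oplus\mathrm{i}I_D$ with $d_1,d_2\in I_D$, apply the identity above, and observe that $\eta^{\Lambda}=\eta$ forces $2\mathrm{i}d_2=0$, hence $d_2=0$. So $\mathcal{R}(I_D\oplus\mathrm{i}I_D)=\{d_1\mid d_1\in I_D\}=I_D$. The reverse inclusion $I_D\subseteq\mathcal{R}(I_D\oplus\mathrm{i}I_D)$ is immediate: every $d\in I_D$ sits inside $I_D\oplus\mathrm{i}I_D$ and satisfies $d^{\Lambda}=d$ by definition of $\mathbb{D}_n$.

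For the second equality $\mathcal{R}(I_M)\oplus\mathrm{i}\mathcal{R}(I_M)=I_M$, the crucial ingredient is Proposition \ref{prop : complexification : ConjugateInIdeal}, which gives $\eta^{\Lambda}\in I_M$ whenever $\eta\in I_M$. Given $\eta\in I_M$, write $\eta=d_1+\mathrm{i}d_2$ as in (\ref{perplexrepofmulticomplex}); then
\[ d_1=\tfrac{1}{2}\bigl(\eta+\eta^{\Lambda}\bigr)\in I_M, \qquad d_2=\tfrac{-\mathrm{i}}{2}\bigl(\eta-\eta^{\Lambda}\bigr)\in I_M, \]
where the membership in $I_M$ uses closure under addition and multiplication by $\mathbb{M}_n$-scalars. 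Since $d_1,d_2\in\mathbb{D}_n$ by construction and now also in $I_M$, they lie in $\mathcal{R}(I_M)$, which shows $\eta\in\mathcal{R}(I_M)\oplus\mathrm{i}\mathcal{R}(I_M)$. The reverse inclusion is immediate from $\mathcal{R}(I_M)\subseteq I_M$ and the ideal property.

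The only mild subtlety is making sure the sum $\mathcal{R}(I_M)+\mathrm{i}\mathcal{R}(I_M)$ is genuinely direct, but this is automatic: if $d_1+\mathrm{i}d_2=0$ with $d_1,d_2\in\mathbb{D}_n$, then applying $\Lambda$ gives $d_1-\mathrm{i}d_2=0$, and adding/subtracting forces $d_1=d_2=0$. No step involves a serious obstacle; the proof is essentially a bookkeeping exercise once the identity $(d_1+\mathrm{i}d_2)^{\Lambda}=d_1-\mathrm{i}d_2$ and Proposition \ref{prop : complexification : ConjugateInIdeal} are in hand.
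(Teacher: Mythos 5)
Your proof is correct and follows essentially the same route as the paper's: both rest on the identity $(d_1+\mathrm{i}d_2)^{\Lambda}=d_1-\mathrm{i}d_2$ to force $d_2=0$ in the first equality, and both invoke Proposition~\ref{prop : complexification : ConjugateInIdeal} together with the extraction formulas $d_1=\tfrac{1}{2}(\eta+\eta^{\Lambda})$, $d_2=\tfrac{1}{2\mathrm{i}}(\eta-\eta^{\Lambda})$ for the second. Your explicit verification that the sum is direct is a small addition the paper leaves implicit, but it does not change the argument.
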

\begin{proof}
    Let $\eta = d_1 + \mathrm{i} d_2 \in I_D \oplus \mathrm{i} I_D$, then
    \[ 
    \begin{array}{rcl}
        \eta \in \mathcal{R}(I_D \oplus \mathrm{i} I_D)  &\Leftrightarrow & \eta^{\Lambda} = \eta \ 
         \Leftrightarrow \ (d_1+ \mathrm{i}d_2)^{\Lambda} = d_1 + \mathrm{i} d_2 \\*[2ex]
         &\Leftrightarrow  & d_2 = 0 \text{ and } \eta \in I_D.
         \end{array}
     \]
    Now, let $\zeta = d_1' + \mathrm{i} d_2' \in \mathcal{R}(I_M) \oplus \mathrm{i} \mathcal{R}(I_M)$. Since $\mathcal{R}(I_M)\subseteq I_M$ from the definition of $\mathcal R$, $\zeta$ is in $I_M$. Conversely, if $\zeta = d_1 + \mathrm{i}d_2 \in I_M$ then from Proposition \ref{prop : complexification : ConjugateInIdeal}, $\zeta^{\Lambda} = d_1 - \mathrm{i} d_2 \in I_M$ and
    \[ \frac{\zeta + \zeta^{\Lambda}}{2} = d_1 \in I_M, \qquad \frac{\zeta - \zeta^{\Lambda}}{2 \mathrm{i}} = d_2 \in I_M. \]
    Since $d_1,d_2$ are multiperplex elements in $I_M$, they are invariant under $\Lambda$, thus $d_1,d_2\in \mathcal R(I_M)$ and $\zeta \in \mathcal{R}(I_M) \oplus \mathrm{i} \mathcal{R}(I_M)$. 
\end{proof}

From ring theory, the intersection of two ideals is an ideal and this operation defines an algebraic structure on $\mathcal{I}(\mathbb{D}_n)$ and $\mathcal{I}(\mathbb{M}_n)$. More specifically, $(\mathcal{I}(\mathbb{D}_n),\ \cap)$ and $(\mathcal{I}(\mathbb{M}_n),\ \cap)$ are both monoïds with respective identity elements $\mathbb{D}_n$ and $\mathbb{M}_n$.

\begin{theorem}
    \label{thm : complexification : oneToOneAndInclusion}
    The complexification is a one-to-one correspondence from $\mathcal{I}(\mathbb{D}_n)$ to $\mathcal{I}(\mathbb{M}_n)$ preserving the intersection, i.e. for any $I_1, I_2 \in \mathcal{I}(\mathbb{D}_n)$,
    \[ (I_1 \cap I_2) \oplus \mathrm{i} (I_1 \cap I_2) = (I_1 \oplus \mathrm{i} I_1) \cap (I_2 \oplus \mathrm{i} I_2). \]
\end{theorem}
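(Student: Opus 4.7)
The plan is to dispatch the theorem in two independent pieces. First, the one-to-one correspondence is essentially already proved: Lemma~\ref{lemma : complexification : RealizationAsInverse} exhibits $\mathcal{R}:\mathcal{I}(\mathbb{M}_n)\to\mathcal{I}(\mathbb{D}_n)$ as a two-sided inverse of complexification, so complexification is automatically a bijection of sets. I would simply cite this lemma in one line and move on.

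The substance is the intersection identity, which I would prove by a routine double inclusion hinging on the uniqueness of the multiperplex decomposition $\eta = d_1 + \mathrm{i} d_2$ with $d_1,d_2\in \mathbb{D}_n$ from formula (\ref{perplexrepofmulticomplex}). Before the main argument, I would make this uniqueness explicit: in the canonical representation $\eta = \sum_k (x_k + \mathrm{i} y_k)\varepsilon_k$ with $x_k,y_k\in\mathbb{R}$, and since $\{\varepsilon_k\}$ is a basis of $\mathbb{M}_n$ over $\mathbb{C}$ while $\mathbb{D}_n$ is characterized by real canonical components, the real and imaginary multiperplex parts of $\eta$ are uniquely determined. (Equivalently, applying $\Lambda$ to $d_1+\mathrm{i}d_2 = d_1'+\mathrm{i}d_2'$ yields $d_1-\mathrm{i}d_2 = d_1'-\mathrm{i}d_2'$, and adding and subtracting gives $d_j = d_j'$.)

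For the inclusion $\subseteq$, I would take $\eta = d_1+\mathrm{i}d_2$ with $d_1,d_2\in I_1\cap I_2$; then $\eta$ trivially lies in both $I_1\oplus \mathrm{i}I_1$ and $I_2\oplus\mathrm{i}I_2$. For the reverse inclusion $\supseteq$, any $\eta$ in the right-hand side admits two decompositions $d_1+\mathrm{i}d_2 = d_1'+\mathrm{i}d_2'$ with $d_j\in I_1$ and $d_j'\in I_2$; by the uniqueness just recalled, $d_1=d_1'$ and $d_2=d_2'$, so both parts belong to $I_1\cap I_2$, placing $\eta$ in the left-hand side. I do not anticipate any real obstacle: the only subtlety worth flagging is that without the uniqueness step one cannot conclude that the same $d_1$ lies in both $I_1$ and $I_2$ from two a priori different decompositions.
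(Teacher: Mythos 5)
Your proposal is correct and follows essentially the same route as the paper: the bijectivity is dispatched by citing Lemma~\ref{lemma : complexification : RealizationAsInverse}, and the intersection identity is the paper's chain of equivalences unpacked into a double inclusion. The one difference is to your credit: you make explicit the uniqueness of the decomposition $\eta=d_1+\mathrm{i}d_2$ with $d_1,d_2\in\mathbb{D}_n$ (via $\Lambda$ or the canonical basis), which the paper's step ``$d_1+\mathrm{i}d_2\in I_1\oplus\mathrm{i}I_1 \Leftrightarrow d_1,d_2\in I_1$'' tacitly relies on.
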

\begin{proof}
    The existence of the inverse from Lemma \ref{lemma : complexification : RealizationAsInverse} is sufficient to conclude that the complexification  is bijective between $\mathcal{I}(\mathbb{D}_n)$ and $\mathcal{I}(\mathbb{M}_n)$. If $I_1$ and $I_2$ are two multiperplex ideals then $\eta = d_1 + \mathrm{i} d_2 \in (I_1 \cap I_2) \oplus \mathrm{i} (I_1 \cap I_2)$ if and only if
    \[ 
    \begin{array}{rcl}
        d_1, d_2 \in I_1 \cap I_2
        &\Leftrightarrow &  d_1, d_2 \in I_1 \text{ and } d_1, d_2 \in I_2\\*[2ex]
        &\Leftrightarrow &  d_1 + \mathrm{i} d_2 \in I_1 \oplus \mathrm{i} I_1 \text{ and } d_1 + \mathrm{i} d_2 \in I_2 \oplus \mathrm{i} I_2 \\*[2ex]
        &\Leftrightarrow & \eta \in (I_1 \oplus \mathrm{i} I_1) \cap (I_2 \oplus \mathrm{i} I_2).
        \end{array}
     \]
\end{proof}
Note that the inclusion between ideals is also preserved since $I_1 \subseteq I_2 \Leftrightarrow I_1 \cap I_2 = I_1$. From Theorem \ref{thm : complexification : oneToOneAndInclusion}, for all $I_1, I_2 \in  \mathcal{I}(\mathbb{D}_n)$,
\begin{equation}
    I_1 \subseteq I_2 \quad \Leftrightarrow \quad I_1 \oplus \mathrm{i} I_1 \subseteq I_2 \oplus \mathrm{i} I_2.
\end{equation}

\subsection{Translation to the multicomplex ring}

From Theorem \ref{thm : complexification : oneToOneAndInclusion}, we get the multicomplex minimal and maximal ideals from the complexification of $\mathbb{R} \canBasis_j$ and $H_j$.
\begin{equation}
    \begin{aligned}
        \mathbb{R} \canBasis_j & \rightarrow \mathbb{R} \canBasis_j \oplus \mathrm{i} \mathbb{R} \canBasis_j = \mathbb{C} \canBasis_j,\\ 
        H_j & \rightarrow H_j \oplus \mathrm{i} H_j.
    \end{aligned}
\end{equation}
The complexification of $H_j$ is itself a multicomplex hyperplane denoted $E_j$ where
\begin{equation}
    E_j = H_j \oplus \mathrm{i} H_j = \{ \eta \in \mathbb{M}_n \mid \eta \, \canBasis_j=0 \}.
\end{equation}
Indeed, for all $\eta = d_1 + \mathrm{i} d_2 \in \mathbb{M}_n$, $d_1\canBasis_j = 0$ and $d_2 \canBasis_j = 0$ if and only if $\eta \,\canBasis_j = 0$. In the same way that $\mathbb{R} \canBasis_j$ and $H_j$ can be seen as the building blocks of multiperplex ideals, a complete characterization of multicomplex ideals is done in terms of $\mathbb{C} \canBasis_j$ and $E_j$.
\begin{corollary}
    All multicomplex ideals are principal ideals of the form
    \[ \bigoplus_{j \in J} \mathbb{C} \canBasis_j = \bigcap_{k \in J^{\complement}} E_k \]
    generated by the elements $\displaystyle \sum_{j \in J} \canBasis_j$, where $\varnothing \neq J \subseteq \{ 1,\ldots,2^{n-1}\}$.
\end{corollary}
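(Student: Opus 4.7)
The plan is to combine the three principal results already established: the structural theorem for multiperplex ideals (Theorem \ref{thm : multiperplexIdeals : directSum}), its intersection-form counterpart (the proposition expressing $\bigoplus_{j \in J} \mathbb{R}\canBasis_j$ as $\bigcap_{k \in J^{\complement}} H_k$), and the complexification bijection between $\mathcal{I}(\mathbb{D}_n)$ and $\mathcal{I}(\mathbb{M}_n)$ from Theorem \ref{thm : complexification : oneToOneAndInclusion}. The corollary should follow essentially by transporting the multiperplex picture through complexification.

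First, I would fix a nontrivial multicomplex ideal $I_M \in \mathcal{I}(\mathbb{M}_n)$. By Lemma \ref{lemma : complexification : RealizationAsInverse}, $I_M = \mathcal{R}(I_M) \oplus \mathrm{i}\, \mathcal{R}(I_M)$, so $I_M$ is the complexification of the multiperplex ideal $I_D := \mathcal{R}(I_M)$. Applying Theorem \ref{thm : multiperplexIdeals : directSum} to $I_D$ produces a unique nonempty index set $J \subseteq \{1,\ldots,2^{n-1}\}$ such that $I_D = \bigoplus_{j\in J} \mathbb{R}\canBasis_j$. Because complexification distributes over direct sums (the real and imaginary parts being extracted componentwise in the canonical basis), I obtain
\[ I_M = \bigoplus_{j\in J}\bigl(\mathbb{R}\canBasis_j \oplus \mathrm{i}\,\mathbb{R}\canBasis_j\bigr) = \bigoplus_{j\in J}\mathbb{C}\canBasis_j. \]

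Next, for the intersection form, I would use the identity $\bigoplus_{j \in J}\mathbb{R}\canBasis_j = \bigcap_{k \in J^{\complement}} H_k$ at the multiperplex level, then complexify and invoke the intersection-preserving property from Theorem \ref{thm : complexification : oneToOneAndInclusion} to get $I_M = \bigcap_{k \in J^{\complement}}(H_k \oplus \mathrm{i} H_k) = \bigcap_{k \in J^{\complement}} E_k$, since $E_k = H_k \oplus \mathrm{i} H_k$ by definition. The edge case $J = \{1,\ldots,2^{n-1}\}$, for which $J^{\complement} = \varnothing$, is handled by the usual convention that an intersection over the empty index set equals the ambient ring $\mathbb{M}_n$, consistent with $\bigoplus_{j=1}^{2^{n-1}} \mathbb{C}\canBasis_j = \mathbb{M}_n$.

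Finally, to identify $\sum_{j \in J}\canBasis_j$ as a generator, I multiply: for any $\eta = \sum_{k=1}^{2^{n-1}} \eta_{\hat{k}}\canBasis_k \in \mathbb{M}_n$, property $(i)$ of Proposition \ref{prop : canonical : properties} gives
\[ \eta \sum_{j \in J}\canBasis_j = \sum_{j\in J} \eta_{\hat{\jmath}}\canBasis_j, \]
and as $\eta$ varies over $\mathbb{M}_n$ the coefficients $\eta_{\hat{\jmath}}$ independently range over $\mathbb{C}$, so the principal ideal $\mathbb{M}_n \bigl(\sum_{j\in J}\canBasis_j\bigr)$ coincides with $\bigoplus_{j\in J}\mathbb{C}\canBasis_j$. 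There is no real obstacle here; the work has been done in the earlier lemmas, and the only point requiring a small amount of care is verifying that complexification commutes with both the direct sum and intersection operations on ideals — both of which follow directly from writing elements in the form $d_1 + \mathrm{i} d_2$ with $d_1, d_2 \in \mathbb{D}_n$.
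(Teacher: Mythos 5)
Your argument is correct and follows exactly the route the paper intends for this corollary: transport the multiperplex classification (Theorem \ref{thm : multiperplexIdeals : directSum} and the intersection identity) through the complexification bijection of Theorem \ref{thm : complexification : oneToOneAndInclusion}, using Lemma \ref{lemma : complexification : RealizationAsInverse} to realize every multicomplex ideal as a complexification, and then verify the generator by the same computation as Lemma \ref{lemma : multiperplexIdeals : directIsIdeal}. The paper leaves these steps implicit, so your write-up simply makes the intended proof explicit; no differences worth noting.
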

Taking any multicomplex ideal $I_J$ with a given subset of indices $J$, the corresponding quotient ring $\mathbb{M}_n/I_J$ is written as
\[ \mathbb{M}_n/I_J = \{ \eta + I_J \mid \eta \in \mathbb{M}_n \} = \{ \zeta + I_J \mid \zeta \in I_{J^{\complement}} \} \simeq I_{J^{\complement}}. \]
The last part comes from the canonical representation which let us separate $\eta$ into two distinct sums : one contained in $I_J$ and the other one in $I_{J^{\complement}}$.

\section{Free $\mathbb{M}_n$-module}

\subsection{Bases and subspaces}
From \cite{rotman2002advanced}, any free module over a commutative ring has a well-defined rank. As $\mathbb{M}_n$ is a commutative algebra, any two basis sets of a free $\mathbb{M}_n$-module have the same (fixed) cardinality $m$. 

% It is well known that a basis of a $\mathbb{M}_n$-module is a set of linearly independent elements generating the module. Moreover, a finite-dimensional free $\mathbb{M}_n$-module is a $\mathbb{M}_n$-module with a finite basis. 

Let $W$ be a free $\mathbb{M}_n$-module and $\{ \ket{w_l} \}_{l=1}^m$ a basis made up of $m < \infty$ elements. This describes the finite-dimensional case, and an element $\ket{\psi}$ of $W$ is written as a linear combination of the basis elements :
\begin{equation}
    \label{eq:generalKetOfModule}
    \ket{\psi} = \sum_{l=1}^m \eta_l \ket{w_l}, \qquad \eta_l \in \mathbb{M}_n.
\end{equation}
An important subset $V \subset W$ is the set of all elements with coefficients restricted to the field of complex numbers 
\begin{equation}
\label{defV}
    V := \Big\{ \sum_{l=1}^m z_l \ket{w_l} \mid z_l \in \mathbb{C} \Big\}.
\end{equation}
This space $V$ is a $m$-dimensional vector space over $\mathbb{C}$, see \cite{BCQM1}. We can use the canonical idempotent representation of multicomplex numbers to further develop the expression of any $\ket{\psi}$ in $W$. For each $\eta_l\in \mathbb M_n$ we have :
\begin{equation}
    \eta_l = \sum_{k=1}^{2^{n-1}} z_{l,\hat{k}} \canBasis_k, \qquad z_{l,\hat{k}} \in \mathbb{C}.
\end{equation}
Substituting in (\ref{eq:generalKetOfModule}), we get
\begin{equation}
    \label{eq:ketDoubleSum}
    \ket{\psi} = \sum_{l=1}^m \Big( \sum_{k=1}^{2^{n-1}} z_{l,\hat{k}} \canBasis_k \Big) \ket{w_l} = \sum_{k=1}^{2^{n-1}} \canBasis_k \sum_{l=1}^m z_{l,\hat{k}} \ket{w_l}
\end{equation}
and we set
\begin{equation}
    \ket{\psi}_{\hat{k}} := \sum_{l=1}^m z_{l,\hat{k}} \ket{w_l}.
\end{equation}
Hence, for a given basis of $W$ any element $\ket{\psi}\in W$  can be written uniquely as
\begin{equation}
    \ket{\psi} = \sum_{k=1}^{2^{n-1}} \canBasis_k \ket{\psi}_{\hat{k}},\qquad \ket{\psi}_{\hat{k}} \in V.
\end{equation}
From that representation of any element $\ket{\psi}\in W$, we can define the ket projector $P_j:W\rightarrow V$ as
\begin{equation}
    \label{ketproj}
    P_j\ket{\psi} := \ket{\psi}_{\hat{\jmath}},\qquad j=1,\ldots,2^{n-1}.
\end{equation}
\begin{remark}
   Without ambiguity, we use the same notation $P_j$ both for the multicomplex projector $P_j:\mathbb M_n\rightarrow \mathbb C$ defined in (\ref{canonicalrep}) and the module projector $P_j:W\rightarrow V$ defined in (\ref{ketproj}). 
\end{remark}
For any $\ket{\psi}, \ket{\phi} \in W$ and $\alpha\in \mathbb M_n$, one can show that
\begin{equation}
    \begin{aligned}
        P_j(\ket{\psi} +\ket{\phi})&=P_j\ket{\psi}+P_j\ket{\phi}=\ket{\psi}_{\hat\jmath}+\ket{\phi}_{\hat\jmath}, \\*[2ex]
        P_j(\alpha\ket{\psi})&=P_j(\alpha) P_j\ket{\psi}=\alpha_{\hat\jmath} \ket{\psi}_{\hat\jmath},
    \end{aligned}
\end{equation}
 where $P_j : W \rightarrow V$ is the multicomplex projector. 

 \begin{remark}
    The definitions of $V$ in (\ref{defV}) and the module projector $P_j$ in (\ref{ketproj}) depend on the choice of the basis $\{ \ket{w_l} \}_{k=1}^m$ since each $\ket{w_l}$ could be expanded in a new basis with multicomplex coefficients. The elements of $V$ or any projected ket $\ket{\psi}_{\hat{\jmath}}$ would not necessarily have complex coefficients when written in this new basis.
\end{remark}

 We say that a ket $\ket{\psi}$ is in the {\em null cone} of $W$ if for at least one $1\leq j\leq 2^{n-1}$, $\ket{\psi}_{\hat{\jmath}} = 0$. Using that framework, we now turn our attention to the properties of an arbitrary basis of $W$.
\begin{theorem}
    No basis elements of a free $\mathbb{M}_n$-module can belong to the null cone.
\end{theorem}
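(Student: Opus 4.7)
The plan is to argue by contradiction, converting a vanishing projection of a supposed basis element into a non-trivial $\mathbb M_n$-linear relation among the basis vectors, which then collides with the very definition of a basis.

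First, I would isolate a simple identity that drives the whole argument: for any $\ket{\psi} \in W$, the canonical idempotent representation $\ket{\psi} = \sum_{k=1}^{2^{n-1}} \canBasis_k \ket{\psi}_{\hat{k}}$ combined with the orthogonality $\canBasis_j \canBasis_k = \delta_{jk}\canBasis_j$ from Proposition~\ref{prop : canonical : properties}(i) yields
\[
    \canBasis_j \ket{\psi} \;=\; \canBasis_j \ket{\psi}_{\hat{\jmath}}.
\]
Consequently, whenever a ket satisfies $\ket{\psi}_{\hat{\jmath}} = 0$ for some $j$, multiplying by $\canBasis_j$ annihilates $\ket{\psi}$.

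Next, I would suppose toward a contradiction that some basis vector $\ket{w_{l_0}}$ of $\{\ket{w_l}\}_{l=1}^m$ belongs to the null cone, so that $\ket{w_{l_0}}_{\hat{\jmath}} = 0$ for at least one $j \in \{1,\dots,2^{n-1}\}$. The identity above then forces $\canBasis_j \ket{w_{l_0}} = 0$. But this is a non-trivial $\mathbb M_n$-linear combination of the basis vectors: its coefficient on $\ket{w_{l_0}}$ is the non-zero idempotent $\canBasis_j \in \mathbb M_n$, while every other coefficient vanishes. This contradicts the $\mathbb M_n$-linear independence of $\{\ket{w_l}\}_{l=1}^m$, and the theorem follows.

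I do not anticipate any real obstacle; the only point that requires a moment of thought is the identity $\canBasis_j \ket{\psi} = \canBasis_j \ket{\psi}_{\hat{\jmath}}$, and once it is in hand the result reduces to the definition of a free module basis together with the non-triviality of each $\canBasis_j$.
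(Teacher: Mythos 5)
Your argument is correct and is essentially the paper's own proof: both derive $\canBasis_j \ket{w_{l_0}} = \canBasis_j \ket{w_{l_0}}_{\hat{\jmath}} = 0$ from the vanishing projection and read this as a non-trivial $\mathbb{M}_n$-linear relation contradicting linear independence. Your write-up is if anything slightly more explicit than the paper's in spelling out why the relation is non-trivial (the coefficient $\canBasis_j$ is a non-zero element of $\mathbb{M}_n$).
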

\begin{proof}
    Let $\ket{w_p}$ be an element of a basis of $W$. We can write 
    \[ \ket{w_p} = \sum_{k=1}^{2^{n-1}} \canBasis_k \ket{w_p}_{\hat{k}}. \]
    If we suppose that $\ket{w_p}$ is in the null cone, then $\ket{w_p}_{\hat{l}} = 0$ for at least one $\hat{l}$. This implies 
    \[ \canBasis_l \ket{w_p} = \canBasis_l \ket{w_p}_{\hat{l}} = 0, \]
    but this last equation contradicts linear independence of the basis.
\end{proof}

For a given basis $\{w_l\}_{l=1}^{m}$ of $W$, let  us now define
    \[ 
\varepsilon_k V:=\big\{ \canBasis_k \sum_{l=1}^m z_l \ket{w_l} \mid z_l \in \mathbb{C} \big\},\qquad k = 1,\ldots,2^{n-1}. 
\]

Trivially we see that any $\varepsilon_k V$ is an $m$-dimensional vector space over $\mathbb{C}$ isomorphic to $V$, where $\{\canBasis_k \ket{w_l}\}_{l=1}^{m}$ is a basis. From (\ref{eq:ketDoubleSum}) we see that the $2^{n-1}\cdot m$ elements $\canBasis_k \ket{w_l}$ are linearly independent over $\mathbb{C}$. Indeed, $\ket{\psi} = 0$ if and only if $z_{l,\hat{k}} = 0$ for all $l,k$, which implies the uniqueness of that representation. Moreover, each term in the summation over $k$ belongs to $\canBasis_k V$. These two statements lead us to the next theorem.
\begin{theorem}
    \label{thm : basesSubspaces : directSum}
    An $m$-dimensional $\mathbb{M}_n$-module $W$ is a $(2^{n-1}\,m)$-dimensional vector space over $\mathbb{C}$ where
    \[ W = \canBasis_1 V \oplus \canBasis_2 V \oplus \cdots \oplus \canBasis_{2^{n-1}} V. \]
\end{theorem}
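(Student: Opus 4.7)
The plan is to verify two things: that the collection $\{\canBasis_k \ket{w_l} : 1 \leq k \leq 2^{n-1},\, 1 \leq l \leq m\}$ is a $\mathbb{C}$-basis of $W$, and that the decomposition $W = \canBasis_1 V + \cdots + \canBasis_{2^{n-1}} V$ is an internal direct sum of $\mathbb{C}$-subspaces. The first gives the dimension count $2^{n-1}m$ immediately by cardinality, and the second is the structural claim.

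First I would note that spanning is already done: equation (\ref{eq:ketDoubleSum}) shows that every $\ket{\psi} \in W$ can be written as $\ket{\psi} = \sum_{k=1}^{2^{n-1}} \sum_{l=1}^{m} z_{l,\hat{k}} \canBasis_k \ket{w_l}$ with $z_{l,\hat{k}} \in \mathbb{C}$, so both $W = \sum_k \canBasis_k V$ and $\mathrm{span}_{\mathbb{C}} \{\canBasis_k \ket{w_l}\} = W$ hold. Linear independence of the $2^{n-1}m$ vectors $\canBasis_k \ket{w_l}$ over $\mathbb{C}$ is explicitly asserted in the paragraph just before the theorem statement, following from uniqueness of the coefficients $z_{l,\hat{k}}$; this establishes that $W$ has $\mathbb{C}$-dimension $2^{n-1}m$.

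For the direct-sum assertion, I would suppose $\sum_{k=1}^{2^{n-1}} \canBasis_k \ket{v_k} = 0$ with each $\ket{v_k} \in V$ and show $\canBasis_k \ket{v_k} = 0$ for every $k$. Fix $j$ and multiply the equation by $\canBasis_j$; by Proposition \ref{prop : canonical : properties}(i), $\canBasis_j \canBasis_k = \delta_{jk} \canBasis_j$, so all terms vanish except the $k=j$ term, leaving $\canBasis_j \ket{v_j} = 0$. Writing $\ket{v_j} = \sum_{l=1}^m z_l \ket{w_l}$ with $z_l \in \mathbb{C}$, this becomes $\sum_{l} z_l (\canBasis_j \ket{w_l}) = 0$, and the $\mathbb{C}$-linear independence of $\{\canBasis_j \ket{w_l}\}_{l=1}^m$ forces $z_l = 0$ for all $l$, hence $\canBasis_j \ket{v_j} = 0$. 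This is precisely what is needed for the sum $\canBasis_1 V \oplus \cdots \oplus \canBasis_{2^{n-1}} V$ to be direct.

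There is no substantial obstacle here; the proof is almost entirely bookkeeping once one has the idempotent orthogonality relation and the uniqueness of coefficients in the canonical representation. The only mild subtlety worth flagging is that $\canBasis_k V$ is not itself a sub-$\mathbb{M}_n$-module (since multiplication by $\canBasis_k$ kills most of $V$), so one must be careful to treat $W$ as a $\mathbb{C}$-vector space throughout this argument rather than as an $\mathbb{M}_n$-module; the decomposition is a decomposition over $\mathbb{C}$, compatible with the fact that each $\canBasis_k V$ inherits a $\mathbb{C}$-vector space structure from $V$ via the isomorphism noted in the paragraph preceding the theorem.
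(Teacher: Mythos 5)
Your proposal is correct and follows essentially the same route as the paper, which justifies the theorem by the preceding paragraph: spanning via equation (\ref{eq:ketDoubleSum}), $\mathbb{C}$-linear independence of the $2^{n-1}m$ elements $\canBasis_k\ket{w_l}$ from uniqueness of the coefficients $z_{l,\hat{k}}$, and the observation that each summand lies in $\canBasis_k V$. Your explicit verification of directness by multiplying by $\canBasis_j$ and invoking Proposition~\ref{prop : canonical : properties}(i) is just a slightly more detailed rendering of the paper's appeal to uniqueness of the representation, and your closing remark that $\canBasis_k V$ is only a $\mathbb{C}$-subspace (not an $\mathbb{M}_n$-submodule) is a fair point left implicit in the paper.
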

% In general two bases of a module do not necessarily have the same number of elements, but in the case of an $\mathbb{M}_n$-module we have the following theorem.
% \begin{theorem}
%     \label{thm : sameDimBases}
%     Let $W$ be an $m$-dimensional free $\mathbb{M}_n$-module. Then all bases of $W$ have the same number of elements.
% \end{theorem}
% \begin{proof}
%     Let $B=\left\{ \ket{w_1},\ket{w_2},\ldots,\ket{w_m}\right\}$ and $B'=\left\{ \ket{w'_1},\ket{w'_2},\ldots,\ket{w'_{m'}}\right\}$ be two bases for $m,m'\in \mathbb Z_{>0}$. We find
%     \[ \begin{array}{r c l}
%         W & = & \Big\{ \displaystyle\sum_{p=1}^{m'} \eta_p \ket{w'_p} \mid \eta_p \in \mathbb{M}_n \Big\} =  \Big\{ \sum_{p=1}^{m'} \Big( \sum_{k=1}^{2^{n-1}} P_k(\eta_p) \canBasis_k \Big) \ket{w'_p} \mid \eta_p \in \mathbb{M}_n \Big\}\\*[3ex]
%         & = & \Big\{ \displaystyle\sum_{k=1}^{2^{n-1}} \displaystyle\sum_{p=1}^{m'}  P_k(\eta_p) \canBasis_k \ket{w'_p} \mid \eta_p \in \mathbb{M}_n \Big\}.\\
%     \end{array} \]
%     We have $P_k(\eta_p) \in \mathbb{C}$ for all $k$ and $\{\canBasis_k \ket{w'_1},\canBasis_k \ket{w'_2},\ldots, \canBasis_k \ket{w'_{m'}}\}$ is a basis of $\canBasis_k V$. But
%     $\dim(\canBasis_k V) = \dim V = m$ for all $k$ such that $m' = m$.
% \end{proof}
\begin{theorem}
Let $P_j:W\rightarrow V$ be a projector for a given $1\leq j\leq 2^{n-1}$ and $V$ the subspace defined with respect to the basis $\{ \ket{w_l} \}_{l=1}^m$ of $W$. If $\{ \ket{s_l} \}_{l=1}^m$ is another basis of $W$, then $\{ P_j \ket{s_l} \}_{l=1}^m$ is a basis of $V$.
\end{theorem}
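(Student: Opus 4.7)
The plan is to reduce everything to a change-of-basis matrix argument in the complex $m$-dimensional subspace $V$, using the fact that the multicomplex projectors $P_j$ are ring homomorphisms and that elements of $\mathbb{M}_n$ are invertible iff all their projections are nonzero.

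First I would expand each $\ket{s_l}$ in the basis $\{\ket{w_p}\}$: write
\[
\ket{s_l} = \sum_{p=1}^{m} A_{pl}\,\ket{w_p}, \qquad A_{pl}\in\mathbb{M}_n.
\]
Because $\{\ket{s_l}\}$ and $\{\ket{w_l}\}$ are both bases of the free $\mathbb{M}_n$-module $W$ of rank $m$, the inverse expansion $\ket{w_l}=\sum_p B_{pl}\ket{s_p}$ also exists, and substituting one into the other and invoking the linear independence of a basis gives $BA=AB=I_m$ in the matrix ring $M_m(\mathbb{M}_n)$. Hence the matrix $A=[A_{pl}]$ is invertible over the commutative ring $\mathbb{M}_n$, which in a commutative ring is equivalent to $\det A$ being a unit of $\mathbb{M}_n$.

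Next I would apply $P_j$ to the expansion of $\ket{s_l}$. Observing that each basis vector $\ket{w_p}$ lies in $V$ (it has complex, indeed real, coefficients in the basis $\{\ket{w_l}\}$), we have $P_j\ket{w_p}=\ket{w_p}$; using the identity $P_j(\alpha\ket{\phi})=P_j(\alpha)P_j\ket{\phi}$ recorded just before the theorem, this yields
\[
P_j\ket{s_l} = \sum_{p=1}^{m} (A_{pl})_{\hat{\jmath}}\,\ket{w_p}.
\]
Thus $\{P_j\ket{s_l}\}_{l=1}^m$ is obtained from the $\mathbb{C}$-basis $\{\ket{w_p}\}_{p=1}^m$ of $V$ by the complex matrix $P_j(A):=[(A_{pl})_{\hat{\jmath}}]$.

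The main step is therefore to show that the complex matrix $P_j(A)$ is invertible. Since $P_j:\mathbb{M}_n\to\mathbb{C}$ is a ring homomorphism, it extends entrywise to a ring homomorphism $M_m(\mathbb{M}_n)\to M_m(\mathbb{C})$ and commutes with the determinant, so $\det P_j(A)=P_j(\det A)=(\det A)_{\hat{\jmath}}$. Because $\det A$ is a unit of $\mathbb{M}_n$, none of its projections vanish (by the characterization of invertible multicomplex numbers recalled after the projector definition), in particular $(\det A)_{\hat{\jmath}}\neq 0$. Hence $P_j(A)\in GL_m(\mathbb{C})$, so $\{P_j\ket{s_l}\}_{l=1}^m$ is a $\mathbb{C}$-linear image of a $\mathbb{C}$-basis of $V$ under an invertible matrix, and is therefore itself a basis of $V$. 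The only delicate point in this plan is verifying that the change-of-basis matrix is invertible in $M_m(\mathbb{M}_n)$ — that is, confirming the invariant basis number property and the equivalence of matrix invertibility with invertibility of the determinant for the commutative ring $\mathbb{M}_n$, both of which follow from standard commutative algebra as cited in the preceding subsection.
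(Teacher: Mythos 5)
Your proof is correct, but it takes a genuinely different route from the paper's. The paper argues directly: for linear independence it takes a complex relation $\sum_l \alpha_l P_j\ket{s_l}=0$, lifts it to $W$ by setting $\beta_l:=\alpha_l\canBasis_j$ (so that $P_k(\beta_l)=\delta_{kj}\alpha_l$), deduces $\sum_l\beta_l\ket{s_l}=0$ from the vanishing of every projection, and concludes $\alpha_l=0$ from linear independence of $\{\ket{s_l}\}$ over $\mathbb{M}_n$; for spanning it lifts a target $\ket{\phi}\in V$ to $\canBasis_j\ket{\phi}\in W$, expands it in $\{\ket{s_l}\}$, and projects back. Your argument instead packages everything into the change-of-basis matrix $A\in GL_m(\mathbb{M}_n)$ and the observation that $\det P_j(A)=P_j(\det A)\neq 0$ because the projections of a unit never vanish. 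Both are sound. The paper's version is more elementary and self-contained at that point in the text, since the multicomplex determinant theory (multilinearity of $\det$, the equivalence of invertibility with the determinant not lying in the null cone) is only developed in the \emph{following} subsection; your version front-loads that standard commutative algebra but in exchange makes the structural reason transparent and, as a bonus, your formula $\widetilde{z}=P_j(A)z$ essentially also yields the proposition that follows the theorem (basis-independence of the vanishing of the $j$-th projection). One small point worth making explicit in your write-up: $\{\ket{w_p}\}_{p=1}^m$ is a $\mathbb{C}$-basis of $V$ because $\mathbb{M}_n$-linear independence implies $\mathbb{C}$-linear independence; with that noted, the invertibility of $P_j(A)$ over $\mathbb{C}$ does finish the proof.
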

\begin{proof}
    We first show that the kets $P_j\ket{s_1},P_j\ket{s_2},\ldots,P_j\ket{s_m}$ are linearly independent for any fixed value $j$. Let $\alpha_l \in \mathbb{C}$ for $l=1,\ldots,m$ and
    \[ \sum_{l=1}^m \alpha_l P_j\ket{s_l} = 0. \]
    By defining $\beta_l := \alpha_l \canBasis_j$, we obtain $P_k(\beta_l)=\delta_{kj}\alpha_l$. From which we get
    \[ \begin{array}{l}
         P_k\Big( \displaystyle \sum_{l=1}^m \beta_l \ket{s_l} \Big) = \displaystyle \sum_{l=1}^m P_k(\beta_l) P_k\ket{s_l}=\displaystyle \sum_{l=1}^m \delta_{kj} \alpha_l P_k\ket{s_l}=\displaystyle \sum_{l=1}^m \alpha_l P_j\ket{s_l}=0.
    \end{array} \]
    Since this last equation is valid for every $k=1,\ldots,2^{n-1}$, we have $\sum_{l=1}^m \beta_l \ket{s_l}=0$. The set $\{ \ket{s_l} \}_{l=1}^m$ being a basis of $W$, for $l=1,\ldots, m$ we have $\beta_l=0$ such that $\alpha_l = 0$  and $\{ P_j\ket{s_l} \}_{l=1}^m$ is a linearly independent set. 
    
    We now show that this set span $V$. Let $\ket{\phi} \in V$ and consider the ket
    \[ \ket{\psi} = \canBasis_j \ket{\phi} \in W. \]
    Since the multicomplex span of $\{ \ket{s_l} \}_{l=1}^m$ is $W$, there exists $\beta_l \in \mathbb{M}_n$ such that $\sum_{l=1}^m \beta_l \ket{s_l} = \ket{\psi}.$
    Therefore,
    \[ \ket{\phi} = P_j\ket{\psi} = P_j \big( \sum_{l=1}^m \beta_l \ket{s_l} \big) = \sum_{l=1}^m P_j(\beta_l) P_j\ket{s_l}. \]
    Thus the complex span of $\{ P_j\ket{s_l} \}_{l=1}^m$ is $V$.
\end{proof}
\begin{proposition}
Let $\ket{\psi}\in W$ and $1\leq j\leq 2^{n-1}$. If $\ket{\psi}_{\hat{\jmath}}=P_j\ket{\psi}=0$ for a given basis of $W$, then the $j$-th projection of $\ket{\psi}$ is zero for any other basis of $W$.
\end{proposition}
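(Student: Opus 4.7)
The plan is to show that the condition $P_j\ket{\psi}=0$ is equivalent to an intrinsic, basis-independent condition on $\ket{\psi}$, namely $\canBasis_j\ket{\psi}=0$ (where the product on the left is the module action of $\canBasis_j\in\mathbb M_n$ on $\ket{\psi}\in W$). Once that equivalence is established, basis-independence is automatic, since $\canBasis_j\ket{\psi}=0$ is a statement about the module element $\canBasis_j\ket{\psi}$ alone and makes no reference to a chosen basis.

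First I would verify the easy direction. Starting from the canonical decomposition $\ket{\psi}=\sum_{k=1}^{2^{n-1}}\canBasis_k\ket{\psi}_{\hat k}$ relative to the basis $\{\ket{w_l}\}_{l=1}^m$, I multiply by $\canBasis_j$ and apply Proposition \ref{prop : canonical : properties}(i) to obtain $\canBasis_j\ket{\psi}=\canBasis_j\ket{\psi}_{\hat\jmath}$. Hence if $\ket{\psi}_{\hat\jmath}=0$, then $\canBasis_j\ket{\psi}=0$.

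The converse is where linear independence over $\mathbb C$ comes in. Expanding $\ket{\psi}=\sum_l\eta_l\ket{w_l}$ with $\eta_l=\sum_k z_{l,\hat k}\canBasis_k$, the relation $\canBasis_j\ket{\psi}=0$ becomes $\sum_{l=1}^m z_{l,\hat\jmath}\,\canBasis_j\ket{w_l}=0$. Theorem \ref{thm : basesSubspaces : directSum} guarantees that the $2^{n-1}m$ elements $\{\canBasis_k\ket{w_l}\}$ are linearly independent over $\mathbb C$, in particular the $m$ vectors $\{\canBasis_j\ket{w_l}\}_{l=1}^m$ are. Therefore every $z_{l,\hat\jmath}$ vanishes, which means $\ket{\psi}_{\hat\jmath}=\sum_l z_{l,\hat\jmath}\ket{w_l}=0$. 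This establishes the biconditional
\[
P_j\ket{\psi}=0\quad\Longleftrightarrow\quad \canBasis_j\ket{\psi}=0.
\]

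To conclude, take any second basis $\{\ket{s_l}\}_{l=1}^m$ with associated subspace $V'$ and projector $P_j'$. The argument above, applied verbatim to this basis, gives $P_j'\ket{\psi}=0\Leftrightarrow \canBasis_j\ket{\psi}=0$. Since the right-hand side is the same condition in both cases, $P_j\ket{\psi}=0$ forces $P_j'\ket{\psi}=0$. No step is really an obstacle here; the only subtlety is recognizing that one must use Theorem \ref{thm : basesSubspaces : directSum} (and not merely the definition of $V$) to deduce $\ket{\psi}_{\hat\jmath}=0$ from $\canBasis_j\ket{\psi}_{\hat\jmath}=0$, since $\canBasis_j$ is itself a zero divisor.
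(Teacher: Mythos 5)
Your proof is correct, but it takes a genuinely different route from the paper's. The paper argues by an explicit change of basis: it writes $\ket{w_l}=\sum_p\gamma_{lp}\ket{\widetilde w_p}$ for a nonsingular multicomplex matrix $(\gamma_{lp})$, computes the new coordinates $\widetilde z_{p,\hat\jmath}$ as linear combinations of the old ones $z_{l,\hat\jmath}$, and concludes that the vanishing of all $z_{l,\hat\jmath}$ forces the vanishing of all $\widetilde z_{p,\hat\jmath}$. You instead identify the basis-free condition $\canBasis_j\ket{\psi}=0$ and prove it equivalent to $P_j\ket{\psi}=0$ in any basis, using $\canBasis_j\ket{\psi}=\canBasis_j\ket{\psi}_{\hat\jmath}$ for one direction and the $\mathbb{C}$-linear independence of $\{\canBasis_j\ket{w_l}\}_{l=1}^m$ (established just before Theorem \ref{thm : basesSubspaces : directSum}, rather than in the theorem's statement itself --- a minor attribution slip) for the other. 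Your version is the module analogue of the scalar equivalence $\eta_{\hat\jmath}=0\Leftrightarrow\canBasis_j\eta=0$ already noted in the paper, and it buys a cleaner argument: it avoids the change-of-basis matrix entirely (together with the implicit need to project the multicomplex entries $\gamma_{lp}$ onto their $j$-th components to get genuinely complex coordinates $\widetilde z_{p,\hat\jmath}$, a point the paper glosses over), and it makes the symmetry between the two bases automatic. The paper's computation, on the other hand, is more explicit about how coordinates transform, which is information one may want elsewhere. You were also right to flag that deducing $\ket{\psi}_{\hat\jmath}=0$ from $\canBasis_j\ket{\psi}_{\hat\jmath}=0$ requires linear independence and not just the definition of $V$, since $\canBasis_j$ is a zero divisor; that is exactly the point where a careless version of your argument would fail.
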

\begin{proof}
For a given basis $\{\ket{w_l}\}_{l=1}^m$ of $W$, $\ket{\psi}$ can be written as $\ket{\psi}=\sum_{k=1}^{2^{n-1}}\canBasis_k \ket{\psi}_{\hat k}$, where $\ket{\psi}_{\hat k}=\sum_{l=1}^m z_{l,\hat k}\ket{w_l}$ for $z_{l,\hat k}\in \mathbb C$. If $\ket{\psi}_{\hat\jmath}=P_j\ket{\psi}=0$ in this basis then $z_{1,\hat{\jmath}}=\cdots =z_{m,\hat{\jmath}}=0$. Suppose now a second basis $\{\ket{\widetilde w_p}\}_{p=1}^m$ of $W$. In a similar way we find $\ket{\psi}=\sum_{k=1}^{2^{n-1}}\canBasis_k \ket{\widetilde\psi}_{\hat k}$, where $\ket{\widetilde\psi}_{\hat k}=\sum_{l=1}^m \widetilde z_{l,\hat k}\ket{\widetilde w_l}$ for $\widetilde z_{l,\hat k}\in \mathbb C$. We want to show that
\begin{equation}
\label{matrix2bases}
\ket{\widetilde\psi}_{\hat\jmath}=\sum_{l=1}^m \widetilde z_{l,\hat \jmath}\ket{\widetilde w_l}=0 \qquad \Leftrightarrow \qquad z_{1,\hat{\jmath}}=\cdots =z_{m,\hat{\jmath}}=0.
\end{equation}

The two bases are related by a  multicomplex nonsingular matrix $(\gamma_{lp})$ \cite{FiniteHilbert} such that
$$
\ket{w_l}=\sum_{p=1}^m \gamma_{lp} \ket{\widetilde w_p},\qquad \gamma_{lp}\in \mathbb M_n.
$$
and $\ket{\psi}_{\hat k}$ can be expressed as
$$
\ket{\psi}_{\hat k}=\sum_{l=1}^m z_{l,\hat k}\sum_{p=1}^m \gamma_{lp}\ket{\widetilde w_p}=\sum_{p=1}^m\Big(\sum_{l=1}^m z_{l,\hat k}\gamma_{lp}\Big)\ket{\widetilde w_p}.
$$
Therefore, we have $\widetilde z_{p,\hat k}=\sum_{l=1}^m z_{l,\hat k}\gamma_{lp}$ and, in particular, we find that (\ref{matrix2bases}) is satisfied.
\end{proof}

\subsection{Multicomplex matrices and determinants}

A multicomplex $m \times m$ square matrix $A$ is an array of $m^2$ multicomplex numbers $A_{ij}$. Each element can be written as
\begin{equation}
    A_{ij} = \sum_{k=1}^{2^{n-1}} a_{ij\hat{k}} \canBasis_k,\qquad a_{ij\hat{k}} \in \mathbb{C}.
\end{equation}
The matrix itself is then
\begin{equation}
    A = (A_{ij})_{m \times m} = \Big( \sum_{k=1}^{2^{n-1}} a_{ij\hat{k}} \canBasis_k \Big)_{m \times m} = \sum_{k=1}^{2^{n-1}} \canBasis_k (a_{ij\hat{k}})_{m \times m}.
\end{equation}
Applying the projection operator $P_{l}$ on $A$ gives the $m^2$ associated complex matrix:
\begin{equation}
    A_{\hat{l}} = P_l A = (a_{ij\hat{l}})_{m \times m}.
\end{equation}
Thus any multicomplex square matrix $A$ can be written as
\begin{equation}
    A = \sum_{k=1}^{2^{n-1}} \canBasis_k A_{\hat{k}}, \qquad A_{\hat{k}} \in M_{m \times m}(\mathbb{C}).
\end{equation}
\begin{theorem}
    Let $A = \displaystyle\sum_k \canBasis_k A_{\hat{k}}$ be an $m \times m$ multicomplex matrix. Then
    \[ \det A = \sum_{k=1}^{2^{n-1}} \canBasis_k \det A_{\hat{k}}. \]
\end{theorem}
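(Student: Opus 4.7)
The plan is to use the Leibniz formula for the determinant together with the orthogonality of the idempotents $\canBasis_k$ from Proposition~\ref{prop : canonical : properties}. Writing
\[
\det A = \sum_{\sigma \in S_m} \mathrm{sgn}(\sigma) \prod_{i=1}^{m} A_{i,\sigma(i)},
\]
I would substitute the canonical idempotent expansion $A_{i,\sigma(i)} = \sum_{k=1}^{2^{n-1}} a_{i,\sigma(i),\hat{k}}\,\canBasis_k$ into each factor and expand each product of sums over $i$ as a sum over multi-indices $(k_1,\ldots,k_m)\in \{1,\ldots,2^{n-1}\}^m$. This yields
\[
\prod_{i=1}^{m} A_{i,\sigma(i)} = \sum_{(k_1,\ldots,k_m)} \Big(\prod_{i=1}^{m} a_{i,\sigma(i),\hat{k_i}}\Big)\,\canBasis_{k_1}\canBasis_{k_2}\cdots \canBasis_{k_m}.
\]

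The crucial collapse step is to apply property (i) of Proposition~\ref{prop : canonical : properties}, namely $\canBasis_k \canBasis_l = \delta_{kl}\canBasis_l$, iteratively. This forces $\canBasis_{k_1}\canBasis_{k_2}\cdots \canBasis_{k_m} = 0$ unless $k_1 = k_2 = \cdots = k_m = k$, in which case the product reduces to $\canBasis_k$. So only the ``diagonal'' multi-indices survive, and after reindexing
\[
\prod_{i=1}^{m} A_{i,\sigma(i)} = \sum_{k=1}^{2^{n-1}} \canBasis_k \prod_{i=1}^{m} a_{i,\sigma(i),\hat{k}}.
\]

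Summing over $\sigma$ with the appropriate signs and swapping the order of summation then gives
\[
\det A = \sum_{k=1}^{2^{n-1}} \canBasis_k \sum_{\sigma \in S_m} \mathrm{sgn}(\sigma) \prod_{i=1}^{m} a_{i,\sigma(i),\hat{k}} = \sum_{k=1}^{2^{n-1}} \canBasis_k \det A_{\hat{k}},
\]
since the inner sum is precisely the Leibniz formula for $\det A_{\hat{k}}$. There is no serious obstacle; the proof is essentially a bookkeeping exercise, and the only point that requires care is justifying the iterated collapse of $\canBasis_{k_1}\cdots \canBasis_{k_m}$, which is an immediate induction from the binary relation $\canBasis_k \canBasis_l = \delta_{kl}\canBasis_l$. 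A cleaner alternative, if one prefers to avoid expanding the Leibniz formula by hand, is to observe that each projection $P_\ell : \mathbb{M}_n \to \mathbb{C}$ is a ring homomorphism and therefore commutes with the determinant (a polynomial in the entries), so $P_\ell(\det A) = \det A_{\hat\ell}$; then reconstructing $\det A = \sum_\ell \canBasis_\ell\, P_\ell(\det A)$ via property (iii) immediately gives the desired identity.
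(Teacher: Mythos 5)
Your main argument is correct and is essentially the paper's proof in a different guise: the paper expands the determinant column by column using multilinearity, writing each column as $C_i=\sum_k \canBasis_k C_i^{(k)}$, while you expand entry by entry via the Leibniz formula; in both cases the cross terms die by the iterated collapse $\canBasis_{k_1}\cdots\canBasis_{k_m}=\delta_{k_1k_2}\cdots\delta_{k_{m-1}k_m}\canBasis_{k_1}$, which is exactly property (i) of Proposition~\ref{prop : canonical : properties} applied inductively. The one genuinely different route is the alternative you mention in passing: since each projection $P_\ell:\mathbb{M}_n\to\mathbb{C}$ is a ring homomorphism and the determinant is a polynomial in the entries, $P_\ell(\det A)=\det A_{\hat\ell}$, and property (iii) reconstructs $\det A=\sum_\ell \canBasis_\ell P_\ell(\det A)$; this is cleaner than either expansion, avoids all bookkeeping, and generalizes immediately to any polynomial identity in matrix entries, whereas the paper's multilinearity argument has the minor advantage of staying entirely within the determinant-as-multilinear-form viewpoint it uses elsewhere.
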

\begin{proof}
    Let $\{C_i\}_{i=1}^m$ be the set of columns of $A$, so that $A = (C_1, C_2, \ldots, C_m)$. We can write the $i$th column 
    \[ C_i = \sum_{k=1}^{2^{n-1}} \canBasis_k C_i^{(k)}, \]
    where the columns $C_i^{(k)}$ are complex.
    Since the determinant is a multilinear function, for $C_1$ we have
    \[
        \det A=\det \Big(\displaystyle \sum_{k=1}^{2^{n-1}} \canBasis_k C_1^{(k)},C_2, \ldots, C_m \Big)= \displaystyle \sum_{k=1}^{2^{n-1}} \canBasis_k\det\Big(C_1^{(k)},C_2,\ldots, C_m\Big).
     \]
    By applying thi procedure successively for the remaining columns $C_2,\ldots , C_m$ we find
    
    \[ \det A= \det(C_1,\ldots,C_m) = \sum_{k=1}^{2^{n-1}} \canBasis_k \det(C_1^{(k)},\ldots,C_m^{(k)}). \]
\end{proof}

From the previous theorem we can see that $\det A = 0$ if and only if $\det A_{\hat{l}} = 0$ for  $l=1,\ldots,m$. Moreover, $\det A$ is in the null cone if $\det A_{\hat{l}} = 0$ for at least one $l$.
\begin{definition}
    A multicomplex square matrix is singular if its determinant is in the null cone.
\end{definition}
\begin{theorem}
    \label{thm : inverseSingular}
    The inverse $A^{-1}$ of a multicomplex square matrix $A$ exists if and only if $A$ is nonsingular. Then $A^{-1}$ is given by $A^{-1} = \displaystyle \sum_{k=1}^{2^{n-1}} \canBasis_k (A_{\hat{k}})^{-1}$. 
\end{theorem}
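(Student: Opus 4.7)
The plan is to prove both directions by exploiting the fact that in the canonical idempotent representation, matrix multiplication acts componentwise, exactly as scalar multiplication does. Recall from Proposition~\ref{prop : canonical : properties} that $\canBasis_k\canBasis_l=\delta_{kl}\canBasis_l$ and $\sum_{k=1}^{2^{n-1}}\canBasis_k=1$. Writing any two multicomplex matrices as $A=\sum_k\canBasis_k A_{\hat k}$ and $B=\sum_k\canBasis_k B_{\hat k}$, distributing the product and collapsing the cross terms by $(i)$ of the proposition gives
\[
AB=\sum_{k=1}^{2^{n-1}}\canBasis_k A_{\hat k}B_{\hat k},
\]
and similarly for $BA$. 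In particular, the multicomplex identity decomposes as $I=\sum_k\canBasis_k I_{m\times m}$, so $AB=I$ holds if and only if $A_{\hat k}B_{\hat k}=I_{m\times m}$ for every $k$. This componentwise identity is the single fact that drives both implications.

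For the sufficient direction ($A$ nonsingular $\Rightarrow$ $A^{-1}$ exists), nonsingularity means $\det A$ is not in the null cone, and by the previous theorem $\det A=\sum_k\canBasis_k\det A_{\hat k}$, so $\det A_{\hat k}\neq 0$ for every $k$. Each complex matrix $A_{\hat k}$ thus admits an inverse $(A_{\hat k})^{-1}\in M_{m\times m}(\mathbb C)$. Setting $B:=\sum_k\canBasis_k(A_{\hat k})^{-1}$ and applying the componentwise product formula above yields $AB=\sum_k\canBasis_k I_{m\times m}=I$ and, symmetrically, $BA=I$, which both proves existence and establishes the claimed explicit formula $A^{-1}=B$.

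For the necessary direction ($A^{-1}$ exists $\Rightarrow$ $A$ nonsingular), write $A^{-1}=\sum_k\canBasis_k B_{\hat k}$ in its canonical representation. The relation $AA^{-1}=I$ becomes $\sum_k\canBasis_k A_{\hat k}B_{\hat k}=\sum_k\canBasis_k I_{m\times m}$; applying the projection $P_l$ (extended entrywise to matrices) and using the uniqueness of coefficients in the canonical basis gives $A_{\hat l}B_{\hat l}=I_{m\times m}$ for every $l$. Hence each $A_{\hat l}$ is an invertible complex matrix with $(A_{\hat l})^{-1}=B_{\hat l}$, so $\det A_{\hat l}\neq 0$ for every $l$ and consequently no component of $\det A$ vanishes, i.e.\ $\det A$ is not in the null cone.

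There is no substantive obstacle here: once one observes that both matrix multiplication and the projections $P_l$ distribute over the canonical decomposition entrywise, the proof is essentially bookkeeping. The only point that deserves a brief justification is the extension of $P_l$ from scalars to matrices and the verification that it is a ring homomorphism on $M_{m\times m}(\mathbb M_n)$, which follows immediately from the corresponding scalar statement and Proposition~\ref{prop : canonical : properties}$(i)$.
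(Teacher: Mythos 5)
Your proof is correct, and the ``if'' direction (constructing $B=\sum_k\canBasis_k(A_{\hat k})^{-1}$ and multiplying out componentwise) is exactly the paper's argument. Where you diverge is the ``only if'' direction: the paper disposes of it in one line by taking determinants, $1=\det(A^{-1}A)=\det A^{-1}\det A$, and concluding that $\det A$ cannot be a zero divisor, hence is not in the null cone. You instead project the relation $AA^{-1}=I$ onto each idempotent component to get $A_{\hat l}B_{\hat l}=I_{m\times m}$ for every $l$, and only then pass to determinants of the complex blocks. The two routes are equivalent in content, but yours has the small merit of not invoking multiplicativity of the multicomplex determinant, $\det(AB)=\det A\,\det B$, which the paper uses without explicit justification (it does follow from the componentwise determinant formula of the preceding theorem, by the same bookkeeping you carry out). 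The paper's route is shorter once that multiplicativity is granted; yours is more self-contained and also hands you the explicit inverse of each block, $(A_{\hat l})^{-1}=B_{\hat l}$, for free. Your closing remark about extending $P_l$ entrywise to matrices and checking it is a ring homomorphism is the right thing to flag; it is immediate from Proposition~\ref{prop : canonical : properties}$(i)$ and the scalar case, as you say.
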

\begin{proof}
    If $A^{-1}$ exists then $A^{-1}A = I$ such that $1 = \det(A^{-1}A) = \det A^{-1} \det A$.
    Consequently, $\det A$ is not in the null cone. Conversely, if $A$ is nonsingular then $\det A_{\hat{l}} \neq 0$ for $l=1,\ldots, 2^{n-1}$ and
    \[ \Big( \sum_{k=1}^{2^{n-1}} \canBasis_k (A_{\hat{k}})^{-1} \Big) \cdot \Big( \sum_{k=1}^{2^{n-1}} \canBasis_k A_{\hat{k}} \Big) = \sum_{k=1}^{2^{n-1}} \canBasis_k I = I. \]
\end{proof}
%An equivalent proof of the next theorem applied to the specific case of bicomplex numbers ($\mathbb{M}_2$-module) can be found in \cite{FiniteHilbert}.
%\begin{theorem}
   % Any two bases of a $\mathbb{M}_n$-module $W$ are related by a nonsingular multicomplex matrix.
%\end{theorem}

\subsection{Linear operators}

\begin{definition}
    A multicomplex linear operator is a function $A : W \rightarrow W$ such that for all $\alpha, \beta \in \mathbb{M}_n$ and $\ket{\psi}, \ket{\phi} \in W$,
    \[ A(\alpha \ket{\psi} + \beta \ket{\phi}) = \alpha A\ket{\psi} + \beta A\ket{\phi}. \]
\end{definition}
For any $\ket{\psi} \in W$, we set $A_{\hat{\jmath}} \ket{\psi}:= P_jA\ket{\psi}$, then we obtain
\begin{equation}
    \label{eq:PjA}
    \begin{array}{rcl}
    A_{\hat{\jmath}} \ket{\psi} &=&P_jA\ket{\psi}=P_jA \displaystyle \sum_{k=1}^{2^{n-1}} \canBasis_k \ket{\psi}_{\hat{k}} =P_j\Big( \sum_{k=1}^{2^{n-1}} \canBasis_k A\ket{\psi}_{\hat{k}} \Big) \\*[2ex]
    &=& \displaystyle \sum_{k=1}^{2^{n-1}} P_j(\canBasis_k) P_jA \ket{\psi}_{\hat{k}}  = P_j A \ket{\psi}_{\hat{\jmath}} = A_{\hat{\jmath}} \ket{\psi}_{\hat{\jmath}}.
    \end{array}
\end{equation}
\begin{remark}
Note that for an arbitrary ket $\ket{\psi}\in W$, we have $AP_j\ket{\psi} \in W$ and $P_jA\ket{\psi}\in V$, hence a multicomplex linear operator $A$ do not commute with $P_j$ in general. 
\end{remark}
 \begin{theorem}
     A multicomplex linear operator $A:W\rightarrow W$ commutes with the ket projector $P_j:W\rightarrow V$ for $1\leq j\leq 2^{n-1}$ if and only if $AW\subseteq V$.
 \end{theorem}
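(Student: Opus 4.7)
Both directions will hinge on the direct-sum decomposition $W=\bigoplus_{k=1}^{2^{n-1}}\canBasis_k V$ from Theorem~\ref{thm : basesSubspaces : directSum}, together with the contrast between the $\mathbb{M}_n$-linearity of $A$ and the merely $\mathbb{C}$-linear (not $\mathbb{M}_n$-linear) behavior of $P_j$ encoded in the identity $P_j(\alpha\ket{\psi})=\alpha_{\hat{\jmath}}\ket{\psi}_{\hat{\jmath}}$ from (\ref{canonicalrep}). It is this mismatch that makes the remark preceding the theorem true ($AP_j\ket{\psi}\in W$ while $P_jA\ket{\psi}\in V$), so commutation forces the image of $AP_j$ to actually lie in $V$.

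For the $(\Leftarrow)$ direction, assume $AW\subseteq V$. Every element of $V$ has all canonical components equal to itself, so $P_j$ is the identity on $V$, giving $P_jA\ket{\psi}=A\ket{\psi}$ for all $\ket{\psi}\in W$. On the other hand, $AP_j\ket{\psi}=A\ket{\psi}_{\hat{\jmath}}$. To compare the two, expand $\ket{\psi}=\sum_k\canBasis_k\ket{\psi}_{\hat k}$ and use $\mathbb{M}_n$-linearity of $A$ to write $A\ket{\psi}=\sum_k\canBasis_k A\ket{\psi}_{\hat k}$, where each $A\ket{\psi}_{\hat k}$ lies in $V$ by hypothesis. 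Since $A\ket{\psi}$ itself lies in $V$, the uniqueness of the direct-sum decomposition $W=\bigoplus_k\canBasis_k V$ forces all components $A\ket{\psi}_{\hat k}$ to equal a common $\ket{u}\in V$; then $A\ket{\psi}=\ket{u}\sum_k\canBasis_k=\ket{u}=A\ket{\psi}_{\hat{\jmath}}$, so $AP_j=P_jA$.

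For the $(\Rightarrow)$ direction, assume $AP_j=P_jA$ for every $j=1,\ldots,2^{n-1}$. Then for any $\ket{\psi}\in W$,
$$A\ket{\psi}_{\hat{\jmath}}=AP_j\ket{\psi}=P_jA\ket{\psi}=(A\ket{\psi})_{\hat{\jmath}}\in V.$$
Specializing to $\ket{\psi}\in V$ (where $\ket{\psi}_{\hat{\jmath}}=\ket{\psi}$) immediately yields $A\ket{\psi}\in V$, hence $A(V)\subseteq V$. To promote this to $AW\subseteq V$, I would take a general $\ket{\phi}=\sum_k\canBasis_k\ket{\phi}_{\hat k}\in W$, write $A\ket{\phi}=\sum_k\canBasis_k A\ket{\phi}_{\hat k}\in\bigoplus_k\canBasis_k V$ by $\mathbb{M}_n$-linearity and the previous step, and then invoke the commutation relation $A\ket{\phi}_{\hat{\jmath}}=(A\ket{\phi})_{\hat{\jmath}}$ for every $j$ simultaneously to argue that the components $A\ket{\phi}_{\hat k}$ coincide across $k$, so that $A\ket{\phi}\in V$. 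The principal obstacle I anticipate is precisely this final coincidence step: passing from invariance of $V$ to invariance of all of $W$ requires exploiting every single $P_j$-commutation at once against the multicomplex scaling produced by the $\canBasis_k$ factors, and one must carefully track how $\mathbb{M}_n$-linearity of $A$ interacts with the only $\mathbb{C}$-linear behavior of $P_j$ to eliminate the multicomplex coefficients in $A\ket{\phi}$.
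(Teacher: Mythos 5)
Your backward direction is correct: assuming $AW\subseteq V$, you show $P_jA\ket{\psi}=A\ket{\psi}=A\ket{\psi}_{\hat{\jmath}}=AP_j\ket{\psi}$. This is essentially the paper's argument, except that you re-derive the identity $P_jA\ket{\psi}=A\ket{\psi}_{\hat{\jmath}}$ from the uniqueness of the decomposition $W=\bigoplus_k\canBasis_k V$ instead of quoting equation (\ref{eq:PjA}); both routes are fine.

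The forward direction is where the real issue lies, and the obstacle you flag at the end is not a technical inconvenience to be worked around: it cannot be overcome, because the forward implication is false as stated. What the commutation hypothesis yields is exactly what you extracted, namely $A(V)\subseteq V$. Indeed, since $\ket{\psi}_{\hat{\jmath}}$ ranges over all of $V$ as $\ket{\psi}$ ranges over $W$, the relation $A\ket{\psi}_{\hat{\jmath}}=(A\ket{\psi})_{\hat{\jmath}}$ says precisely that $A$ preserves $V$; conversely, $A(V)\subseteq V$ implies $[A,P_j]=0$ for every $j$ by your own uniqueness argument. So ``$A$ commutes with every $P_j$'' is equivalent to $AV\subseteq V$, which is strictly weaker than $AW\subseteq V$. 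The identity operator is a counterexample: $\mathrm{id}\,P_j\ket{\psi}$ and $P_j\,\mathrm{id}\ket{\psi}$ both equal $\ket{\psi}_{\hat{\jmath}}$, yet $\mathrm{id}(W)=W\not\subseteq V$. The ``coincidence across $k$'' you hoped to force --- that the components $A\ket{\phi}_{\hat{k}}$ agree for all $k$ --- simply does not follow; for $A=\mathrm{id}$ and $\ket{\phi}\notin V$ they are genuinely distinct. For what it is worth, the paper's own proof of this direction makes the corresponding unjustified leap: from $A_{\hat{\jmath}}\ket{\psi}_{\hat{\jmath}}=A\ket{\psi}_{\hat{\jmath}}$ it concludes ``$A=A_{\hat{\jmath}}$,'' but that identity constrains $A$ only on vectors of the form $\ket{\psi}_{\hat{\jmath}}$, i.e.\ on $V$, not on all of $W$. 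The theorem becomes true, and your argument complete, if the conclusion $AW\subseteq V$ is replaced by $AV\subseteq V$.
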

 \begin{proof}
     Suppose $A$ and $P_j$ commute, then for all $\ket{\psi}\in W$ we have 
     $$
    0=(P_jA-AP_j)\ket{\psi}=A_{\hat\jmath}\ket{\psi}_{\hat\jmath}-A\ket{\psi}_{\hat \jmath}
     $$
     such that $A=A_{\hat \jmath}$ and $A:W\rightarrow V$.
     
     Conversely if we suppose $AW\subseteq V$, for any $\ket{\psi}\in W$ we have 
    $$
    P_jA\ket{\psi}=P_jA\ket{\psi}_{\hat \jmath}=A\ket{\psi}_{\hat \jmath}
    $$
    and
    $$
    AP_j\ket{\psi}=A\ket{\psi}_{\hat \jmath}
    $$
    such that $[A,P_j]=0$.
 \end{proof}
\begin{definition}
    A multicomplex linear operator $A:W\rightarrow W$ belongs to the null cone if for at least one $1\leq j\leq 2^{n-1}$ we have $A_{\hat{\jmath}} = 0$.
\end{definition}
\begin{definition}
    Let $A:W\rightarrow W$ be a multicomplex linear operator and let
    \[ 
    A \ket{\psi} = \lambda \ket{\psi}, \qquad \lambda \in \mathbb{M}_n
    \]
    where $\ket{\psi}$ is not in the null cone. Then $\lambda$ is called an eigenvalue of $A$ and $\ket{\psi}$ is the corresponding eigenket.
\end{definition}
Expanding the expression $A \ket{\psi} = \lambda \ket{\psi}$ we get
\begin{equation}
    A \sum_{k=1}^{2^{n-1}} \canBasis_k \ket{\psi}_{\hat{k}} = \lambda \sum_{k=1}^{2^{n-1}} \canBasis_k \ket{\psi}_{\hat{k}} \quad \Rightarrow \quad \sum_{k=1}^{2^{n-1}} \canBasis_k A \ket{\psi}_{\hat{k}} = \sum_{k=1}^{2^{n-1}} \canBasis_k \lambda \ket{\psi}_{\hat{k}}.
\end{equation}
Applying $P_j$ on both sides gives us
\begin{equation}
    \label{eq : linearOp : projectedEigenval}
    A_{\hat{\jmath}} \ket{\psi}_{\hat{\jmath}} = \lambda_{\hat{\jmath}} \ket{\psi}_{\hat{\jmath}},\qquad \lambda_{\hat{\jmath}} = P_j(\lambda).
\end{equation}
This last equation means that the eigenvalue of a ket projection of $A$ is the corresponding multicomplex projection of $\lambda$.
\begin{theorem}
    Let $A$ and $B$ be two multicomplex linear operators. Then for all $1\leq j\leq 2^{n-1}$ we have the following properties:
    \[ 
         P_j(A+B) = A_{\hat{\jmath}} + B_{\hat{\jmath}} \qquad \text{and} \qquad   P_j(AB) = A_{\hat{\jmath}} B_{\hat{\jmath}}.
    \]
\end{theorem}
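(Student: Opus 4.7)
The plan is to prove both identities operator-by-operator by evaluating each side on an arbitrary $\ket{\psi}\in W$ and reducing everything to the definition $A_{\hat{\jmath}}\ket{\psi}:=P_jA\ket{\psi}$ together with the linearity of $P_j$ and the multiplicative rule $P_j(\alpha\ket{\phi})=P_j(\alpha)\,P_j\ket{\phi}$ already established for the module projector.

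For the additive identity I would first write $(A+B)\ket{\psi}=A\ket{\psi}+B\ket{\psi}$, then apply $P_j$ and use its additivity to split the result as $P_jA\ket{\psi}+P_jB\ket{\psi}=A_{\hat{\jmath}}\ket{\psi}+B_{\hat{\jmath}}\ket{\psi}$. Since $\ket{\psi}$ is arbitrary this yields $P_j(A+B)=A_{\hat{\jmath}}+B_{\hat{\jmath}}$.

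For the multiplicative identity the central step is to expand the intermediate ket $B\ket{\psi}\in W$ in the canonical idempotent form $B\ket{\psi}=\sum_{k=1}^{2^{n-1}}\canBasis_k\,B_{\hat{k}}\ket{\psi}$, where the $k$th projection of $B\ket{\psi}$ is precisely $B_{\hat{k}}\ket{\psi}$ by definition. Multicomplex linearity of $A$ then allows me to pull the scalars $\canBasis_k$ outside, giving $AB\ket{\psi}=\sum_k\canBasis_k\,A(B_{\hat{k}}\ket{\psi})$. Applying $P_j$ term by term, and using $P_j(\canBasis_k)=\delta_{jk}$ (which follows from Proposition \ref{prop : canonical : properties}(i) and the uniqueness of the canonical representation), collapses the sum to the single term $P_jA(B_{\hat{\jmath}}\ket{\psi})$, which is precisely $A_{\hat{\jmath}}(B_{\hat{\jmath}}\ket{\psi})=A_{\hat{\jmath}}B_{\hat{\jmath}}\ket{\psi}$.

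The step most likely to require care is the interchange $A(\canBasis_k\ket{\phi})=\canBasis_k A\ket{\phi}$: this holds because $A$ is $\mathbb{M}_n$-linear (not merely $\mathbb{C}$-linear), so it commutes with every multicomplex scalar, in particular with the idempotents $\canBasis_k$. A related subtlety is that although $P_jB\ket{\psi}=B_{\hat{\jmath}}\ket{\psi}$ is generally different from $\canBasis_j B_{\hat{\jmath}}\ket{\psi}$, both sides of the multiplicative identity end up equal to $P_jA(B_{\hat{\jmath}}\ket{\psi})$ after the dust settles, so no incompatibility arises. Once this bookkeeping is in place the argument is a direct line-by-line expansion; there is no deeper obstacle.
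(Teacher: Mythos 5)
Your proof is correct and follows essentially the same route as the paper: the additive identity is immediate from additivity of $P_j$, and the multiplicative identity reduces to the relation $P_jA\ket{\chi}=A_{\hat{\jmath}}P_j\ket{\chi}$, which the paper simply cites from its earlier displayed computation of $A_{\hat{\jmath}}\ket{\psi}$ (applied with $\ket{\chi}=B\ket{\psi}$) and which you re-derive inline by expanding $B\ket{\psi}$ in the canonical idempotent basis and using the $\mathbb{M}_n$-linearity of $A$. There is no gap; your explicit treatment of $A(\canBasis_k\ket{\phi})=\canBasis_k A\ket{\phi}$ and of $P_j(\canBasis_k)=\delta_{jk}$ just makes visible the bookkeeping the paper leaves implicit.
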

\begin{proof}
    Let $\ket{\psi} \in W$, then 
    \[ \begin{array}{r c l}
        P_j(A+B)\ket{\psi} & = & P_j(A \ket{\psi} + B \ket{\psi}) =A_{\hat{\jmath}} \ket{\psi}_{\hat{\jmath}} + B_{\hat{\jmath}} \ket{\psi}_{\hat{\jmath}} = A_{\hat{\jmath}} \ket{\psi} + B_{\hat{\jmath}} \ket{\psi}\\*[2ex]
        & = & (A_{\hat{\jmath}} + B_{\hat{\jmath}})\ket{\psi}.\\
    \end{array} \]
    Writing $\ket{\phi} := B \ket{\psi}$, we have
    \[ 
    P_j A B \ket{\psi}  =  P_j A \ket{\phi} = A_{\hat{\jmath}} \ket{\phi}_{\hat{\jmath}}=  A_{\hat{\jmath}} P_j \ket{\phi} = A_{\hat{\jmath}} P_j B \ket{\psi}=  A_{\hat{\jmath}}B_{\hat{\jmath}}\ket{\psi}.
    \]
\end{proof}
Once again, an equivalent proof of the next theorem is found in \cite{FiniteHilbert}, since the case with bicomplex numbers is easily generalized to that of multicomplex numbers from the previous results.
\begin{theorem}
    The action of a linear multicomplex operator on $W$ can be represented by a multicomplex matrix.
\end{theorem}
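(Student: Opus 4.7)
The plan is to mimic the usual finite-dimensional linear-algebra argument, replacing the base field by the commutative ring $\mathbb{M}_n$ and using that $W$ is a free $\mathbb{M}_n$-module of finite rank $m$. The key input is simply that any element of $W$ has a \emph{unique} expansion in a chosen basis (which is where freeness of the module is used).

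First, I would fix a basis $\{\ket{w_l}\}_{l=1}^m$ of $W$. For each basis vector $\ket{w_l}$, the image $A\ket{w_l}$ again lies in $W$, so by the basis property there exist unique coefficients $A_{il}\in\mathbb{M}_n$ such that
\[
A\ket{w_l} \;=\; \sum_{i=1}^m A_{il}\,\ket{w_i}, \qquad l=1,\ldots,m.
\]
These $m^2$ scalars define the candidate multicomplex matrix $[A]:=(A_{il})_{m\times m}$.

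Next I would verify that $[A]$ actually represents $A$ on all of $W$. Given an arbitrary $\ket{\psi}=\sum_{l=1}^m \eta_l\ket{w_l}$ with $\eta_l\in\mathbb{M}_n$, multicomplex linearity of $A$ (from the definition in the preceding subsection) gives
\[
A\ket{\psi} \;=\; \sum_{l=1}^m \eta_l\,A\ket{w_l} \;=\; \sum_{l=1}^m \eta_l\sum_{i=1}^m A_{il}\ket{w_i} \;=\; \sum_{i=1}^m\Bigl(\sum_{l=1}^m A_{il}\,\eta_l\Bigr)\ket{w_i},
\]
where commutativity of $\mathbb{M}_n$ lets the scalars $\eta_l$ pass the $A_{il}$. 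This is precisely the matrix-vector product $[A]\,(\eta_1,\ldots,\eta_m)^{\top}$ read in the chosen basis, so the action of $A$ coincides with multiplication by $[A]$. Uniqueness of the matrix follows from uniqueness of the basis expansion of each $A\ket{w_l}$.

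There is no real obstacle: once we know $W$ is free of rank $m$ (recalled at the start of this section from \cite{rotman2002advanced}), the argument is the standard one. The only subtle point worth mentioning is that commutativity of $\mathbb{M}_n$ is essential to identify the above double sum with a bona fide matrix product; the same construction would fail over a noncommutative ring of scalars. One could also note, as a sanity check, that applying the projector $P_j$ to both sides and using the previously established identity $P_j(AB)=A_{\hat{\jmath}}B_{\hat{\jmath}}$ reproduces the complex matrix representations $[A_{\hat{\jmath}}]=(a_{il\hat{\jmath}})_{m\times m}$ of the component operators on $V$, consistently with the decomposition $W=\canBasis_1 V\oplus\cdots\oplus\canBasis_{2^{n-1}}V$ from Theorem~\ref{thm : basesSubspaces : directSum}.
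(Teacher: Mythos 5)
Your proof is correct. The paper itself gives no proof of this theorem; it simply defers to the bicomplex case in its reference \cite{FiniteHilbert}, noting that the generalization is straightforward. Your argument --- expand $A\ket{w_l}$ uniquely in the fixed basis using freeness of the module, then use $\mathbb{M}_n$-linearity and commutativity of the scalars to identify the double sum with a matrix--vector product --- is exactly the standard free-module argument that is needed here, and it is complete.
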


\section{Multicomplex Hilbert spaces}

\subsection{Scalar product}

\begin{definition}
    \label{def : hilbertSpaces : defScalarProduct}
    The multicomplex scalar product is a function associating a multicomplex number to each pair of elements $\ket{\psi}, \ket{\phi} \in W$ which for all $\ket{\chi} \in W$ and $\alpha \in \mathbb{M}_n$ satisfies:
    \begin{enumerate}
        \item $(\ket{\psi}, \ket{\phi} + \ket{\chi}) = (\ket{\psi}, \ket{\phi}) + (\ket{\psi}, \ket{\chi})$;
        \item $(\ket{\psi}, \alpha \ket{\phi}) = \alpha (\ket{\psi}, \ket{\phi})$;
        \item $(\ket{\psi}, \ket{\phi}) = (\ket{\phi}, \ket{\psi})^{\Lambda}$;
        \item $(\ket{\psi}, \ket{\psi}) \in \mathbb{D}_n^+$ and $(\ket{\psi}, \ket{\psi})=0$ if and only if $\ket{\psi} = 0$.
    \end{enumerate}
\end{definition}
Here $\mathbb{D}_n^+$ is the set of multicomplex numbers with real and positive components in the idempotent canonical representation. The $j$th projection of the multicomplex scalar product is denoted
\begin{equation}
    (\cdot\,,\,\cdot)_{\hat{\jmath}} := P_j(\cdot\,,\,\cdot)
\end{equation}
and from Definition \ref{def : hilbertSpaces : defScalarProduct} it follows that any projection is itself a well defined standard scalar product on the associated vector space $\canBasis_j V$ (as well as on $V$), i.e.
\begin{equation}
\label{scprodV}
\begin{array}{l}
(\ket{\psi}, \ket{\phi} + \ket{\chi})_{\hat\jmath} = (\ket{\psi}, \ket{\phi})_{\hat\jmath} + (\ket{\psi}, \ket{\chi})_{\hat\jmath},\quad (\ket{\psi}, \alpha \ket{\phi})_{\hat\jmath} = \alpha_{\hat\jmath} (\ket{\psi}, \ket{\phi})_{\hat\jmath},\\*[2ex]
(\ket{\psi}, \ket{\phi})_{\hat\jmath} = \overline{(\ket{\phi}, \ket{\psi})}_{\hat\jmath} \quad \text{and}\quad
(\ket{\psi}, \ket{\psi})_{\hat \jmath} \in \mathbb{R}^+,\quad (\ket{\psi}, \ket{\psi})_{\hat \jmath}=0\ \Leftrightarrow \ket{\psi}_{\hat \jmath}= 0.
\end{array}
\end{equation}
\begin{proposition}
    A ket $\ket{\psi}\in W$ is in the null cone if and only if its multicomplex scalar product $(\ket{\psi}, \ket{\psi})$ is in the null cone.
\end{proposition}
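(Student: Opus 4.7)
The plan is to reduce the statement to the already-established fact in display (\ref{scprodV}) that each projection $(\cdot,\cdot)_{\hat\jmath}$ is an honest complex scalar product on $V$, for which the usual positive-definiteness gives $(\ket{\psi},\ket{\psi})_{\hat\jmath}=0$ if and only if $\ket{\psi}_{\hat\jmath}=0$.

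First I would write the multicomplex scalar product $(\ket{\psi},\ket{\psi})$ in its canonical idempotent representation,
\[
(\ket{\psi},\ket{\psi})=\sum_{k=1}^{2^{n-1}}\canBasis_k\,(\ket{\psi},\ket{\psi})_{\hat k},
\]
which is legitimate because $P_j$ is a ring homomorphism, so applying $P_j$ to both sides simply extracts the $j$th component. By Definition \ref{def : hilbertSpaces : defScalarProduct}(4), the multicomplex scalar product $(\ket{\psi},\ket{\psi})$ lies in $\mathbb D_n^+$, so all its components $(\ket{\psi},\ket{\psi})_{\hat k}$ are nonnegative reals.

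Next, I would invoke the characterisation of the multicomplex null cone from the Projections subsection: a multicomplex number belongs to $\mathbb M_n^{-1}$ precisely when at least one of its canonical projections vanishes. Applied to $(\ket{\psi},\ket{\psi})$, this says the scalar product is in the null cone iff there exists $j$ with $(\ket{\psi},\ket{\psi})_{\hat\jmath}=0$.

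Finally I would close the chain of equivalences using the last identity of (\ref{scprodV}), namely $(\ket{\psi},\ket{\psi})_{\hat\jmath}=0 \Leftrightarrow \ket{\psi}_{\hat\jmath}=0$, and the definition of the null cone for kets, which says $\ket{\psi}$ lies in the null cone of $W$ exactly when some $\ket{\psi}_{\hat\jmath}$ vanishes. Chaining these equivalences gives the result in both directions at once. There is no real obstacle here — the whole proof is a two-line equivalence chain once we recognise that condition (4) of the scalar product, broken into components via the canonical representation, is precisely positive-definiteness of each complex scalar product $(\cdot,\cdot)_{\hat\jmath}$ on $V$.
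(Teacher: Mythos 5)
Your proof is correct and follows exactly the route the paper intends: the paper's own proof is the one-line remark that the claim is ``a direct consequence of the last equivalence in (\ref{scprodV})'', i.e.\ $(\ket{\psi},\ket{\psi})_{\hat{\jmath}}=0\Leftrightarrow\ket{\psi}_{\hat{\jmath}}=0$, combined with the componentwise definitions of the null cone for kets and for multicomplex numbers. You have simply written out the chain of equivalences that the paper leaves implicit.
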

\begin{proof}
    It's a direct consequence of the last equivalence in (\ref{scprodV}).
\end{proof}
\begin{theorem}
    \label{thm : hilbertSpaces : decompositionOfScalarProd}
    Let $\ket{\psi}, \ket{\phi} \in W$, then
    \[ (\ket{\psi}, \ket{\phi}) = \sum_{k=1}^{2^{n-1}} (\ket{\psi}_{\hat{k}}, \ket{\phi}_{\hat{k}})_{\hat{k}}\,\canBasis_k. \]
\end{theorem}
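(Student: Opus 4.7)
The plan is to expand both kets in the canonical idempotent representation and then exploit the algebraic identities for the $\canBasis_k$'s established in Proposition~\ref{prop : canonical : properties}. Writing
\[ \ket{\psi}=\sum_{k=1}^{2^{n-1}}\canBasis_k\ket{\psi}_{\hat k},\qquad \ket{\phi}=\sum_{l=1}^{2^{n-1}}\canBasis_l\ket{\phi}_{\hat l}, \]
I will plug these into $(\ket{\psi},\ket{\phi})$ and use the sesquilinearity implied by properties (1)--(3) of Definition~\ref{def : hilbertSpaces : defScalarProduct}. Specifically, combining (1)--(3) gives the anti-linearity $(\alpha\ket{\psi},\ket{\phi})=\alpha^{\Lambda}(\ket{\psi},\ket{\phi})$ in the first slot and the linearity in the second slot, so that the double expansion yields
\[ (\ket{\psi},\ket{\phi})=\sum_{k,l=1}^{2^{n-1}}\canBasis_k^{\Lambda}\,\canBasis_l\,(\ket{\psi}_{\hat k},\ket{\phi}_{\hat l}). \]

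The next step is to collapse this double sum. By Proposition~\ref{prop : canonical : properties}(ii) we have $\canBasis_k^{\Lambda}=\canBasis_k$, and by (i) we have $\canBasis_k\canBasis_l=\delta_{kl}\canBasis_k$, so only the diagonal terms survive:
\[ (\ket{\psi},\ket{\phi})=\sum_{k=1}^{2^{n-1}}\canBasis_k\,(\ket{\psi}_{\hat k},\ket{\phi}_{\hat k}). \]
Finally, each inner product $(\ket{\psi}_{\hat k},\ket{\phi}_{\hat k})$ is itself a multicomplex number, which I expand in canonical form as $\sum_m (\ket{\psi}_{\hat k},\ket{\phi}_{\hat k})_{\hat m}\canBasis_m$. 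Multiplying by $\canBasis_k$ and again using $\canBasis_k\canBasis_m=\delta_{km}\canBasis_k$ leaves only the $m=k$ term, yielding the claimed formula.

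The only point requiring care is the anti-linearity bookkeeping in the first slot: a naive application of bilinearity would produce $\canBasis_k$ instead of $\canBasis_k^{\Lambda}$ when extracting the scalar from the left argument. Here, however, the $\Lambda$-invariance of the canonical idempotents (property (ii)) makes the two coincide, which is precisely what makes the double sum reduce cleanly to a diagonal. No other step involves a genuine difficulty: all remaining manipulations are applications of the orthogonality and completeness relations of $\{\canBasis_k\}_{k=1}^{2^{n-1}}$ already established.
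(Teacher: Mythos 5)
Your proof is correct and follows essentially the same route as the paper's: expand both kets in the canonical idempotent basis, use $\Lambda$-antilinearity in the first slot to produce $\canBasis_k^{\Lambda}\canBasis_l$, collapse the double sum via Proposition~\ref{prop : canonical : properties}(i)--(ii), and finally extract the $k$th projection of each scalar product by multiplying its canonical expansion by $\canBasis_k$. No discrepancies to report.
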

\begin{proof}
    \[ \begin{array}{r c l}
         (\ket{\psi}, \ket{\phi}) & = & \big( \displaystyle \sum_{k=1}^{2^{n-1}} \ket{\psi}_{\hat{k}} \canBasis_k, \displaystyle \sum_{l=1}^{2^{n-1}} \ket{\phi}_{\hat{l}} \canBasis_l \big)= \displaystyle \sum_{k=1}^{2^{n-1}} \displaystyle \sum_{l=1}^{2^{n-1}} (\ket{\psi}_{\hat{k}}, \ket{\phi}_{\hat{l}})\, \canBasis_k^{\Lambda}\ \canBasis_l\\
         & = & \displaystyle \sum_{k=1}^{2^{n-1}} (\ket{\psi}_{\hat{k}}, \ket{\phi}_{\hat{k}}) \canBasis_k=  \displaystyle \sum_{k=1}^{2^{n-1}} \Big( \displaystyle \sum_{j=1}^{2^{n-1}} (\ket{\psi}_{\hat{k}}, \ket{\phi}_{\hat{k}})_{\hat{\jmath}} \,\canBasis_j \Big) \canBasis_k\\
         & = & \displaystyle \sum_{k=1}^{2^{n-1}} (\ket{\psi}_{\hat{k}}, \ket{\phi}_{\hat{k}})_{\hat{k}} \,\canBasis_k.
    \end{array} \]
\end{proof}
We know that $V$ and $\canBasis_j V$ for all $j = 1,\ldots,2^{n-1}$ are finite-dimensional vector spaces on the complex $\mathbb{C}$. Furthermore, for each $j$, $(\cdot\,,\, \cdot)_{\hat{\jmath}}$ is a standard complex scalar product on both $V$ and $\canBasis_j V$, which means that each of these spaces equipped with the projected scalar product is a 
finite-dimensional Hilbert space. From Theorem \ref{thm : basesSubspaces : directSum}, $W$ is a direct sum of the spaces $\canBasis_j V$ and is thus itself a standard Hilbert space equipped with the following scalar product :
\begin{equation}
    \label{eq : scalProd : scalProdComplexValued}
    \Big( \sum_{k=1}^{2^{n-1}} \ket{\psi_k}, \sum_{l=1}^{2^{n-1}} \ket{\phi_l} \Big)_{\mathbb{C}} := \sum_{k=1}^{2^{n-1}} (\ket{\psi_k}, \ket{\phi_k})_{\hat{k}},\qquad \ket{\psi_k},\ket{\phi_k}\in \canBasis_k V.
\end{equation}
    This function induce a norm and a metric from which $W$ is a complete metric space. There is an important distinction to make between the scalar products defined on $W$ in Definition~\ref{def : hilbertSpaces : defScalarProduct} and in equation (\ref{eq : scalProd : scalProdComplexValued}): one is multicomplex-valued and the other is complex-valued respectively. Moreover, from expression (\ref{eq : scalProd : scalProdComplexValued}) the complex-valued scalar product is clearly induced from the multicomplex one, but the converse is also possible. Indeed, take $(\cdot\,,\,\cdot)_{\mathbb{C}}$ as an independently defined complex-valued scalar product on $W$ (considered as a vector space over $\mathbb{C}$), then
\begin{equation}
    ( \ket{\psi}, \ket{\phi} ) := \sum_{k=1}^{2^{n-1}} (\ket{\psi}_{\hat{k}}, \ket{\phi}_{\hat{k}})_{\mathbb{C}}\,\canBasis_k
\end{equation}
is a multicomplex scalar product on $W$.
\begin{theorem}
    Any $\mathbb{M}_n$-module $W$ is a Hilbert space if and only if $W$ is equipped with a multicomplex scalar product.
\end{theorem}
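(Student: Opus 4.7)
The plan is to establish both implications by making explicit the correspondence between multicomplex scalar products on $W$ and standard (complex) Hilbert space structures, already foreshadowed in the discussion preceding the statement. The key structural fact to exploit is Theorem~\ref{thm : basesSubspaces : directSum}, which exhibits $W$ as the complex direct sum $\canBasis_1 V \oplus \cdots \oplus \canBasis_{2^{n-1}} V$ of finite-dimensional complex subspaces.

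For the direction $(\Leftarrow)$, I would start from a multicomplex scalar product $(\cdot\,,\,\cdot)$ and define a complex-valued form $(\cdot\,,\,\cdot)_{\mathbb{C}}$ on $W$ by formula (\ref{eq : scalProd : scalProdComplexValued}). Using the componentwise identities (\ref{scprodV}) projected along the decomposition $W = \bigoplus_k \canBasis_k V$, I would verify that this form is $\mathbb{C}$-sesquilinear, Hermitian symmetric, and positive definite, hence a standard complex inner product. Completeness is then automatic: since $W$ has finite $\mathbb C$-dimension $2^{n-1}m$ by Theorem~\ref{thm : basesSubspaces : directSum}, the induced norm makes $W$ a Hilbert space.

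For the direction $(\Rightarrow)$, I would assume $W$ is a standard Hilbert space with complex inner product $(\cdot\,,\,\cdot)_{\mathbb{C}}$ and define
\[ (\ket{\psi}, \ket{\phi}) := \sum_{k=1}^{2^{n-1}} (\ket{\psi}_{\hat k}, \ket{\phi}_{\hat k})_{\mathbb{C}}\, \canBasis_k, \]
then check the four axioms of Definition~\ref{def : hilbertSpaces : defScalarProduct} in turn. Additivity in the second slot is immediate from additivity of projections and of $(\cdot\,,\,\cdot)_{\mathbb{C}}$. For multicomplex homogeneity, I would expand $\alpha = \sum_l \alpha_{\hat l}\canBasis_l$ and use the orthogonality $\canBasis_k\canBasis_l = \delta_{kl}\canBasis_l$ from Proposition~\ref{prop : canonical : properties} to show $(\alpha\ket{\phi})_{\hat k} = \alpha_{\hat k}\ket{\phi}_{\hat k}$, reducing the axiom to $\mathbb C$-linearity of $(\cdot\,,\,\cdot)_{\mathbb{C}}$ on each component. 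The $\Lambda$-conjugate symmetry follows from $\canBasis_k^{\Lambda} = \canBasis_k$ together with the standard Hermitian symmetry of $(\cdot\,,\,\cdot)_{\mathbb{C}}$. Positivity holds because each summand $(\ket{\psi}_{\hat k},\ket{\psi}_{\hat k})_{\mathbb{C}}$ is a nonnegative real, placing the sum in $\mathbb D_n^+$, with vanishing iff every projection vanishes iff $\ket{\psi}=0$.

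The step I expect to be the most delicate is the multicomplex homogeneity in axiom 2, since it is the one place where the full canonical idempotent machinery (namely $\canBasis_k\canBasis_l = \delta_{kl}\canBasis_l$ and its consequence $(\alpha\ket{\phi})_{\hat k} = \alpha_{\hat k}\ket{\phi}_{\hat k}$) is genuinely invoked; the other axioms are essentially termwise transports of their complex counterparts. I would also want to verify carefully, in the $(\Leftarrow)$ direction, that the projected forms $(\cdot\,,\,\cdot)_{\hat k}$ appearing in (\ref{eq : scalProd : scalProdComplexValued}) are bona fide complex inner products on $\canBasis_k V$ rather than merely sesquilinear forms; this follows from Theorem~\ref{thm : hilbertSpaces : decompositionOfScalarProd} together with axiom 4 of Definition~\ref{def : hilbertSpaces : defScalarProduct}, ensuring that these are the canonical objects inducing the standard Hilbert space structure on $W$.
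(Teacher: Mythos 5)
Your proposal is correct and follows essentially the same route as the paper, which proves this theorem via the discussion immediately preceding it: the multicomplex scalar product induces the complex-valued one through formula (\ref{eq : scalProd : scalProdComplexValued}) (with completeness coming for free from the finite complex dimension $2^{n-1}m$ of $W$ given by Theorem~\ref{thm : basesSubspaces : directSum}), and conversely any complex scalar product on $W$ yields a multicomplex one by the weighted sum $\sum_k (\ket{\psi}_{\hat k},\ket{\phi}_{\hat k})_{\mathbb{C}}\,\canBasis_k$. Your additional care about axiom 2 and the positive-definiteness of the projected forms $(\cdot\,,\,\cdot)_{\hat k}$ is exactly the right place to be careful, and is consistent with (\ref{scprodV}) and Proposition~\ref{prop : canonical : properties}.
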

The $\mathbb{M}_n$-module is consequently a special case of a module in which we can coherently mix these two notions together and say that from the existence of a multicomplex scalar product, $W$ is a \textbf{multicomplex Hilbert space}.
\begin{theorem}[Riesz]
    \label{thm : scalarProd : Riesz}
    Let $f : W \rightarrow \mathbb{M}_n$ be a linear functional on $W$. Then there exists a unique $\ket{\psi} \in W$ such that $\forall \ket{\phi} \in W$, $f(\ket{\phi}) = (\ket{\psi}, \ket{\phi})$.
\end{theorem}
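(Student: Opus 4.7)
The plan is to reduce to the classical complex Riesz representation theorem applied componentwise on the finite-dimensional spaces $\canBasis_k V$. For each $k = 1,\ldots,2^{n-1}$, I would define a complex functional $f_k : V \rightarrow \mathbb{C}$ by $f_k(\ket{v}) := P_k\bigl(f(\ket{v})\bigr)$. Complex linearity of $f_k$ follows because $f$ is $\mathbb{M}_n$-linear, $P_k$ is a ring homomorphism, and $P_k(z) = z$ for $z \in \mathbb{C}$: indeed $f_k(z\ket{v} + \ket{v'}) = P_k(zf(\ket{v}) + f(\ket{v'})) = zf_k(\ket{v}) + f_k(\ket{v'})$. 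Since $V$ is a finite-dimensional complex vector space and $(\cdot,\cdot)_{\hat{k}}$ is a standard complex scalar product on $V$, the classical Riesz theorem yields a unique $\ket{\psi_k} \in V$ such that $f_k(\ket{v}) = (\ket{\psi_k}, \ket{v})_{\hat{k}}$ for every $\ket{v} \in V$.

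Next, I would assemble the candidate representer $\ket{\psi} := \sum_{k=1}^{2^{n-1}} \canBasis_k \ket{\psi_k} \in W$, whose projections are exactly $\ket{\psi}_{\hat{k}} = \ket{\psi_k}$. For an arbitrary $\ket{\phi} = \sum_l \canBasis_l \ket{\phi}_{\hat{l}} \in W$, using $\mathbb{M}_n$-linearity of $f$ together with $P_k(\canBasis_l) = \delta_{kl}$ gives $P_k\bigl(f(\ket{\phi})\bigr) = P_k\bigl(f(\ket{\phi}_{\hat{k}})\bigr) = f_k(\ket{\phi}_{\hat{k}}) = (\ket{\psi}_{\hat{k}}, \ket{\phi}_{\hat{k}})_{\hat{k}}$. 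Reassembling via $1 = \sum_k \canBasis_k$ yields
\[
f(\ket{\phi}) \;=\; \sum_{k=1}^{2^{n-1}} \canBasis_k \,(\ket{\psi}_{\hat{k}}, \ket{\phi}_{\hat{k}})_{\hat{k}} \;=\; (\ket{\psi}, \ket{\phi}),
\]
where the last equality is precisely Theorem \ref{thm : hilbertSpaces : decompositionOfScalarProd}. This establishes existence.

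Uniqueness follows from property (iv) of Definition \ref{def : hilbertSpaces : defScalarProduct}: if $\ket{\psi}$ and $\ket{\psi'}$ both represent $f$, then $(\ket{\psi}-\ket{\psi'}, \ket{\phi}) = 0$ for every $\ket{\phi} \in W$, and choosing $\ket{\phi} = \ket{\psi}-\ket{\psi'}$ forces $(\ket{\psi}-\ket{\psi'},\ket{\psi}-\ket{\psi'}) = 0$, hence $\ket{\psi} = \ket{\psi'}$. The only real subtlety is bookkeeping: one must confirm that the complex scalar product used in the invocation of classical Riesz on $V$ is precisely the $k$th projection $(\cdot,\cdot)_{\hat{k}}$ so that the synthesized $\ket{\psi}$ matches the decomposition formula. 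Since $W$ is finite-dimensional as a complex vector space by Theorem \ref{thm : basesSubspaces : directSum}, no further completeness argument is needed, and the proof is otherwise routine.
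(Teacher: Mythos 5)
Your proposal is correct and follows essentially the same route as the paper: project $f$ componentwise, apply the classical Riesz theorem on each finite-dimensional space $V$ with the scalar product $(\cdot\,,\,\cdot)_{\hat{k}}$, assemble $\ket{\psi}=\sum_k \canBasis_k\ket{\psi_k}$, and conclude via Theorem~\ref{thm : hilbertSpaces : decompositionOfScalarProd}. The only difference is that you spell out the uniqueness of the assembled $\ket{\psi}$ via property (iv) of Definition~\ref{def : hilbertSpaces : defScalarProduct}, which the paper leaves implicit.
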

\begin{proof}
    Any projection $f_{\hat{\jmath}}$ of $f$ is a linear functional on $V$. Applying the classical Riesz theorem, there exists a unique $\ket{\psi_j} \in V$ such that for all $\ket{\phi_j} \in V$, $f_{\hat{j}} (\ket{\phi_j}) = (\ket{\psi_j}, \ket{\phi_j})_{\hat{j}}$. We set $\ket{\psi} := \sum_{k=1}^{2^{n-1}} \ket{\psi_k} \canBasis_k$ and use theorem \ref{thm : hilbertSpaces : decompositionOfScalarProd} to get
    \[ (\ket{\psi}, \ket{\phi}) = \sum_{k=1}^{2^{n-1}} (\ket{\psi_k}, \ket{\phi}_{\hat{k}})_{\hat{k}} \canBasis_k = \sum_{k=1}^{2^{n-1}} f_{\hat{k}} (\ket{\phi}_{\hat{k}}) \canBasis_k = f(\ket{\phi}). \]
\end{proof}
From this generalization of Riezs theorem, linear functionals are in one-to-one correspondence with kets and can be replaced by the scalar product operation. This allows the use of Dirac's notation and the alternative writing of the scalar product as $\braket{\psi}{\phi} := (\ket{\psi}, \ket{\phi})$.
\begin{theorem}
    Any ket $\ket{\psi} \in W$ not in the null cone can be normalized.
\end{theorem}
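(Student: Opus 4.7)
The plan is to construct an explicit normalizing scalar $\alpha \in \mathbb{D}_n$ from the multicomplex norm of $\ket{\psi}$ and verify that $\alpha \ket{\psi}$ has unit scalar product with itself.

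First, I would apply Theorem \ref{thm : hilbertSpaces : decompositionOfScalarProd} to decompose
\[ (\ket{\psi}, \ket{\psi}) = \sum_{k=1}^{2^{n-1}} (\ket{\psi}_{\hat{k}}, \ket{\psi}_{\hat{k}})_{\hat{k}}\, \canBasis_k. \]
Since $\ket{\psi}$ is not in the null cone, every projection $\ket{\psi}_{\hat{k}}$ is non-zero in the complex Hilbert space $V$ (equipped with the standard complex scalar product $(\cdot,\cdot)_{\hat{k}}$), so each component satisfies $(\ket{\psi}_{\hat{k}}, \ket{\psi}_{\hat{k}})_{\hat{k}} > 0$ by property (4) of Definition \ref{def : hilbertSpaces : defScalarProduct} transferred to the projected scalar product as in (\ref{scprodV}). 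In particular $(\ket{\psi}, \ket{\psi})$ is not in the null cone, and hence admits a multicomplex inverse.

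Next, I define the candidate normalizing scalar as a multiperplex number by taking componentwise positive square roots,
\[ \alpha := \sum_{k=1}^{2^{n-1}} \frac{1}{\sqrt{(\ket{\psi}_{\hat{k}}, \ket{\psi}_{\hat{k}})_{\hat{k}}}}\, \canBasis_k \in \mathbb{D}_n^+, \]
which is well defined precisely because every radicand is a strictly positive real. Since all components of $\alpha$ are non-zero, $\alpha$ is not a zero divisor, and we set $\ket{\tilde{\psi}} := \alpha \ket{\psi}$.

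Finally I would verify normalization. Using properties (2) and (3) of the multicomplex scalar product, together with $\alpha^{\Lambda} = \alpha$ since $\alpha \in \mathbb{D}_n$, one gets
\[ (\ket{\tilde{\psi}}, \ket{\tilde{\psi}}) = (\alpha \ket{\psi}, \alpha \ket{\psi}) = \alpha^{\Lambda} \alpha\, (\ket{\psi}, \ket{\psi}) = \alpha^2 (\ket{\psi}, \ket{\psi}). \]
Plugging in the canonical expansions and using $\canBasis_k \canBasis_l = \delta_{kl}\canBasis_k$ together with $\sum_k \canBasis_k = 1$ from Proposition \ref{prop : canonical : properties} collapses the double sum to $\sum_k \canBasis_k = 1$, giving $(\ket{\tilde{\psi}},\ket{\tilde{\psi}}) = 1$. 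The only subtle point, which I expect to be the main obstacle, is justifying that the scalar product behaves correctly with a conjugate on the left factor (so one must check that pulling scalars out yields $\alpha^{\Lambda} \alpha$ rather than $\alpha^2$ directly) and that taking $\alpha \in \mathbb{D}_n$ is essential for $\alpha^{\Lambda}\alpha = \alpha^2$; everything else reduces to componentwise verification in the canonical idempotent basis.
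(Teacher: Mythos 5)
Your proposal is correct and follows essentially the same route as the paper: both identify the strictly positive real components of $(\ket{\psi},\ket{\psi})$ and multiply $\ket{\psi}$ by the multiperplex scalar $\sum_k a_k^{-1/2}\canBasis_k$. Your version merely spells out the verification step (via $\alpha^\Lambda\alpha = \alpha^2$ and the componentwise collapse) that the paper leaves implicit.
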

\begin{proof}
    For any $\ket{\psi} \in W$ not in the null cone, we have $(\ket{\psi}, \ket{\psi}) \in \mathbb{D}_n^+$ and this scalar product has strictly real positive components:
    \[ (\ket{\psi}, \ket{\psi}) = \sum_{k=1}^{2^{n-1}} a_k \,\canBasis_k, \qquad a_k > 0. \]
    Then the ket
    \[ \ket{\phi} := \big( \sum_{k=1}^{2^{n-1}} \frac{1}{\sqrt{a_k}} \canBasis_k \big) \ket{\psi} \]
    satisfies $(\ket{\phi}, \ket{\phi}) = 1$.
\end{proof}

\subsection{Spectral decomposition theorem}

\begin{definition}
    Let $A$ be a linear operator on $W$. Then the adjoint operator $A^*$ of $A$ is defined as an operator on $W$ satisfying the following equality.
    \[ (\ket{\psi}, A \ket{\phi}) := (A^* \ket{\psi}, \ket{\phi}), \qquad \forall \ket{\psi}, \ket{\phi} \in W. \]
\end{definition}
By the decomposition of $A$ in its components $A_{\hat{k}}$ (linear operators on $V$) under the canonical idempotent representation, the adjoint always exists, is unique, and his expression is given by the linear operator for which each component is the adjoint $A_{\hat{k}}^*$ of $A_{\hat{k}}$, i.e. for 
$$
A = \sum_{k=1}^{2^{n-1}} A_{\hat{k}} \canBasis_k \qquad \text{then} \qquad  A^* = \sum_{k=1}^{2^{n-1}} A_{\hat{k}}^* \canBasis_k.
$$
Thus the multicomplex operator satisfies the same basic properties of the adjoint on an usual complex vector space. Moreover,
\begin{equation}
    [P_j(A)]^* = P_j(A^*) \qquad \text{for} \qquad j=1,\ldots,2^{n-1}.
\end{equation}
Now let $\ket{\psi}, \ket{\phi} \in W$. We define the operator $\ket{\phi} \bra{\psi}$ so that its action on an arbitrary ket $\ket{\chi} \in W$ is given by
\begin{equation}
    (\ket{\phi} \bra{\psi}) \ket{\chi} := \ket{\phi} (\braket{\psi}{\chi}).
\end{equation}
From the generalized Riesz theorem, the action of $\bra{\psi}$ on a ket is a linear functional and always gives a scalar, which means that the operator $\ket{\phi} \bra{\psi}$ itself is linear. 

As in a standard Hilbert space, the identity operator can be written in terms of any orthonormal basis $\{ \ket{u_l} \}_{l=1}^m$ of $W$ :
\begin{equation}
    \sum_{l=1}^m \ket{u_l} \bra{u_l} = \mathrm{id}.
\end{equation}
Indeed, the actions of the left-hand side on any basis element $\ket{u_p}$ is given by
\begin{equation}
    \big( \sum_{l=1}^m \ket{u_l} \bra{u_l} \big) \ket{u_p} = \sum_{l=1}^m \ket{u_l} \braket{u_l}{u_p} = \sum_{l=1}^m \delta_{lp} \ket{u_l} = \ket{u_p}.
\end{equation}
\begin{definition}
    A multicomplex linear operator $A$ is called self-adjoint if $A^* = A$.
\end{definition}
Supposing that $A$ is a multicomplex self-adjoint operator, then
\begin{equation}
    A = A^* \Leftrightarrow \sum_{k=1}^{2^{n-1}} A_{\hat{k}} \canBasis_k = \sum_{k=1}^{2^{n-1}} A_{\hat{k}}^* \canBasis_k \Leftrightarrow A_{\hat{k}} = A_{\hat{k}}^* \quad \text{for}\quad k = 1,\ldots,2^{n-1}
\end{equation}
and for any projection $[P_j(A)]^* = P_j(A^*) = P_j(A)$. It implies that the projection of a self-adjoint operator on $W$ is itself a self-adjoint operator on $V$.
\begin{theorem}
    Let $A$ be a self-adjoint operator on $W$. Then the eigenvalues of $A$ associated to an eigenket not in the null cone are all in the set $\mathbb{D}_n$ of multicomplex numbers with real components.
\end{theorem}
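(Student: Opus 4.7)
The plan is to reduce the statement to the classical fact that a self-adjoint operator on a finite-dimensional complex Hilbert space has real eigenvalues, by projecting the eigenvalue equation componentwise in the canonical idempotent representation.

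First I would invoke the discussion preceding the theorem: writing $A = \sum_{k=1}^{2^{n-1}} A_{\hat{k}}\,\canBasis_k$, self-adjointness of $A$ on $W$ is equivalent to self-adjointness of each component operator $A_{\hat{k}}$ on the complex Hilbert space $\canBasis_k V$ (or equivalently on $V$), since $[P_j(A)]^* = P_j(A^*) = P_j(A)$. Next, I would start from the eigenvalue equation $A\ket{\psi} = \lambda\ket{\psi}$ with $\ket{\psi}$ not in the null cone, and apply the projector $P_j$; by equation (\ref{eq : linearOp : projectedEigenval}) this yields
\[
A_{\hat{\jmath}}\ket{\psi}_{\hat{\jmath}} = \lambda_{\hat{\jmath}}\,\ket{\psi}_{\hat{\jmath}}
\]
for every $j = 1,\ldots,2^{n-1}$. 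Because $\ket{\psi}$ is not in the null cone, $\ket{\psi}_{\hat{\jmath}} \neq 0$ for all $j$, so each $\lambda_{\hat{\jmath}}$ is a genuine eigenvalue of the self-adjoint operator $A_{\hat{\jmath}}$ on the complex Hilbert space $V$.

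Then I would apply the classical spectral result for self-adjoint operators on a complex Hilbert space to conclude that $\lambda_{\hat{\jmath}} \in \mathbb{R}$ for every $j$. Writing $\lambda$ in its canonical idempotent representation $\lambda = \sum_{k=1}^{2^{n-1}} \lambda_{\hat{k}}\,\canBasis_k$ and using the characterization of $\mathbb{D}_n$ as those multicomplex numbers whose canonical components are all real (equivalently $\lambda^\Lambda = \lambda$), we obtain $\lambda \in \mathbb{D}_n$.

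No step should be a genuine obstacle here: all the technical work has already been done in the preceding propositions (the relation between $A$ and its components, the projection of the eigenvalue equation, and the real-component characterization of $\mathbb{D}_n$). The only subtle point to emphasize carefully is that the hypothesis \emph{$\ket{\psi}$ is not in the null cone} is exactly what is needed to guarantee that every $\ket{\psi}_{\hat{\jmath}}$ is a nonzero eigenvector of $A_{\hat{\jmath}}$, rather than merely a solution of a trivial identity $0 = \lambda_{\hat{\jmath}}\cdot 0$; without this hypothesis $\lambda_{\hat{\jmath}}$ would be undetermined on the vanishing components and could carry an imaginary part.
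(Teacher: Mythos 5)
Your proof is correct and follows essentially the same route as the paper's: project the eigenvalue equation via $P_j$ using equation (\ref{eq : linearOp : projectedEigenval}), note that each $A_{\hat{\jmath}}$ is self-adjoint on $V$ and that the null-cone hypothesis guarantees $\ket{\psi}_{\hat{\jmath}} \neq 0$, and conclude that each $\lambda_{\hat{\jmath}}$ is real, hence $\lambda \in \mathbb{D}_n$. Your added emphasis on why the null-cone hypothesis is needed is a worthwhile clarification but does not change the argument.
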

\begin{proof}
    Let $\ket{\psi}$ be an eigenket of $A$ not in the null cone. From equation (\ref{eq : linearOp : projectedEigenval}),
    \[ A \ket{\psi} = \lambda \ket{\psi} \quad \Rightarrow\quad  A_{\hat{\jmath}} \ket{\psi}_{\hat{\jmath}} = \lambda_{\hat{\jmath}} \ket{\psi}_{\hat{\jmath}}, \qquad \ket{\psi}_{\hat{j}} \neq 0, \qquad j = 1,\ldots,2^{n-1}. \]
    If $A$ is self-adjoint then its projections are also self-adjoint and this implies that $\lambda_{\hat{\jmath}}$ is a real number for $j = 1,\ldots,2^{n-1}$.
\end{proof}
\begin{theorem}
    Two eigenkets of a multicomplex self-adjoint operator are orthogonal if the difference of the two associated eigenvalues is not in the null cone.
\end{theorem}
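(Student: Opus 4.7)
The plan is to adapt the classical Hilbert-space argument, using two facts that are already in place: the previous theorem tells us that the eigenvalues $\lambda_1,\lambda_2$ associated with eigenkets outside the null cone lie in $\mathbb{D}_n$, hence satisfy $\lambda_i^{\Lambda}=\lambda_i$; and the earlier characterization of zero divisors tells us that any multicomplex number not in the null cone is invertible.

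First I would set up the standard manipulation. Let $\ket{\psi_1},\ket{\psi_2}\in W$ be eigenkets with $A\ket{\psi_i}=\lambda_i\ket{\psi_i}$ and assume $\lambda_1-\lambda_2$ is not in the null cone. Using property 2 of the multicomplex scalar product I would compute
\[
(\ket{\psi_1},A\ket{\psi_2}) = \lambda_2(\ket{\psi_1},\ket{\psi_2}),
\]
and on the other hand, using self-adjointness together with properties 2 and 3,
\[
(\ket{\psi_1},A\ket{\psi_2}) = (A\ket{\psi_1},\ket{\psi_2}) = (\lambda_1\ket{\psi_1},\ket{\psi_2}) = (\ket{\psi_2},\lambda_1\ket{\psi_1})^{\Lambda}
= \lambda_1^{\Lambda}(\ket{\psi_1},\ket{\psi_2}).
\]
By the preceding theorem $\lambda_1\in\mathbb{D}_n$, so $\lambda_1^{\Lambda}=\lambda_1$, and subtracting gives
\[
(\lambda_2-\lambda_1)(\ket{\psi_1},\ket{\psi_2})=0.
\]

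The key step is then to invoke the structure of zero divisors in $\mathbb{M}_n$. Since $\lambda_1-\lambda_2$ is by hypothesis not in the null cone, all of its projections are nonzero, so by the explicit inverse formula noted after the projection discussion it is invertible in $\mathbb{M}_n$. Multiplying the displayed equation by $(\lambda_2-\lambda_1)^{-1}$ yields $(\ket{\psi_1},\ket{\psi_2})=0$, as desired.

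I expect the only subtlety is precisely this last passage: in a commutative ring with zero divisors one cannot cancel arbitrary factors, and a reader could worry that the product $(\lambda_2-\lambda_1)(\ket{\psi_1},\ket{\psi_2})$ vanishes only through a zero-divisor collapse. The hypothesis that $\lambda_1-\lambda_2$ lies outside the null cone is exactly what rules this out, and it is worth pointing out explicitly that this hypothesis is the correct multicomplex replacement of the usual $\lambda_1\neq\lambda_2$ condition from the complex case.
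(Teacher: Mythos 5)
Your proof is correct and follows essentially the same route as the paper: compare $(\ket{\psi_1},A\ket{\psi_2})$ computed two ways using self-adjointness and the $\Lambda$-symmetry of the scalar product, use $\lambda_1\in\mathbb{D}_n$ from the preceding theorem, and conclude from $(\lambda_2-\lambda_1)(\ket{\psi_1},\ket{\psi_2})=0$. Your explicit justification of the final cancellation via invertibility of elements outside the null cone is a welcome elaboration of a step the paper leaves implicit.
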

\begin{proof}
    Let $\ket{\psi}$ and $\ket{\phi}$ be two eigenkets of a self-adjoint operator $A$ on $W$ with associated eigenvalues $\lambda$ and $\lambda'$ respectively. Then
    \[ \begin{aligned}
        0 & = (\ket{\psi}, A \ket{\phi}) - (\ket{\phi}, A \ket{\psi})^{\Lambda}\\
        & = \lambda' (\ket{\psi}, \ket{\phi}) - \lambda^{\Lambda} (\ket{\phi}, \ket{\psi})^{\Lambda}\\
        & = (\lambda' - \lambda^{\Lambda}) (\ket{\psi}, \ket{\phi}).
    \end{aligned} \]
    Since $\lambda \in \mathbb{D}_n$, $\lambda^{\Lambda} = \lambda$ and if $\lambda' - \lambda$ is not in the null cone then $(\ket{\psi}, \ket{\phi}) = 0$.
\end{proof}
\begin{theorem}[Spectral decomposition] Let $W$ be a finite-dimensional free $\mathbb{M}_n$-module and let $A : W \rightarrow W$ be a multicomplex self-adjoint operator. It is always possible to find a set $\{ \ket{\psi_l} \}_{l=1}^m$ of eigenkets of $A$ that makes up an orthonormal basis of $W$. Moreover, $A$ can be expressed as
    \[ 
    A = \sum_{l=1}^m \lambda_l \ket{\psi_l} \bra{\psi_l},
    \]
    where $\lambda_l$ is the eigenvalue of $A$ associated with the eigenket $\ket{\psi_l}$.
\end{theorem}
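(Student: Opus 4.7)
The plan is to reduce the multicomplex spectral theorem to $2^{n-1}$ independent copies of the classical complex spectral theorem, one per idempotent component. First I would use the canonical idempotent decomposition of the operator, $A = \sum_{k=1}^{2^{n-1}} \canBasis_k A_{\hat{k}}$, with each $A_{\hat{k}}$ acting on $V$. Since $A$ is multicomplex self-adjoint, the projected identity $[P_j(A)]^* = P_j(A^*) = P_j(A)$ shown just above yields that every $A_{\hat{k}}:V\to V$ is an ordinary self-adjoint operator on the finite-dimensional complex Hilbert space $\big(V,(\cdot,\cdot)_{\hat{k}}\big)$.

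Next I would invoke the classical spectral theorem on each $A_{\hat{k}}$ to obtain an orthonormal basis $\{\ket{\psi_l^{(k)}}\}_{l=1}^m$ of $V$ (for $(\cdot,\cdot)_{\hat{k}}$) with $A_{\hat{k}}\ket{\psi_l^{(k)}} = \lambda_l^{(k)} \ket{\psi_l^{(k)}}$ and $\lambda_l^{(k)} \in \mathbb{R}$; any choice of labelling in each $k$ is acceptable since we just need some orthonormal eigenbasis. I would then glue these together along the idempotent decomposition by setting
\[
    \ket{\psi_l} := \sum_{k=1}^{2^{n-1}} \canBasis_k \ket{\psi_l^{(k)}}, \qquad \lambda_l := \sum_{k=1}^{2^{n-1}} \lambda_l^{(k)} \canBasis_k \in \mathbb{D}_n.
\]
Using the identity $A\ket{\psi} = \sum_k \canBasis_k A_{\hat{k}} \ket{\psi}_{\hat{k}}$ coming from (\ref{eq:PjA}) together with $\ket{\psi_l}_{\hat{k}} = \ket{\psi_l^{(k)}}$, a direct computation gives $A\ket{\psi_l} = \lambda_l \ket{\psi_l}$.

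For the orthonormality of $\{\ket{\psi_l}\}_{l=1}^m$, I would apply Theorem \ref{thm : hilbertSpaces : decompositionOfScalarProd} to get
\[
    (\ket{\psi_l},\ket{\psi_p}) = \sum_{k=1}^{2^{n-1}} (\ket{\psi_l^{(k)}}, \ket{\psi_p^{(k)}})_{\hat{k}}\, \canBasis_k = \sum_{k=1}^{2^{n-1}} \delta_{lp}\, \canBasis_k = \delta_{lp}.
\]
To see that these $m$ kets span $W$, I would take an arbitrary $\ket{\phi} = \sum_k \canBasis_k \ket{\phi}_{\hat{k}} \in W$, expand each $\ket{\phi}_{\hat{k}} = \sum_l c_l^{(k)} \ket{\psi_l^{(k)}}$ in the chosen orthonormal basis of $V$, and verify that $\ket{\phi} = \sum_l \alpha_l \ket{\psi_l}$ with $\alpha_l := \sum_k c_l^{(k)} \canBasis_k$; linear independence follows by applying each projector $P_j$ and invoking linear independence of $\{\ket{\psi_l^{(j)}}\}$ over $\mathbb{C}$.

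Finally the spectral identity $A = \sum_l \lambda_l \ket{\psi_l}\bra{\psi_l}$ is obtained by testing both sides on the orthonormal basis $\{\ket{\psi_p}\}$: the right-hand side sends $\ket{\psi_p}$ to $\sum_l \lambda_l \ket{\psi_l} \braket{\psi_l}{\psi_p} = \lambda_p \ket{\psi_p}$, which equals $A\ket{\psi_p}$, and multicomplex linearity extends the equality to all of $W$. The only place requiring care is the very first step — making sure that multicomplex self-adjointness really does split into componentwise complex self-adjointness — but this has essentially been recorded already in $[P_j(A)]^* = P_j(A^*)$, so no serious obstacle remains; once the classical spectral theorem has been applied on each idempotent slot, everything reassembles cleanly.
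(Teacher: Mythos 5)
Your proposal is correct and follows essentially the same route as the paper: decompose $A$ into its components $A_{\hat{k}}$, apply the classical spectral theorem on each complex Hilbert space $\bigl(V,(\cdot\,,\,\cdot)_{\hat{k}}\bigr)$, and reassemble the eigenkets and eigenvalues along the idempotent basis. Your write-up is in fact more detailed than the paper's (explicit verification of orthonormality via Theorem \ref{thm : hilbertSpaces : decompositionOfScalarProd}, spanning, and the operator identity tested on the basis, where the paper instead sums the componentwise spectral decompositions directly), but the underlying argument is identical.
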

\begin{proof}
    Any projection $P_j(A) = A_{\hat{\jmath}}$ is a self-adjoint linear operator on $V$. Applying the standard spectral decomposition theorem, for all $j = 1,\ldots,2^{n-1}$ there exists an orthonormal set $\{ \ket{\psi_l}_{\hat{\jmath}} \}_{l=1}^m$ of eigenkets of $A_{\hat{\jmath}}$ which is also a basis of $V$ with respect to the scalar product $(\cdot\,,\, \cdot)_{\hat{j}}$. For 
    \[ \ket{\psi_l} := \sum_{k=1}^{2^{n-1}} \ket{\psi_l}_{\hat{k}}\, \canBasis_k \]
    then the set $\{ \ket{\psi_l} \}_{l=1}^m$ satisfies the statement of this theorem and
    \[ \begin{aligned}
        A & = \sum_{k=1}^{2^{n-1}} A_{\hat{k}} \,\canBasis_k = \sum_{k=1}^{2^{n-1}} \Big( \sum_{l=1}^m \lambda_{l,\hat{k}} \ket{\psi_l}_{\hat{k}} \bra{\psi_l}_{\hat{k}} \Big) \canBasis_k\\
        & = \sum_{l=1}^m \sum_{k=1}^{2^{n-1}} P_k\big(\lambda_l \ket{\psi_l} \bra{\psi_l}\big) \canBasis_k = \sum_{l=1}^m \lambda_l \ket{\psi_l} \bra{\psi_l}
    \end{aligned} \]
    where $\lambda_{l,\hat{k}}$ represents the complex eigenvalue associated with the eigenket $\ket{\psi_l}_{\hat{k}}$.
\end{proof}

\section{Conclusion}

The multicomplex idempotent canonical basis seems to be the first step to a full understanding of the multicomplex number space. It is a natural basis to represent the principal ideals of this structure, but also a way to greatly simplify the algebra as both addition and multiplication become componentwise. The composition of conjugates $\Lambda_n = \dagger_1 \cdots \dagger_n$ has been introduced, it is an involution on the multicomplex numbers which plays a significant role in characterizing multiperplex numbers and also naturally defines a multiperplex-valued norm on $\mathbb{M}_n$. The multicomplex version of modules and Hilbert spaces share the same general properties of their complex counterparts, with some differences or more specific cases due to the presence of zero divisors. Following this work, we expect several new results where multicomplex algebra can be applied in fundamental or applied mathematics as well as in physics.

\bibliographystyle{plain}
\bibliography{refs}

\vspace*{1cm}

\noindent $^1$D\'EPARTEMENT DE MATHÉMATIQUES ET D'INFORMATIQUE, \\ UNIVERSITÉ DU QUÉBEC, TROIS-RIVI\`ERES, QC, CANADA \\
{\em Email address}: \texttt{derek.courchesne@uqtr.ca} \\
\\
\noindent $^2$D\'EPARTEMENT DE MATHÉMATIQUES ET D'INFORMATIQUE, \\ UNIVERSITÉ DU QUÉBEC, TROIS-RIVI\`ERES, QC, CANADA \\
{\em Email address}: \texttt{sebastien.tremblay@uqtr.ca}\\

\end{document}